\newcommand{\xmark}{\ding{55}}%
\let\cline\cmidrule
\definecolor{myorange}{RGB}{225, 111, 35}
\definecolor{myyellow}{RGB}{255, 178, 46}
\definecolor{mygreen}{RGB}{51, 117, 15}
\definecolor{myblue}{RGB}{53, 104, 222}
\theoremstyle{thmstyleone}%
\newtheorem{theorem}{Theorem}
\newtheorem{proposition}{Proposition}
\newtheorem*{proposition*}{Proposition *}
\newtheorem{lemma}[]{Lemma}
\newtheorem{corollary}[]{Corollary}
\theoremstyle{thmstyletwo}%
\newtheorem{example}{Example}%
\newtheorem{remark}{Remark}%
\theoremstyle{thmstylethree}%
\newtheorem{definition}{Definition}%
\newtheorem*{definition*}{Definition 1*}%
\begin{document}

\title[Article Title]{On the Graph Theory of Majority Illusions: Theoretical Results and Computational Experiments}


\author*[1]{\fnm{Maaike} \sur{Venema-Los}}\email{m.d.los@rug.nl}

\author[1]{\fnm{Zo\'{e}} \sur{Christoff}}\email{z.l.christoff@rug.nl}

\author[1,2]{\fnm{Davide} \sur{Grossi}}\email{d.grossi@rug.nl}




\affil[1]{\orgname{University of Groningen}, \orgaddress{\country{The Netherlands}}}
\affil[2]{\orgname{University of Amsterdam}, \orgaddress{\country{The Netherlands}}}


\abstract{The popularity of an opinion in one's direct circles is not necessarily a good indicator of its popularity in one's entire community. 
Network structures make local information about global properties of the group potentially
inaccurate,
and the way a social network is wired 
constrains 
what kind of information distortion can actually occur.
In this paper, we discuss which classes of networks allow for a large enough proportion of the population to get a wrong enough impression about the overall distribution of opinions. 
We start by focusing on the `majority illusion', the case where one sees a majority opinion in one's direct circles that differs from the global majority. We show that 
no network structure can guarantee that most agents see the correct majority.  
We then  perform computational experiments to study the likelihood of majority illusions in different classes of networks. Finally, we generalize to other types of illusions.}

\keywords{majority illusion, social networks, graph colorings.}



\maketitle

\section{Introduction}\label{sec1}

When forming opinions or making decisions, people often use as information 
the choices of others
in their circles. For example, if many people around you favor the same brand, vote for the same political party, or agree on the same opinion, you are more likely to buy, vote, or think the same \cite{lazarsfeld1954,McPherson2001, Lerman2016, Schelling1973, Granovetter1978, Salganik2006, Centola2010, Young2011, Centola2015}.
Therefore, the \emph{proportion} of people in one's circles
sharing an opinion plays a substantial role: the more people around you holding one opinion, the more likely it is that you will adopt that opinion, a mechanism  captured by well-known network diffusion models \cite{Degroot1974, Harary1959, Friedkin_Johnsen_2011} inspired by epidemics  models \cite{Bettencourt2006, Allen1994}. 
Beyond direct social influence, the popularity of an opinion in one's circles might also be taken as the main indicator of its global popularity: if you witness many people with the same opinion around you, you might get the impression that that opinion is also popular in the entire population.

Sometimes, however, the local information one gets from their neighborhood gives the wrong impression of the global state of the network. 
Consider, for example, the network in Figure \ref{fig:maj-ill-example}, where the nodes represent agents and red and blue colors represent the agents' binary 
opinions on a given issue.
\begin{figure}[t]
    \centering
    \includegraphics[width=0.3\textwidth]{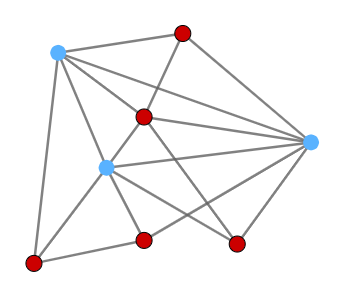}
    \caption{
    Example of a network where all red nodes are under majority illusion: they have more blue neighbors than red ones, while red is the actual global `winner'. 
    }
    \label{fig:maj-ill-example}
\end{figure}
While there are more \emph{red} nodes in the entire network, all red nodes have a majority of \emph{blue} nodes among their direct neighbors. In other words, the neighborhood of the red nodes gives them the wrong impression of what is the most popular opinion. Such a distortion 
is called \emph{majority illusion} \cite{Lerman2016}.
A node is under majority illusion whenever it locally sees a  dominant opinion that differs from the global dominant opinion.

Majority illusions can have important consequences on opinion dynamics. 
One example is in political decision making:
people might change their votes depending on their impression on who is the winning party. On the one hand, people are often more inclined to vote for a party they expect to win, which is known as the \emph{polling bandwagon effect} \cite{SchmittBeck2015Bandwagon}. This is typically explained either by a direct preference to belong to the winning team or by the social inference from a party being popular to there being good reasons for its popularity. 
On the other hand, in some cases, people are more inclined to vote for a party they expect \emph{not} to win, because then their vote is more likely to make a difference, which is known as the \emph{underdog effect} \cite{Auletta2023}.
In both cases, a party can gain votes simply by giving people the \emph{impression} that they are winning/losing. 
Therefore, in principle, one could influence opinions and votes by changing the network structure to tweak the distribution of opinions that agents see locally. 
Such manipulations,  referred to as  `information gerrymandering', can, according to \cite{Bergstrom2019, stewart2019-long}, lead to undemocratic decisions.

The classic \emph{friendship paradox} \cite{Feld1991} is a related example of a situation  
where the neighborhood gives a wrong impression of the global state of the population: 
agents in a network are likely to get the impression that their popularity is lower than average because their friends have more friends than themselves.
A related case study is that of the illusion of popularity of heavy alcohol consumption on university campuses. While this illusion is often explained by pluralistic ignorance \cite{Baer1991, Prentice1993}, it is also explained by differences in students' popularity in \cite{Lerman2016}: If heavy drinkers have more friends, this may cause many students to overestimate their overall friends' alcohol consumption. Indeed, they may observe that most of their friends drink a lot even though in reality the global proportion of heavy drinkers may be small. This, in turn, can make norms of excessive alcohol use that are actually unpopular last for generations of students. 

\medskip

In summary, in political decision making as well as in everyday life, distortions occurring between the distribution of opinions in one's direct circles and their global distribution may have important societal impact.
However, such 
distortions cannot always occur. Whether they are possible or not, and under which conditions, is the question we focus on in this paper. 
We will address this question by solely looking at the structural properties of networks and study
which kind of illusions are possible in which 
classes of graphs.

\paragraph{
Related work}
The concept of majority illusion was first studied in \cite{Lerman2016}, where the authors used computational simulations to find 
how many nodes are typically under majority illusion
in scale-free, Erd\H{o}s-R\'enyi networks \cite{ErdosRenyi}, and in several real-world networks, and show that, unsurprisingly,
nodes in
networks in which high-degree nodes tend to connect to low-degree nodes are most susceptible to this illusion. 
Majority illusions were studied next in \cite{stewart2019-long}, where a voter game is modelled with two competing parties to show that majority illusions can be used for 
\emph{information gerrymandering}: modifying the network structure for the purpose of gaining votes in an election, reminiscent of district gerrymandering \cite{mw:gerrymandering, Schuck1987}. 
The authors predict this by a mathematical 
simulation of the voter game, and confirm their results with a social network experiment with human participants. They find that information gerrymandering can even take place when all agents have the same 
number of connections.
A more recent article \cite{Grandi2022} 
studies the computational complexity of majority illusions. A $q$-majority illusion is defined there as at least a $q$ fraction of agents being under majority illusion, and it is shown that the problem of deciding whether 
a given network can be colored in such a way that it 
is in 
a $q$-majority illusion is NP-complete for $q>\frac{1}{2}$. Whether it also holds for $q\leq \frac{1}{2}$ was left as an open question. Since majority illusions may have detrimental effects and are generally regarded as undesirable, the authors of \cite{Grandi2022} also study how to remedy an illusion from a network by adding or deleting edges. The problem of identifying whether it is possible to change the network in such a way that the number of agents with a $q$-majority illusion is below a given bound is also shown to be NP-complete for $q>\frac{1}{2}$.

\paragraph{
Contribution and outline}
The existing literature on majority illusions does not tell us much about how far social networks can distort local information. On the one hand, in  \cite{Lerman2016}, a relation was found between the network structure and the \emph{probability} of majority illusions. On the other hand,  \cite{Grandi2022} shows that to decide whether, for a given network, it is possible for a certain proportion of people to be under majority illusion is a hard computational problem. The question that stems from both these lines of work is then whether there exist salient properties of the network structure that determine whether or not it is \emph{possible} for a 
certain proportion of nodes to be under majority illusion. In other words, what graphs make such illusions possible or impossible? 
We study this question in the {\em first part}
of the paper.
We consider 
different classes of graphs, for different reasons. 
For each of the classes, we 
investigate
whether it is possible for a \emph{majority of nodes} to be under \emph{majority illusion}.
The reason to focus specifically on when  
a \emph{majority} 
of nodes are under majority illusion is twofold. First, if most people are under illusion, then collectively the group is mostly wrong in the sense that any reasonable way to aggregate their guesses would result in an incorrect collective guess. Second, 
in models where people influence each other by majority dynamics (a basic model of opinion diffusion), if $\frac{1}{2}$ of the nodes is under majority illusion, the group's majority opinion will be inverted. For these reasons, it is a natural focus (see also  \cite{Grandi2022}).
 In Section \ref{sec:arbitrary-networks}, we prove that a weak version of the majority illusion can occur on \emph{all} network structures. 
 In Section \ref{sec:specific-networks}, we provide some stronger results for some specific classes of networks
 .
Given the complexity of the question, we first explore some classes of graphs that are simple enough to be analytically tractable, even though they are not likely to occur in real social networks.
 We find that on 2-colorable graphs with an odd number of nodes or with a node of which all neighbors have degree greater than 2, it is possible that a (weak) majority of the nodes is under majority illusion (Propositions \ref{prop:2-col-maj-maj} and \ref{prop:2col-strict}).
Next, we consider regular graphs, for which the existing literature \cite{Centola2022} suggests 
 that illusions are less likely to occur, which makes sense intuitively because agents have less centralized information. 
 However, we find that stricter majority illusions 
 can still occur on some but not all regular graphs.  
  On the one hand, we find that with certain given relative size and degree it is possible to construct a regular network in which a majority of the nodes is under majority illusion (Theorem \ref{thm:regular-maj-maj}). On the other hand, stricter illusions cannot occur when the degree of the nodes is 2 (a collection of cycles) (Proposition \ref{prop:k=2}), or $n-1$ (a complete graph) (Proposition \ref{prop:complete-weak-maj-maj}). Our main analytical results are summarized in Table \ref{tab:results} at the end of the paper.
In Section \ref{sec:simulations}, we use computational simulations to study the likelihood and strength of majority illusions
on graphs that have properties that are characteristic of real-world social networks
.
The structure of the first part of the paper is illustrated in Figure \ref{fig:overview_tree}.

\medskip

Although majority illusions are the 
only form 
discussed in the literature, other forms of illusion are worth studying. After all, the majority opinion around us is not the only thing that can influence us. We could say that we are influenced by a dominant opinion in our surroundings, but 
`dominant' can be defined in different ways%
, of which `majority' is only one. Therefore,
in the {\em second part} of this article (Section~\ref{sec:generalizations}) we generalize the notion of majority illusion to cover illusions induced by different ways of aggregating neighbors' opinions, namely by quota-rules and the plurality rule.
We establish several results on which networks allow for which kinds of illusions.

\medskip

The paper is an improved and extended version of \cite{Venema-Los2023}, which appeared in the proceedings of EUMAS'23.

\begin{figure}[t]
\centering
\resizebox{1\textwidth}{!}{%
\begin{circuitikz}
\tikzstyle{every node}=[font=\LARGE]

\draw [ color={myblue} , fill={myblue}] (19.5,13.5) rectangle node {\Large \textcolor{white}{all grahps}} (22.75,11.75);
\draw [ color={myblue} , fill={white}] (20,13.5) ellipse (1.5cm and 0.5cm)  node {\large{Section \ref{sec:arbitrary-networks}}};
\draw [ color={myblue} , fill={white}] (23.5,11.25) rectangle node {\large \textcolor{black}{Theorem \ref{thm:maj-weak-maj}}} (20.5,12.25);

\draw [short] (21,11.25) -- (21,10.5);
\draw [short] (7,7.75) -- (7,6);
\draw [short] (17.25,7.75) -- (17.25,6);
\draw [short] (26.25,7.75) -- (26.25,6);

\draw [ color={myorange} , fill={myorange}] (5.75,9.5) rectangle node {\Large \textcolor{white}{limit cases}} (9,7.75);
\draw [ color={myorange}, fill = {white} ] (6,9.5) ellipse (1.5cm and 0.5cm)  node {\large{Section \ref{sec:specific-networks}}};

\draw [ color={myorange} , fill={myorange}] (3,5.5) rectangle  node {\Large \textcolor{white}{odd degree}}(6.25,3.75);
\draw [ color={myorange} , fill={white}] (7,3.25) rectangle node {\large \textcolor{black}{Proposition \ref{thm:degree-odd}}} (4.5,4.25);

\draw [ color={myorange} , fill={myorange}] (7.25,5.5) rectangle  node {\Large \textcolor{white}{2-colorable}}(10.5,3.75);
\draw [ color={myorange} , fill={white}] (11,3.25) rectangle  node {\large \textcolor{black}{Lemma \ref{lem:2-col}}}(8.5,4.25);

\draw [short] (7.75,10.5) -- (33,10.5);
\draw [short] (7.75,10.5) -- (7.75,9.5);
\draw [short] (5,5.5) -- (5,6);
\draw [short] (5,6) -- (9,6);
\draw [short] (9,6) -- (9,5.5);

\draw [ color={myyellow} , fill={myyellow}] (15.75,9.5) rectangle  node {\Large \textcolor{white}{regular}}(19,7.75);
\draw [ color={myyellow} , fill={white}] (16.25,9.5) ellipse (1.5cm and 0.5cm) node {\large{Section \ref{sec:specific-networks}}};
\draw [ color={myyellow} , fill={myyellow}] (13.25,5.5) rectangle  node {\Large \textcolor{white}{2-regular}}(16.5,3.75);
\draw [ color={myyellow} , fill={white}] (17.25,3.25) rectangle node {\large \textcolor{black}{Proposition \ref{prop:k=2}}} (14.75,4.25);
\draw [ color={myyellow} , fill={myyellow}] (17.5,5.5) rectangle  node {\Large \textcolor{white}{complete}} (20.75,3.75);
\draw [ color={myyellow} , fill={white}] (21.25,3.25) rectangle  node {\large \textcolor{black}{Prop. \ref{prop:complete-weak-maj-maj} and \ref{prop:complete-maj-weak-maj}}}(18.75,4.25);

\draw [short] (15.25,5.5) -- (15.25,6);
\draw [short] (15.25,6) -- (19.25,6);
\draw [short] (19.25,6) -- (19.25,5.5);

\draw [short] (18,10.5) -- (18,9.5);
\draw [ color={mygreen} , fill={mygreen}] (24.75,9.5) rectangle node {\Large \textcolor{white}{random}}(28,7.75);
\draw [ color={mygreen} , fill={white}] (25.25,9.5) ellipse (1.5cm and 0.5cm)node {\large{Section \ref{sec:simulations}}};
\draw [short] (27,10.5) -- (27,9.5);
\draw [ color={mygreen} , fill={mygreen}] (22.5,5.5) rectangle  node {\Large \textcolor{white}{Erd\H{o}s-R\'{e}nyi}}(25.75,3.75);
\draw [ color={mygreen} , fill={mygreen}] (26.75,5.5) rectangle  node {\Large \textcolor{white}{Holme-Kim}}(30,3.75);

\draw [short] (24.5,5.5) -- (24.5,6);
\draw [short] (24.5,6) -- (28.5,6);
\draw [short] (28.5,6) -- (28.5,5.5);

\draw [ color={mygreen} , fill={mygreen}] (30.75,9.5) rectangle  node {\Large \textcolor{white}{real world}} (34,7.75);
\draw [ color={mygreen} , fill={white}] (31.25,9.5) ellipse  (1.5cm and 0.5cm) node {\large{Section \ref{sec:simulations}}};
\draw [short] (33,10.5) -- (33,9.5);
\draw [ color={mygreen} , fill={mygreen}] (31.75,5.5) rectangle  node {\Large \textcolor{white}{Facebook}}(35,3.75);
\draw [short] (33,7.75) -- (33,5.5);
\draw [decorate,decoration={brace,amplitude=10,mirror,raise=4ex}]
  (3,3) -- (21,3) node[midway,yshift=-5em]{Theoretical analysis};
\draw [decorate,decoration={brace,amplitude=10,mirror,raise=4ex}]
  (22.5,3) -- (35,3) node[midway,yshift=-5em]{Experimental analysis};
\end{circuitikz}
}%
\caption{Overview of the first part of the paper and its main results.}
\label{fig:overview_tree}
\end{figure}

\section{Preliminaries} \label{sec:prelim}

 A social network (a simple graph) $G= \langle V,E\rangle$  consists of a finite set $V$ of
 agents (nodes/vertices), and a set $E$ of undirected and irreflexive
 edges between agents. 
We do not assume that the graph is connected. 
We call `neighbors' two agents connected by an edge. 
We write $N_i$ for the set of neighbors of $i$ and $d_i$ for its degree $|N_i|$.
We consider a single issue and assume that each agent holds an 
opinion on this issue. 
Since 
opinion distributions can be seen as graph 
colorings, we borrow the terminology of vertex colorings, and use the terms `color' and `opinion' interchangeably. 
We write $\mathcal{C}$ for the set of possible colors, $c_i\in \mathcal{C}$ to refer to agent $i$'s color and $c$ for the coloring of the graph ($c: V\to \mathcal{C}$), that is, the distribution of opinions. A colored graph is a triple $C=\langle V,E,c\rangle$. Thoughout the paper, the term `colored graph' refers to such colored graphs with only two colors ($\mathcal{C}=\{red, blue\}$), except 
for 
Section \ref{sec:generalizations}, where we will  generalize to more than two colors (opinions) being represented. 
Note that, since most of our results in Sections \ref{sec:arbitrary-networks} and \ref{sec:specific-networks} concern the \emph{existence} of a specific type of 2-coloring of the graph, they do immediately hold for $k$-colorings for any $k\geq2$.

In such opinion networks (or colored graphs), 
three types of information can be distinguished. First, every node has an individual opinion. Second, every node has a majority opinion in its neighborhood, the \emph{local majority opinion}. And third, there is the \emph{global} majority opinion, the majority opinion in the entire network. 
Any two of these three types of opinions can be in agreement or not. 
We systematize and illustrate all possible relations between the above types of opinions 
in Table \ref{tab:options}. 
Different fields have been studying disagreement between the different types of information mentioned above.
On the one hand, in social network science and social choice theory,
an agent is under \emph{majority illusion} when the majority of its neighbors disagrees with the global majority \cite{Grandi2022,Lerman2016}. 
In contrast, graph theory has concerned itself with the disagreement between a node's color and the color of its neighbors: a proper coloring requires that no two adjacent nodes are of the same color, that is, that everybody disagrees with all of their neighbors.
A generalisation of that concept is that of majority coloring \cite{Anholcer2020,Bosek2019,Kreutzer2017}, where no agent is of the same color as most of its neighbors.
We call the local disagreement faced by an agent in a majority coloring 
\textit{majority opposition}.
In such a situation, one might get the impression that they belong to a global minority. 
For instance, in Figure \ref{fig:maj-ill-example} 
all nodes are under majority opposition so 
all nodes might have the impression of belonging to a global minority,
while it is only true for the blue ones.
When \emph{all} agents are under the impression that they belong to the minority, then some of them must be mistaken,  
it must be some sort of illusion. 
 Clearly, the concepts of majority illusion and majority opposition are related. In the first part of this paper (Section \ref{sec:arbitrary-networks}),
we explore this very relation 
to obtain results about majority illusions.

\begin{table}[htp]
    \centering
        \caption{Possible combinations of local and global majority winners, and presence or absence of majority opposition and majority illusion. We assume w.l.o.g. that the color of the relevant individual (highlighted in the exemplary illustrations) is red, otherwise just swap `red' and `blue' everywhere. \xmark~ indicates absence of the opposition/illusion, $\checkmark$ indicates presence of the opposition/illusion, `weak' indicates the presence of a weak-majority opposition or a weak-majority illusion. References are included to situate the related fields of research.
        }
        \vspace{0.3cm}
    \begin{tabular}{|c|c|c|c|c|}
    \hline
         & \textbf{local majority} & majority opposition & \textbf{global majority} & majority illusion  \\ \hline
        \includegraphics[width = 25pt]{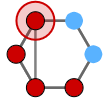}& \textcolor{red}{red} & \xmark   & \textcolor{red}{red} & \xmark\\ \hline
         \includegraphics[width = 25pt]{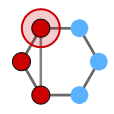} & \textcolor{red}{red} & \xmark  & tie & weak \\ \hline
         \includegraphics[width = 25pt]{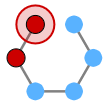} & \textcolor{red}{red} & \xmark  & \textcolor{blue}{blue} & $\checkmark$  \cite{Lerman2016,Grandi2022} \\ \hline
         \includegraphics[width = 25pt]{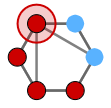}  & tie  & weak \cite{Anholcer2020,Bosek2019,Kreutzer2017} & \textcolor{red}{red} &  weak \\ \hline
         \includegraphics[width = 25pt]{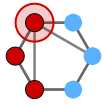} & tie & weak  \cite{Anholcer2020,Bosek2019,Kreutzer2017}  & tie &  \xmark \\ \hline
         \includegraphics[width = 25pt]{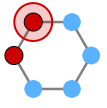}  & tie& weak  \cite{Anholcer2020,Bosek2019,Kreutzer2017}  & \textcolor{blue}{blue} &   weak\\ \hline
          \includegraphics[width = 25pt]{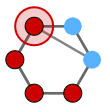}  & \textcolor{blue}{blue} &  $\checkmark$ & \textcolor{red}{red} &    $\checkmark$ \cite{Lerman2016,Grandi2022}   \\ \hline
         \includegraphics[width = 25pt]{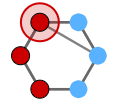}  & \textcolor{blue}{blue} & $\checkmark$ & tie &    weak  \\ \hline
         \includegraphics[width = 25pt]{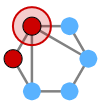} & \textcolor{blue}{blue} & $\checkmark$  & \textcolor{blue}{blue} &    \xmark  \\ \hline
    \end{tabular}
    \label{tab:options}
\end{table}

We will now define 
the notions described informally above.
We start by introducing some auxiliary terminology to be able to talk about which 
opinion
is prevalent in a network, be it locally or globally. 
Given a set $S$ of agents such that $|S|=n$ and 
a coloring $c$, a color $x$ is a \textit{majority winner} of $S$ (we write $M_S=x$) if $|\{i\in S\mid c_i=x\}|
\geq\frac{n}{2}$. When 
two colors are a majority winner among $S$ (there is a tie), we will write $M_S=tie$.
We say that an agent is under majority \emph{illusion} if both the agent's neighborhood and the entire network have 
only one majority winner (no tie) but they are different. This definition is equivalent to that in \cite{Grandi2022}.

\begin{definition}[majority illusion]\label{def:maj-ill-strict}
Given 
a colored graph $C = \langle V, E,c\rangle $, an agent $i\in V$ is under \emph{majority illusion}   (\texttt{m} illusion)
if $M_{N_i}\not = \text{tie}$  and $M_{V}\not = \text{tie}$ and $M_{N_i} \not = M_{V}$.  A graph is in a  \emph{Majority-majority illusion}  (\texttt{Mm} illusion) if more than half of the agents are under majority illusion. 
\end{definition}
For instance, the situation depicted by Figure \ref{fig:maj-ill-example} is an example of Majority-majority illusion. 
As in Definition \ref{def:maj-ill-strict}, we will use throughout the paper a capital letter to distinguish the global majority from the local majority, both in the full names and in the abbreviations of the illusions. 

We can generalize this strict definition to weaker cases. First, there exists a weaker type of disagreement between local and global majorities: the cases where exactly one of the two is a tie. 
Second, the majority of agents under an illusion can also be weak,  
when exactly half of the agents are under illusion.
The corresponding generalisations of majority illusion includes both types of weakening: 

\begin{definition}[weak versions of majority illusion]\label{def:maj-ill-weak}
Given 
a colored graph $C = \langle V, E,c\rangle $, agent $i\in V$ is under \emph{weak-majority illusion}  (\texttt{w} illusion) if $M_{N_i} \not = M_V$.
A graph is in a \emph{Majority-weak-majority illusion}  (\texttt{Mw} illusion) if more than half of the agents are under \texttt{w} illusion. A graph 
is in a  \emph{Weak-Majority-(weak-)majority illusion}  (\texttt{Wm / Ww} illusion) if at least half of the agents are under \texttt{m/w} illusion.
\end{definition}
To prevent confusion between the two different types of majorities involved in the illusions we study, 
we give in Table \ref{tab:naming_overview} an overview of the names we will use, with the corresponding abbreviations and definitions.
\begin{table}[t]
    \centering
    \caption{Overview of the various types of majority illusion with their respective abbreviations.}
    \label{tab:naming_overview}
    \begin{tabular}{|l|l|l|}
    \hline
        \textbf{name} & \textbf{abbr.}& \textbf{meaning} \\ \hline
        \textcolor{teal}{majority} illusion 
         &\textcolor{teal}{\texttt{m}} illusion& The local majority is different from the global majority \\
         &  & and they are both not a tie. (Definition \ref{def:maj-ill-strict})\\ \hline
       \textcolor{teal}{weak-majority} illusion 
        &\textcolor{teal}{\texttt{w}} illusion& The local majority is different from the global majority \\
         & & but one of them may be a tie. (Definition \ref{def:maj-ill-weak})\\ \hline
       \textcolor{orange}{Majority}-\textcolor{teal}{majority} illusion  &\textcolor{orange}{\texttt{M}}\textcolor{teal}{\texttt{m}} illusion&  \textit{more than }\textcolor{orange}{half of the nodes} are under \textcolor{teal}{majority} i. (Def. \ref{def:maj-ill-strict}) \\ \hline
       \textcolor{orange}{Majority}-\textcolor{teal}{weak-majority} i.  &\textcolor{orange}{\texttt{M}}\textcolor{teal}{\texttt{w}} illusion&  \textit{more than }\textcolor{orange}{half of the nodes} are under \textcolor{teal}{weak-majority} i. (Def. \ref{def:maj-ill-weak}) \\ \hline
       \textcolor{orange}{Weak-Majority}-\textcolor{teal}{majority} i.  &\textcolor{orange}{\texttt{W}}\textcolor{teal}{\texttt{m}} illusion&  \textit{at least }\textcolor{orange}{half of the nodes} are under \textcolor{teal}{majority} illusion. (Def. \ref{def:maj-ill-weak}) \\ \hline
       \textcolor{orange}{Weak-Majority}-\textcolor{teal}{weak-maj.} i.&\textcolor{orange}{\texttt{W}}\textcolor{teal}{\texttt{w}} illusion&  \textit{at least }\textcolor{orange}{half of the nodes} are under \textcolor{teal}{weak-majority} i. (Def. \ref{def:maj-ill-weak}) \\ \hline
    \end{tabular}
\end{table}

\medskip

Similarly, the notion of majority opposition comes in  strict and weak versions:
\begin{definition}[(weak) majority opposition]\label{def:maj-opp}
    Given a colored graph $C = \langle V, E,c\rangle $, agent $i\in V$ is under \emph{majority opposition} if $c_i \not = M_{N_i}$ and $M_{N_i}\not = tie$. Agent $i\in V$ is under \emph{weak majority opposition} if $c_i \not = M_{N_i}$.
\end{definition}

As indicated in Table~\ref{tab:options}, while it is the strict version of the majority illusion
that is studied by \cite{Lerman2016}\footnote{They use the strict version throughout the paper, except for Figure 1 where the illusion can be weak.}, and by \cite{Grandi2022}, it is the weak version of majority opposition
that is studied by \cite{Anholcer2020,Bosek2019,Kreutzer2017} in graph theory. 
As far as we know, neither the strict majority opposition nor the weak majority illusions have  been studied before.
Furthermore, note that, in the same network, different agents can be under a \texttt{w} illusion with respect to different opinions, since it is possible that exactly half of the nodes in the network are of one color and half of the nodes of 
another color. 

\medskip

Before proceeding, we introduce some extra terminology.
When an agent 
is under \texttt{m} illusion and all its 
neighbors all have the same color, we say that that agent is under \emph{unanimity illusion} (\texttt{u} illusion).
 Similarly, when all agents are under a \texttt{m/w} illusion, we will call it a \emph{Unanimous-(weak-)majority
 illusion} (\texttt{Um / Uw} illusion).
We say that \emph{an illusion is possible} in a graph $G=\langle V, E \rangle$ if there exists a coloring $c$ such that $C=\langle V, E, c\rangle$ witnesses the respective illusion.


\begin{remark}[(Ir)reflexivity]
In all existing literature about majority illusions so far \cite{Lerman2016, Grandi2022}, as well as in \cite{stewart2019-long} which is about information gerrymandering, 
social networks are taken to be  \emph{irreflexive}. Although one might argue that agents  know their own opinion, the reasoning here is mainly about how agents are influenced by others rather than by themselves. In influence models,  
there are various ways to deal with an agent's influence on themselves. In some models, agents are simply counted as  one of their neighbors \cite{Zollman2014, Degroot1974}. In others, they are not taken into account for their own influence at all \cite[Chapter 9]{Granovetter1978, Jackson2010}. And in yet other models, the network is defined as irreflexive but agents are still assumed to have a special type of influence on themselves \cite{Zhen2011, Seligman2014}. Since it is hard to justify one of these choices without knowing exactly the application of the model (and even then), we align here with the choice of the most relevant existing literature  \cite{stewart2019-long, Lerman2016, Grandi2022}, and consider networks that are irreflexive. 
Nevertheless, in Section~\ref{sec:reflexive}, we consider generalizations of majority illusions to networks that are not necessarily irreflexive.
\end{remark}

\section{Illusions in Arbitrary Networks}\label{sec:arbitrary-networks}
Our overall goal is to discover which social networks allow for 
illusions to occur. 
Since this is equivalent to asking which graphs can be colored in some  specific way,  
we build on existing results from vertex colorings to obtain results about majority 
illusions.
Recall that, in graph theory, a 
coloring is called \emph{proper} when no two neighbors are assigned the same color. 
The weaker notion of majority coloring \cite{Anholcer2020,Bosek2019,Kreutzer2017} 
is immediately relevant to us. In a majority coloring, \emph{all} vertices are in what we described as 
weak 
majority opposition: at least half of its neighbors are of a different color than its own. 
For coherence with the rest of the paper, we call this a \emph{weak} majority coloring here:
\begin{definition}[weak majority $2$-coloring]\label{def:weak_maj_2-col}
A 
weak
majority $2$-coloring of a graph $G=\langle V,E\rangle$ is a $2$-coloring $c$ 
such that, for each $i\in V:M_{N_i}\neq c(i)$. 
\end{definition}

A graph is called weak majority
$2$-colorable if there exists a weak majority 
$2$-coloring of it. 
Given a colored graph, we call \emph{monochromatic} the edges between nodes of the same color, and \emph{dichromatic} the ones between nodes of different colors.

The main result involving majority colorings is credited to  \cite{Lovász1966}  in the literature \cite{Anholcer2020,Bosek2019}: every graph is weak majority $2$-colorable\footnote{In a parallel line of research in combinatorics, majority colorings are called \emph{unfriendly partitions} \cite{Aharoni1990}, and shown to exist on all finite graphs \cite{Aharoni1990, Bernardi1987} but not on all infinite graphs \cite{Shelah1990}.}. 
The proof strategy for this result is commonly described as easy and relying on a simple `color swapping mechanism' that can only reduce the total number of monochromatic edges in the network.
However,  \cite{Lovász1966} itself focuses on multigraphs and is of a much wider scope. Therefore, to make the paper self-contained, 
we provide both a proof of the general result in Appendix \ref{app:proof-lovasz} and below the proof of a related lemma, Lemma \ref{lem:min-monochromatic}, which will be crucial to our main result, Theorem \ref{thm:maj-weak-maj}.

\begin{lemma}\label{lem:min-monochromatic}
Let $G=\langle V, E\rangle$ be a graph, and let $c$ be a $2$-coloring of $G$ that minimizes the number of monochromatic edges. Then, $c$ is a weak majority $2$-coloring of $G$. \end{lemma}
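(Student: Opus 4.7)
The plan is a standard local-swap argument by contradiction, exploiting the fact that flipping a single vertex's color only affects the monochromatic count on the edges incident to that vertex.

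First, I would assume for contradiction that $c$ minimizes the number of monochromatic edges but fails to be a weak majority 2-coloring. Unpacking the definition, this means there exists some vertex $i \in V$ with $M_{N_i} = c(i)$, i.e., strictly more than $d_i/2$ of $i$'s neighbors share $i$'s color. Call this number $k$, so $k > d_i/2$, and note that the remaining $d_i - k$ neighbors have the opposite color.

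Next, I would construct a new 2-coloring $c'$ by flipping only $i$'s color and leaving every other vertex's color unchanged. Edges not incident to $i$ contribute the same count of monochromatic edges under $c$ and $c'$, so it suffices to compare the two colorings on the $d_i$ edges incident to $i$. Under $c$, exactly $k$ of those edges are monochromatic; under $c'$, exactly $d_i - k$ of them are (the roles swap). Hence the total number of monochromatic edges changes by $(d_i - k) - k = d_i - 2k$, which is strictly negative since $k > d_i/2$.

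This contradicts the assumption that $c$ minimizes monochromatic edges, so no such vertex $i$ can exist, and $c$ is a weak majority 2-coloring. There is no real obstacle here; the only thing to be careful about is the strict inequality $k > d_i/2$ (rather than $k \geq d_i/2$), since it is precisely the strictness that guarantees $d_i - 2k < 0$ and hence the strict decrease. This is also why the conclusion gives a \emph{weak} majority 2-coloring rather than a strict one: a vertex with exactly half of its neighbors in its own color does not allow a profitable swap.
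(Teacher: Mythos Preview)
Your proof is correct and follows essentially the same approach as the paper's: assume a bad vertex $i$ with strictly more same-colored than differently-colored neighbors, flip its color, and observe that this strictly decreases the total monochromatic edge count, contradicting minimality. Your added remark about why the strict inequality $k > d_i/2$ is what makes the argument work (and why the resulting coloring is only \emph{weak} majority) is a nice touch not present in the paper's proof.
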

\begin{proof}
    Let $E_{M}$ be the set of monochromatic edges and $E_{D} = E\backslash E_{M}$  the set of dichromatic edges in graph $G$ colored by $c$. 
    Assume for contradiction that there is a node $i\in V$ that is an endpoint of strictly more monochromatic edges (we write $E_{M_i}$ for the set of such edges) than dichromatic edges ($E_{D_i}$): $|E_{M_i}|>|E_{D_i}|$. 
Consider now a second $2$-coloring $c'$ of $G$ that only differs from $c$ with respect to $i$'s color, i.e., $c'$ assigns the same color as $c$ did to all nodes except for $i$: $c_i\neq c'_i$. Let us write $E'_M$ for the new set of monochromatic edges, and $E'_{M_i}$ and $E'_{D_i}$ for the new sets of monochromatic and dichromatic edges from $i$. Given that $|E'_{M_i}|=|E_{D_i}|$ and
$|E'_{D_i}|=|E_{M_i}|$, we now have $|E'_{D_i}|>|E'_{M_i}|$ and $|E_{M_i}|>|E'_{M_i}|$. Given that no other edge of the graph is affected by this change, the total number of monochromatic edges is smaller with coloring $c'$ than it was with $c$: $|E_M|>|E'_M|$. But since we started by assuming that $c$ was such that $|E_M|$ was minimal, this is a contradiction. 
\end{proof}

We now use the existence of a majority coloring that minimizes the number of monochromatic edges to prove the following general result:
\begin{theorem}\label{thm:maj-weak-maj}
A \texttt{Mw} illusion is possible in any graph $G=\langle V,E \rangle$.
\end{theorem}
\begin{proof}
  Let $G=\langle V,E \rangle$ be a graph and let $c$ be a $2$-coloring of $G$ 
  that minimizes the total number of monochromatic edges. By Lemma \ref{lem:min-monochromatic}, $c$ is a weak majority $2$-coloring of $G$. 
    There are two cases: 
    \begin{itemize}
        \item $M_V\neq tie$. 
        Assume w.l.o.g. that  $M_V=red$. Since $c$ is a weak majority coloring, for any red vertex $i$, $M_{N_i}\in\{blue, tie\}$, and therefore $M_{N_i}\neq M_V$. Hence,  a majority of the nodes (the red ones) is under (possibly weak)  majority illusion: we have a \texttt{Mw} illusion. 
        \item $M_V = tie$. There are two cases:
        \begin{itemize}
        \item 
        If $|\{i\in V: M_{N_i}\in\{blue, red\}\}| > \frac{|V|}{2}$, we have a \texttt{Mw} illusion.
        \item
        Otherwise 
        (if $|\{i\in V: M_{N_i}=tie\}| \geq \frac{|V|}{2}$)
        choose a node $j$ with $M_{N_j}= tie$ and define a new coloring $c'$ that is equal to $c$ for all nodes except for $j$:  $c'_j\neq c_j$. Since $j$ has as many blue as red neighbors, this does not change the total number of monochromatic edges in the graph. Therefore,  $c'$ is also a coloring that minimizes this number. Hence, by Lemma \ref{lem:min-monochromatic}, $c'$ is also a weak majority $2$-coloring of $G$. Now, we have $M_V = c'_j$, and we can apply the logic of the first case: Assume w.l.o.g. that  $c'_j=red$. Since $c'$ is a weak majority coloring, for any red vertex $i$, $M_{N_i}\in\{blue, tie\}$. It follows that a majority of the nodes has $M_{N_i}\neq M_V$:
        we have a \texttt{Mw} illusion. 
        \end{itemize}
    \end{itemize} 
\end{proof}

One of the results in \cite{Grandi2022} is that checking whether or not a network allows for a \emph{Majority-majority} illusion is NP-complete\footnote{\cite{Grandi2022} does not actually speak about a \texttt{Mm} illusion, but about `at least a fraction of $q$ agents is under \texttt{m} illusion', with $q>\frac{1}{2}$. The fact that it then also holds for \texttt{Mm} illusion follows from that `more than half' is equivalent to `at least some fraction $q$ where $q$ is more than half' since we only have to consider rational numbers.}. 
Here, in stark contrast, we see  that there is no need for checking whether 
a network allows for a Majority-\emph{weak-}majority illusion, 
since Theorem \ref{thm:maj-weak-maj} shows that it is always the case.

\section{Illusions in Specific Network Classes}\label{sec:specific-networks}
 While the above  resolves the question of the existence of \emph{weak} majority illusions, we now aim to understand when the \emph{strict} version of the illusion can occur. In order to obtain results in that direction, we
 turn to some 
classes of graphs with well-known global properties.  
In some 
networks, namely two-colorable graphs and graphs where all nodes have odd degrees, it is easy to see whether or not illusions are possible, although these networks are unlikely to resemble real social networks. Nevertheless, we still mention those results, since they can be used as starting point for 
the systematic analysis of the types of graphs that allow for majority illusions. 
We then turn to consider the class of regular graphs, which, from the literature, seem to be promising for the prevention of illusions.

\subsection{Graphs with Odd Degrees}
On graphs in which all nodes have an odd degree, any \texttt{Mw} illusion, which are  guaranteed to exist by  Theorem \ref{thm:maj-weak-maj}, is always a stronger type of illusion. 
The intuition is that an agent with odd degree cannot see a tie in its neighborhood, which causes either \emph{all} agents to be in \texttt{w} illusion if there is a global tie, or, if there is no global tie, a majority of agents to be in a \texttt{m} illusion.
\begin{proposition}\label{thm:degree-odd}
For any graph $G=\langle V,E \rangle$ such that for all $i\in V,  d_i$ is odd, a \texttt{Mw} illusion in $G$ is either a \texttt{Uw} illusion or a \texttt{Mm} illusion.
\end{proposition}

\begin{proof}
 Let $G=\langle V,E \rangle$ be such that for all $i\in V$, $d_i$ is odd. By Theorem \ref{thm:maj-weak-maj}, there exists a coloring of $G$ that is a \texttt{Mw} illusion. Consider any such coloring $c$. 
Two cases:
\begin{itemize}
\item $M_V=tie$. For all $i\in V$, since $d_i$ is odd, $M_{N_i}\neq tie$ and therefore $M_{N_i}\neq M_V$: we have \texttt{Uw} illusion.
\item $M_V\neq tie$. Assume w.l.o.g. that $M_V = red$. Since $G$ is in a \texttt{Mw} illusion, 
$|\{i\in V: M_{N_i}\in\{blue, tie\}\}| > \frac{|V|}{2}$, but since for all $i$, $d_i$ is odd, this implies that all those vertices cannot have a tie: we have 
$|\{i\in V: M_{N_i}=blue\}| > \frac{|V|}{2}$,
a \texttt{Mm} illusion.
\end{itemize}

\end{proof}

Given a graph coloring we can define a `swappable node' as a node whose neighbors all have at least 2 (so 3 for odd degree) more nodes of one color than nodes of the other color. Then, a corollary of 
Proposition \ref{thm:degree-odd} is the following:

\begin{corollary}\label{cor:swappable_node}
    Let a graph $G=\langle V,E \rangle$ be such that for all $i\in V,  d_i$ is odd. 
    Let $c$ be a weak majority 2-coloring of $G$ that is a \texttt{Mw} illusion\footnote{Such coloring exists by the proof of Theorem \ref{thm:maj-weak-maj}.}.
    If $M_V=tie$ and there is at least one $j\in V$ that is `swappable', then a \texttt{Wm} illusion is possible in $G$.
\end{corollary}
\begin{proof}
    W.l.o.g. assume $c_j =  red$ and define $c'$ which is equal to $c$ for all nodes except that $c'_j=blue$. Since $c$ was a weak majority 2-coloring, all red nodes had more than half blue neighbors. Since $j$'s neighbors all had a margin of at least 2 and nothing except $j$'s color changed, all red nodes except $j$ still have more than half blue neighbors in $c'$. Hence, half of the nodes are under \texttt{m} illusion.
\end{proof}

 \subsection{2-Colorable Graphs}
  In the same way as we used the existence of a majority coloring to obtain results about the existence of majority illusions 
 we can also use the existence of a special type of weak majority 
 colorings, the proper
 colorings, to obtain results about majority illusions in $2$-colorable graphs.

\begin{lemma}\label{lem:2-col}
    Any proper 2-coloring of a graph $G=\langle V,E \rangle$ is either
    a \texttt{Mm} illusion or a \texttt{Uw} illusion. 
\end{lemma}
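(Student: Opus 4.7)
The plan is to exploit the defining feature of a proper $2$-coloring: every edge is dichromatic, so for any node $i$ with at least one neighbor, \emph{all} of $i$'s neighbors have the color opposite to $c_i$. This immediately forces $M_{N_i}\neq tie$ and in fact $M_{N_i}$ equals the opposite of $c_i$. I would open the proof with this single structural observation, since it does essentially all of the work and reduces the rest to bookkeeping on the global majority.

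From there the argument splits on $M_V$. If $M_V=tie$, then for every (non-isolated) node $i$ we have $M_{N_i}\in\{red,blue\}\neq M_V$, so $i$ is under weak-majority illusion, giving a unanimity-weak-majority illusion. If $M_V\neq tie$, assume w.l.o.g.\ that $M_V=red$; then the red nodes form a strict majority of $V$, and each red node $i$ sees only blue neighbors, so $M_{N_i}=blue$. This yields $M_{N_i}\neq tie$, $M_V\neq tie$, and $M_{N_i}\neq M_V$, so every red node is under strict majority illusion; since the reds are already a majority of $V$, we have a majority-majority illusion. Note that in either case the conclusion holds for the given proper coloring, which matches the universal phrasing ``any proper $2$-coloring'' in the statement.

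The main obstacle I expect is the treatment of isolated vertices: such a vertex has $M_{N_i}=tie$, so an isolated red vertex with $M_V=red$ is in neither kind of illusion, and a small bipartite graph consisting of one edge plus an isolated vertex already breaks the clean dichotomy. Either the lemma tacitly assumes no isolated vertices (which is natural given the context of properly $2$-colored graphs thought of as bipartite opinion structures), or one has to absorb them separately — for instance by noting that in any proper $2$-coloring isolated vertices may be recolored without spoiling properness, so one can always slot them into the color that keeps the counts on the right side of the two cases above. I would present the core argument under the no-isolated-vertex assumption and then add a short remark for this edge case.
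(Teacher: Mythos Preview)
Your argument is correct and follows essentially the same route as the paper: observe that in a proper $2$-coloring every node sees only the opposite color, then split on whether $M_V$ is a tie, concluding unanimity-weak-majority illusion in the tie case and majority-majority illusion otherwise. The paper's proof is the same two-case split, phrased slightly more tersely and without your explicit structural observation up front.

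Your concern about isolated vertices is legitimate and goes beyond what the paper addresses: the paper's proof simply asserts that ``all red nodes have a majority of blue neighbors'' and ``for all nodes, all neighbors are the other color'', tacitly assuming every node has at least one neighbor. So your instinct to flag this is well placed. Note, however, that your suggested fix by recoloring isolated vertices does not rescue the lemma as stated, since the claim is about \emph{any} proper $2$-coloring, not about the existence of one; the honest resolution is the no-isolated-vertex assumption you propose.
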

The idea of the proof is similar to that of 
Proposition \ref{thm:degree-odd}: no node can see a tie among its neighbors.
\begin{proof}
    Let $c$ be a proper $2$-coloring of $G$. 
    There are two cases:
    \begin{itemize}
    \item If $M_V\neq tie$, then w.l.o.g. assume that $M_V=red$. Since more than half the nodes are red and all red nodes have a  majority of blue neighbors,  we have a \texttt{Mm} illusion. 
    \item If $M_V=tie$, then all the nodes are under \texttt{w} illusion, since for all nodes, all neighbors are the other color. We have a \texttt{Uw} illusion.
    \end{itemize}
        
\end{proof}
Both cases used in the above proof are cases of Majority-\emph{weak}-majority illusions (which were already guaranteed to exist by Theorem \ref{thm:maj-weak-maj}), but we can also show the existence of the strict majority illusion in two different cases. 
First, when the number of nodes is odd,  there cannot be a global tie, so by using the first case in Lemma \ref{lem:2-col} we get the following proposition
:
\begin{proposition}\label{prop:2-col-maj-maj}
    In any properly $2$-colorable graph $G=\langle V,E \rangle$ with $|V|$ odd, a \texttt{Mm} illusion is possible. 
\end{proposition}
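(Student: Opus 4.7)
The plan is to observe that this proposition is essentially an immediate corollary of Lemma \ref{lem:2-col} combined with the parity assumption on $|V|$. So the proof should be very short.

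First I would invoke the hypothesis to fix an arbitrary proper $2$-coloring $c$ of $G$ (which exists by assumption), yielding a $2$-colored graph $C = \langle V, E, c\rangle$. Next I would note that since $|V|$ is odd, the two color classes cannot have equal cardinality, so $M_V \neq \text{tie}$; one color, say red without loss of generality, must be held by strictly more than $|V|/2$ agents. At this point I have placed myself precisely in the first case of Lemma \ref{lem:2-col}, which directly delivers that $C$ is in a majority-majority illusion. Since the hypothesis of the proposition only requires the existence of a coloring witnessing the illusion, this concludes the argument.

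There is essentially no hard step here: the substantive content is packed into Lemma \ref{lem:2-col}, and all this proposition adds is the observation that odd $|V|$ rules out the tie branch of that lemma. The only minor thing worth double-checking while writing is that the proper $2$-coloring is globally well-defined on all of $G$ (including possible isolated vertices and multiple connected components), but this is immediate from the definition of proper $2$-colorability: we simply fix one such coloring and read off the inequality $|\{i \in V : c_i = \text{red}\}| > |V|/2$ (up to swapping colors) from the fact that $|V|$ is odd. Hence the proof fits comfortably in a few lines.
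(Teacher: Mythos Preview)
Your proposal is correct and follows exactly the same approach as the paper: fix a proper $2$-coloring, observe that odd $|V|$ forces $M_V \neq \text{tie}$, and then invoke the first case of Lemma~\ref{lem:2-col} to conclude that the coloring is in a majority-majority illusion.
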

 \begin{proof}
    Let $c$ be a proper 2-coloring of $G$. Since $|V|$ is odd, $M_V\in \{red, blue\}$, we can use the first case of the proof of Lemma \ref{lem:2-col}: W.l.o.g. assume $M_V=red$. Since more than half of the nodes are red and all red nodes have only blue neighbors (since $c$ was a proper 2-coloring), we have a \texttt{Mm} illusion. 
\end{proof}
Second, when the color of a node can be swapped if needed, we can solve a tie:  
\begin{proposition}\label{prop:2col-strict}
  In any properly $2$-colorable graph $G=\langle V,E \rangle$ with some $i\in V$ such that for all $j\in N_i$ $d_j > 2$, a \texttt{Wm} illusion is possible. 
\end{proposition}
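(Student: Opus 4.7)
The plan is to take a proper $2$-coloring of $G$ (which exists by hypothesis) and branch on whether it produces a global tie. If $M_V \neq tie$ under the proper coloring, Lemma~\ref{lem:2-col} directly yields a majority-majority illusion, which in particular is a weak-majority-majority illusion, and we are done.

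The substantive case is when the proper $2$-coloring $c$ yields $M_V = tie$ (forcing $|V|$ to be even, with each color class of size $|V|/2$). Here I exploit the distinguished node $i$ to break the tie: define $c'$ to agree with $c$ everywhere except $c'_i \neq c_i$, and assume w.l.o.g. $c'_i = red$. Then under $c'$ there are $|V|/2 + 1$ red and $|V|/2 - 1$ blue nodes, so $M_V = red$ under $c'$.

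I would then show that the agents under strict majority illusion under $c'$ are exactly the red nodes other than $i$. For each neighbor $j \in N_i$: properness of $c$ forces $j$ to be red, and under $c'$ the set $N_j$ contains exactly one red vertex (namely $i$) and $d_j - 1$ blue vertices, so $M_{N_j} = blue \neq M_V$ provided $d_j - 1 > 1$. For each $k \notin N_i \cup \{i\}$: the swap does not touch $N_k$, so by properness $M_{N_k}$ is the color opposite to $c_k$; hence the red such $k$ are under strict majority illusion while the blue ones are not. Finally, $i$ itself has all red neighbors, so $M_{N_i} = red = M_V$ and $i$ is not under illusion. Counting gives $|V|/2 + 1 - 1 = |V|/2$ agents under strict majority illusion, meeting the at-least-half threshold for a weak-majority-majority illusion.

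The crux of the argument, and the reason for the hypothesis on $i$, is the verification that after the swap each $j \in N_i$ still sees a \emph{strict} majority of the opposite color: this needs $d_j - 1 > 1$, i.e., $d_j > 2$. Without this condition, a neighbor of degree $2$ would see a tie ($1$ red, $1$ blue neighbor) after the swap and contribute only a weak-majority illusion rather than a strict one, breaking the count.
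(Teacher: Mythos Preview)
Your proposal is correct and follows essentially the same approach as the paper: start from a proper $2$-coloring, invoke Lemma~\ref{lem:2-col} when there is no global tie, and in the tie case swap the color of the distinguished node $i$ to break the tie, then count that the $|V|/2$ red nodes other than $i$ are all under strict majority illusion. Your treatment is slightly more explicit than the paper's in separating the cases $j \in N_i$, $k \notin N_i \cup \{i\}$, and $i$ itself, and in spelling out why the hypothesis $d_j > 2$ is exactly what is needed, but the argument is the same.
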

\begin{proof}
    Let $c$ be a proper coloring of $G$. 
    There are two cases:
    \begin{itemize}
    \item If $M_V\neq tie$, then 
    conformingly to 
    Lemma \ref{lem:2-col}, we have a \texttt{Mm} illusion.
    \item If $M_V=tie$, then 
    swap the color of node $i$:  let $c'$ assign the same colors as $c$ to all other nodes but $c'_i\neq c_i$. Now $M'_V= c'_i$. W.l.o.g. say $c_i=blue$ and $c'_i=red$. All of $i$'s neighbors are also red and have now exactly one red neighbor ($i$), and more than one blue neighbor. Therefore, all red nodes except for $i$ have more than half of their neighbors blue. Therefore, exactly $\frac{|V|}{2}$ of the nodes are in a situation of \texttt{m} illusion. %
    \end{itemize} 
    
\end{proof}

In \cite{Grandi2022}, the complexity of checking whether a network admits (what we call) a \texttt{Wm} illusion was left as an open problem. Propositions \ref{prop:2-col-maj-maj} and \ref{prop:2col-strict} show that by checking whether a graph is properly 2-colorable (which can be done in polynomial time \cite{Brown1996}) and whether there exists a node whose neighbors all have degree larger than $2$ or whether the number of nodes is odd, we can know that a graph admits a \texttt{Mm/Wm} illusion. 
Still this does not give us the complexity of checking whether a network admits a \texttt{Wm} illusion:  while this is a sufficient condition for a graph to allow for a \texttt{Mm/Wm} illusion, it is not a necessary one.


\subsection{Regular Graphs}
Some classes of networks seem promising for the prevention of illusions, in particular networks in which all nodes have similar degrees. 
In \cite{Centola2017}, theoretical analysis and experiments where human subjects were asked to perform estimation tasks are used to study the influence of network structure on the wisdom of crowds. The authors find a remarkable difference between centralized networks, where the degree distribution varies a lot between nodes, and decentralized (regular)  networks, in which all nodes have the same degree, regarding what social influence does to the accuracy of the estimates of individuals (when individual's estimates are based on a weighted average of their own belief and the beliefs of their neighbors).
They show that in decentralized networks, social influence significantly improves individual
accuracy and the group mean estimate. Furthermore, an overview of research about collective intelligence by Centola \cite{Centola2022} mentions several studies about decentralized networks in practical applications: in decentralized networks, political polarization and biases about climate change and immigration are reduced \cite{Becker2019,Guilbeault2018}, and social influence reduced biases about the risk of smoking \cite{Guilbeault2020}, as well as (implicit) race and gender biases in clinical settings \cite{Centola2021}.
Given that majority illusions also involve the wrong perception of a group, and  that decentralized/regular networks seem to be beneficial for group accuracy and bias reduction, 
we wonder whether they also are `good networks' in terms of the distortion we study: to what extent they do not allow for a majority of their nodes being under \texttt{m} illusion. According to Lerman's initial paper on majority illusions \cite{Lerman2016}, differences between the degrees of nodes and their neighbors are one of the main factors enabling majority illusion. Therefore, one would expect that regular networks, where all nodes have the same degree, make majority illusions less likely. 
Nevertheless, we will show that majority illusions (beyond the ones given by Theorem \ref{thm:maj-weak-maj}) 
are also possible in regular networks.

\medskip

A $d$-regular network 
 is a network in which all nodes have degree $d$.
We start by 
considering 
the 
simplest
cases where the network is made of one ore more cycles ($d=2$), and where the network is complete ($d=|V|-1$).
\begin{proposition}\label{prop:k=2}
    For any $2$-regular graph $G = \langle V,E\rangle$, 
    there exists no 2-coloring with which $G$ is under \texttt{Mm/Wm} illusion.
\end{proposition}
\begin{proof}
     Let $G = \langle V,E\rangle$ be any 2-regular graph. A node can only be under \texttt{m} illusion if both of its neighbors are of the minority color. Every minority-colored node can serve as a neighbor for at most two nodes. Thus, to give at least half of the nodes a \texttt{m} illusion, there must be at least $\frac{|V|}{2}$ nodes in the minority color, which is a contradiction with being a strict minority.
\end{proof}
However, a \texttt{Mw} illusion is possible, according to Theorem \ref{thm:maj-weak-maj}, and it is easy to find one (which we leave as an exercise to the reader).

\begin{proposition}
\label{prop:complete-weak-maj-maj}
    In a complete graph, no node can be under \texttt{m} illusion.
\end{proposition}
\begin{proof}
    Suppose towards a contradiction that there is a complete graph (i.e. a $d$-regular graph with $d=|V|-1$) $G = \langle V, E\rangle$ in which 
    there is a node under \texttt{m} illusion.
    Since there 
    is a node under \texttt{m} illusion, there cannot be a global tie. Without loss of generality assume $M_V=red$. Since all nodes observe all other nodes, all non-red nodes observe a majority of red neighbors, and all red nodes (a majority) observe either a majority of red nodes or a tie: no node is under \texttt{m} illusion, a contradiction.
\end{proof}

We know (by Theorem~\ref{thm:maj-weak-maj}) that a 
\texttt{M\emph{w}} illusion is possible on any complete graph
. 
We can go further and specify the types of colorings under which these graphs are in such an illusion. 

\begin{proposition}
\label{prop:complete-maj-weak-maj}
A complete $2$-colored graph $G=\langle V, E, c\rangle$ is in a  \texttt{Mw} illusion if and only if one of the following two holds:
    \begin{itemize}
        \item 
        the difference in numbers of nodes of each color is one; or
        \item 
        the number of nodes of each color is equal
        .
    \end{itemize}
\end{proposition}
Note that  if the second point holds (the number of nodes of each color is equal), then the graph is in a \texttt{Uw} illusion.

\begin{proof}
   If the difference in numbers of nodes of each color is one, assume w.l.o.g. that $M_V = red$. Then for all red nodes $r$, $M_r = tie$, so we have a \texttt{Mw} illusion.
    If the number of nodes of each color is equal, we have $M_V = tie$ but every node will observe a majority of the other color: we have a \texttt{Uw} illusion. 
    In all other cases, in which the difference in numbers of nodes of each color is greater than one, all nodes see the correct majority: no node is under \texttt{m/w} illusion.
\end{proof}

We return to the analysis of general regular graphs. 
The number of minority-colored neighbors needed for an illusion gives a restriction on the possible values of $d$ depending on $|V|$: 
\begin{proposition}
\label{prop:c}
If a 2-colored $d$-regular graph $G=\langle V,E,c \rangle$ with $|V|=n$ is in a \texttt{Mm/Wm} illusion, then $n$ and $d$ must satisfy
    \begin{itemize}
        \item $d\leq n-4$  if $n$ and $d$ are even; 
        \item  $d\leq n-3$  if one of $n$ and $d$  is even and one is odd.
    \end{itemize}
\end{proposition}

\begin{proof}
    This is a direct corollary of the more general Proposition \ref{prop:c-general} in 
    Section \ref{app:pq-illusions}. 
\end{proof}    

\begin{example}
    Consider a $d$-regular graph $G=\langle V, E\rangle $ with $|V|=6$ and $d=4$. For any node to be in a \texttt{m} illusion, at least $3$ of its neighbors have to have a different color than the global majority color. 
    Assume that the global majority color is red. Then there are at least $4$ red nodes and therefore only 2 nodes can be blue. Therefore, no node can have 3 or more blue neighbors.
\end{example}

The number of available edges of the minority color brings another requirement on the relative values of $|V|$ and $d$ for the strictest version.
 
\begin{proposition}
\label{prop:d}
    If a 2-colored $d$-regular graph $G=\langle V,E,c \rangle$ with $|V|=n$ is in a \texttt{Mm} illusion, then $n$ and $d$ must satisfy:
    \begin{itemize}
        \item  $n\geq \frac{2(3d+2)}{d-2}$ (assuming  $d > 2$) if  $n$ and $d$ are even;  
        \item  $n\geq \frac{2(3d+1)}{d-1}$  (assuming $d>1$) if  $n$ is even and $d$ is odd;
        \item  $n\geq \frac{3d+2}{d-2}$ (assuming $d >2$) if $d$ is even and $n$ is odd.
    \end{itemize}
\end{proposition}
    \begin{proof}
    If $G$ is in a \texttt{Mm} illusion, there are more than half of the nodes of one color. W.l.o.g., assume that this majority color is red, and that the minority color is blue.
    \begin{itemize}
        \item When $n$ and $d$ both are even, in order for a \texttt{Mm} illusion to exist, at least $\frac{n}{2}+1$ nodes have to be red. Nodes with an illusion have to have at least $\frac{d+2}{2}$ blue neighbors. Then, there have to be at least $\frac{n}{2}+1$ such nodes with an illusion. Thus there have to be at least $\frac{d+2}{2}(\frac{n}{2}+1)=\frac{(d+2)(n+2)}{4}$ edges to a blue node. Hence, there must be at least $\frac{(d+2)(n+2)}{4d}$ blue nodes because every blue node can have at most $d$ edges. Since at least $\frac{n}{2}+1$ nodes were red, there are at most $\frac{n}{2}-1$ left over to be blue, so this means that $\frac{(d+2)(n+2)}{4d}$ must be at most $\frac{n}{2}-1$. This is equivalent to $n\geq \frac{2(3d+2)}{d-2}$ assuming that $d>2$;
        \item When $n$ is even and $d$ odd, in order for a \texttt{Mm} illusion to exist, at least $\frac{n}{2}+1$ nodes have to be red. Nodes with an illusion have to have at least $\frac{d+1}{2}$ blue neighbors. Then, there have to be at least $\frac{n}{2}+1$ such nodes with an illusion. Thus there have to be at least $\frac{d+1}{2}(\frac{n}{2}+1)=\frac{(d+1)(n+2)}{4}$ edges to a blue node. Hence, there must be at least $\frac{(d+1)(n+2)}{4d}$ blue nodes because every blue node can have at most $d$ edges. Since at least $\frac{n}{2}+1$ nodes were red, there are at most $\frac{n}{2}-1$ left over to be blue, so this means that $\frac{(d+1)(n+2)}{4d}$ must be at most $\frac{n}{2}-1$. This is equivalent to $n+2\leq d(n-6)$, which means $n\geq \frac{2(3d+1)}{d-1}$ assuming that $d>1$;
        \item When $d$ is even and $n$ odd, in order for a \texttt{Mm} illusion to exist, at least $\frac{n+1}{2}$ nodes have to be red. Nodes with an illusion have to have at least $\frac{d+2}{2}$ blue neighbors. Then, there have to be at least $\frac{n+1}{2}$ such nodes with an illusion. Thus there have to be at least $\frac{d+2}{2}\cdot\frac{n+1}{2}$ edges to a blue node. Hence, there must be at least $\frac{(d+2)(n+1)}{4d}$ blue nodes. Since at least $\frac{n+1}{2}$ nodes were red, there are at most $\frac{n-1}{2}$ left over to be blue, so this means that $\frac{(d+2)(n+1)}{4d}$ must be at most $\frac{n-1}{2}$. This is equivalent to $n\geq \frac{3d+2}{d-2}$ assuming that $d>2$.
    \end{itemize} 
\end{proof}
    
    \begin{example}
        Consider a $d$-regular network with $|V|=6$ and $d=3$. Let us assume that red is the 
        global majority color, so we have at least 4 red nodes and at most 2 blue nodes. Then any node with a \texttt{m} illusion has at least 2 blue neighbors. Since for a \texttt{Mm} illusion there have to be at least 4 nodes with an illusion, there are at least $4\cdot 2=8$ edges to blue nodes necessary. However, since we have at most 2 blue nodes that each have only 3 edges, this is not possible.
    \end{example}
For any  $n$ and $d$ (with $d>2$ and $n$ or $d$ even) satisfying the above constraints, we can find a $d$-regular graph of size $n$ that has a \texttt{Mm} illusion. Note that this does not mean that for any $d$-regular graph of size $n$ 
we can find a coloring that gives a \texttt{Mm} illusion, because there exist many different regular graphs with the same $n$ and $d$. We only show that, for all combinations of $n$ and $d$ not excluded by our previous results,  there exists at least one such graph with the illusion, and that we know how to find it.
\begin{figure}[t]
    \centering
    \begin{subfigure}[t]{.25\textwidth}
        \centering\captionsetup{width=.8\linewidth}
        \includegraphics[width=0.6\textwidth]{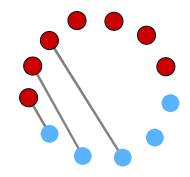}
        \caption{Color 7 nodes red and 5 nodes blue. Draw lines from red to blue nodes, }
    \end{subfigure}%
    \begin{subfigure}[t]{.25\textwidth}
        \centering\captionsetup{width=.8\linewidth}
        \includegraphics[width=0.6\textwidth]{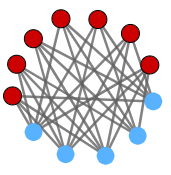}
        \caption{until each red node has $\lfloor{\frac{d}{2}+1}\rfloor=4$ connections to a blue node.}
    \end{subfigure}%
    \begin{subfigure}[t]{.25\textwidth}
        \centering\captionsetup{width=.8\linewidth}
        \includegraphics[width=0.6\textwidth]{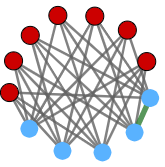}
        \caption{Add an edge (green) 
        such that all blue nodes have 6 edges.}
    \end{subfigure}%
    \begin{subfigure}[t]{.25\textwidth}
        \centering\captionsetup{width=.8\linewidth}
        \includegraphics[width=0.6\textwidth]{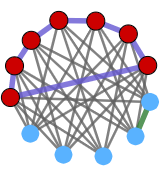}
        \caption{Draw a regular graph (purple) of the remaining edges between the red nodes.}
    \end{subfigure}%
\caption[short]{Example of the algorithm for Theorem \ref{thm:regular-maj-maj}, with $n=12$, $d=6$.
}
\label{fig:algorithm-12-6}
\end{figure}
\begin{theorem}\label{thm:regular-maj-maj}
    Let $n$ and $d$ be any two integers such that 
    $d>2$ and $d$ or $n$ is even. 
    If the conditions of Propositions \ref{prop:c} and 
    \ref{prop:d} are met, 
    there exists a $d$-regular graph $G=\langle V, E\rangle$ with $|V|=n$ in which a \texttt{Mm} illusion is possible. 
\end{theorem}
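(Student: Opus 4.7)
The plan is to give an explicit construction that mirrors the four-step recipe illustrated in Figure~\ref{fig:algorithm-12-6}. Let $r = \lfloor n/2 \rfloor + 1$ and $b = n - r$, so that the set $R$ of red vertices forms a strict majority, and let $m = \lfloor k/2 \rfloor + 1$ denote the minimum number of blue neighbors a red vertex needs in order to be under majority illusion. If every red vertex can be given at least $m$ blue neighbors in some $k$-regular graph on $R \cup B$, then all $r > n/2$ red nodes are under majority illusion and we obtain a majority-majority illusion.

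First I would build a bipartite graph between $R$ and $B$ in which every red vertex has degree exactly $m$ and no blue vertex exceeds degree $k$. Since the total number of red half-edges is $rm$, this step requires $rm \leq bk$; unpacking this inequality in each of the three parity cases of the hypotheses recovers exactly the bounds of Proposition~\ref{prop:d} (for instance, when $n$ and $k$ are both even it rearranges to $n(k-2) \geq 6k+4$). Distributing the $rm$ bipartite edges as evenly as possible across $B$ gives each blue vertex bipartite degree $\lfloor rm/b \rfloor$ or $\lceil rm/b \rceil$, both at most $k$. Next I would complete each blue vertex to degree $k$ by adding a graph on $B$ realizing the residual degrees, which exists by Erd\H{o}s--Gallai since these residuals are small. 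Finally I would add a $(k-m)$-regular graph on $R$ to complete the red degrees; this is possible because Proposition~\ref{prop:c} ensures $k - m \leq r - 1$.

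The hard part will be the parity bookkeeping: the blue-blue completion requires $bk - rm$ to be even, and the $(k-m)$-regular graph on $R$ requires $r(k-m)$ to be even, and since $nk$ is even by hypothesis these two parities coincide, so either both hold or both fail. In the failing case I would adjust the construction by giving one chosen red vertex $v$ exactly $m+1$ blue neighbors (instead of $m$) and $k-m-1$ red neighbors (instead of $k-m$). This simultaneously flips both parities, keeps $v$ under majority illusion (since $m+1 > k/2$), and leaves $R$ with the near-regular degree sequence consisting of $r-1$ copies of $k-m$ and one $k-m-1$, which is graphical by Erd\H{o}s--Gallai. The remaining thing to check is that in each parity-failing case (enumerable via the residues of $n$ and $k$ modulo $4$) the inequality $rm \leq bk$ has enough slack to absorb the extra bipartite edge; a short case analysis using the strict bounds of Proposition~\ref{prop:d} confirms this.
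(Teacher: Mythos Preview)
Your construction follows the same four-step recipe as the paper's algorithm: build a balanced bipartite graph between $R$ and $B$ giving every red vertex exactly $m$ blue neighbours, then complete the residual degrees inside each colour class. Invoking Gale--Ryser and Erd\H{o}s--Gallai for these completions is tidier than the paper's explicit round-robin wiring, and your parity observation---that $r(k-m)$ and $bk-rm$ agree modulo $2$ because their sum is $nk-2rm$---is exactly the mechanism behind the paper's final pairing step, stated more transparently. One small point you gloss over: ``the residuals are small'' is not automatic; you need the maximum blue residual $k-\lfloor rm/b\rfloor$ to be at most $b-1$, which does follow from the bound in Proposition~\ref{prop:c} but deserves a line.

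The substantive gap is in your parity fix. Giving one red vertex $m+1$ blue neighbours requires $m+1\le b$, and the hypotheses do not guarantee this. Take $n=12$, $k=8$: Proposition~\ref{prop:c} holds with equality ($k=n-4$) and Proposition~\ref{prop:d} gives $n\ge 52/6$, also satisfied. Here $r=7$, $b=5$, $m=5$, so $m+1>b$; and $r(k-m)=21$ is odd, so the fix is genuinely needed but cannot be carried out. In fact no $8$-regular graph on $12$ vertices admits a majority-majority illusion: with $7$ red and $5$ blue, a blue node has at most four blue neighbours and cannot be under illusion, so all seven reds must each have at least five blue neighbours, forcing the bipartite part to be $K_{7,5}$ and leaving a $3$-regular graph on the seven reds, impossible by parity. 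The paper's own algorithm hits the identical wall at its ``pair up the remaining open ends'' step (every blue is already adjacent to every red), so the issue is a missing hypothesis in the theorem rather than a flaw peculiar to your argument; the family $n\equiv 0\pmod 4$, $k=n-4$, $n\ge 12$ gives infinitely many such counterexamples.
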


\noindent\emph{Proof sketch.}
    We prove this by construction: we give an algorithm that takes as input $n$ and $d$ and returns a regular graph with $n$ nodes of degree $d$ that has a \texttt{Mm} illusion. The algorithm and a proof that the algorithm outputs the desired graph are given in Appendix 
    \ref{app:algorithm}. See Figure \ref{fig:algorithm-12-6} for an example with 12 nodes of degree 6.

     Propositions \ref{prop:c} and \ref{prop:d} and Theorem \ref{thm:regular-maj-maj} together give necessary and sufficient conditions for $n$ and $d$ for the existence of a $d$-regular graph with $|V|=n$ nodes on which a \texttt{Mm} illusion is possible. 
     See Figure \ref{fig:ill_Thm2} for an illustration of the values of $n$ and $d$ that meet the conditions in Theorem \ref{thm:regular-maj-maj}.
\begin{figure}
    \centering
    \includegraphics[width=\linewidth]{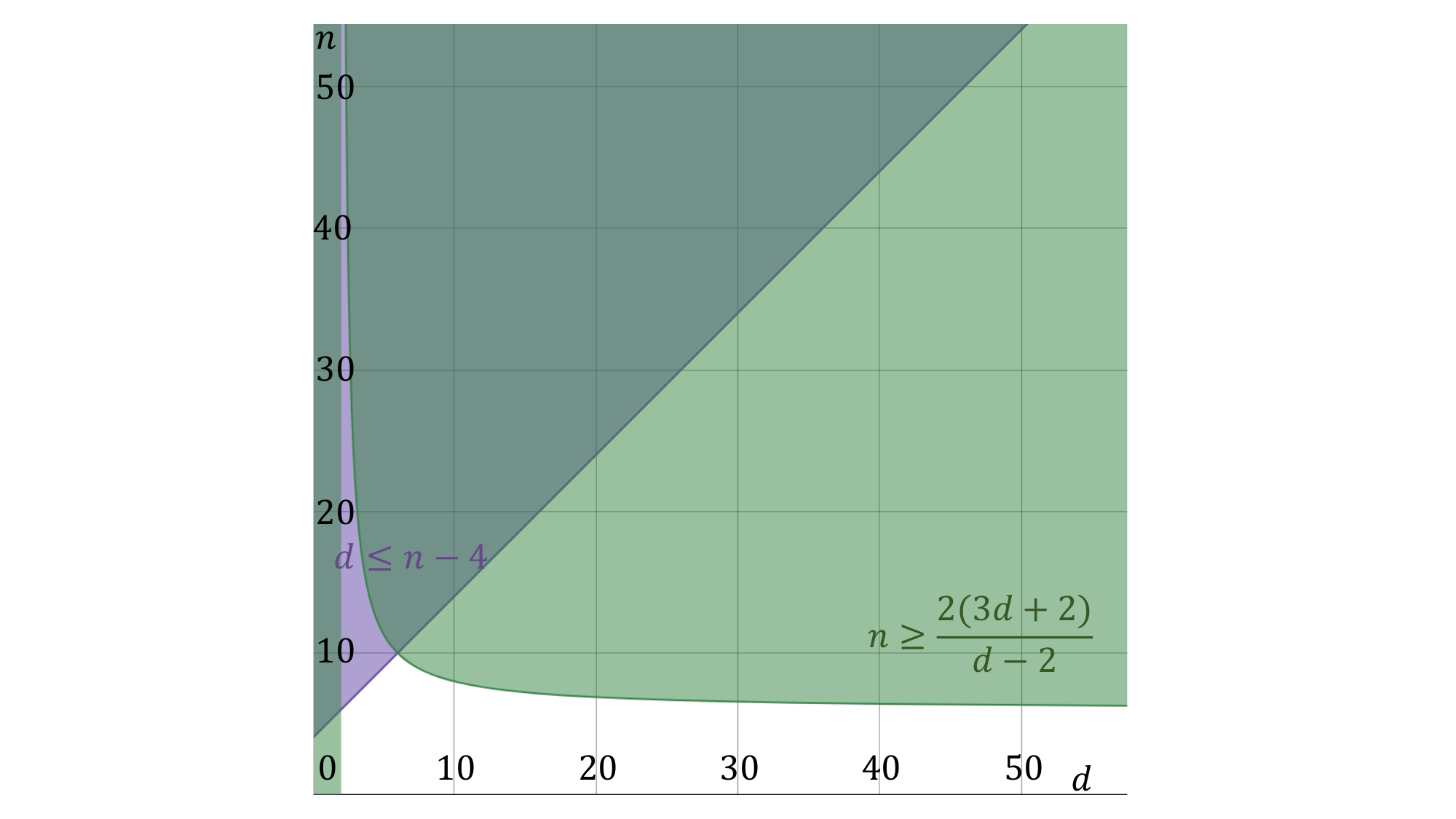}
    \caption{
    Illustration of the values of $n$ and $d$ for which the conditions in Theorem \ref{thm:regular-maj-maj} hold, for even $n$ and $d$ (the cases where one of $n$ and $d$ is odd are similar). The purple shaded area marks the values of $n$ and $d$ satisfying the condition in Proposition \ref{prop:c}, the light green shaded area those satisfying the condition in Proposition \ref{prop:d}, and the dark green area those satisfying the conditions in both propositions.}
    \label{fig:ill_Thm2}
\end{figure}

\begin{remark}
    Although we already know from Theorem \ref{thm:maj-weak-maj} that 
    \texttt{M\emph{w}} illusions are possible in any graph, the technique from Proposition \ref{prop:d} can be used to get a requirement on the relative values of $n$ and $d$ in $d$-regular graphs of $n$ nodes where \emph{specific} \texttt{Mw} illusions are possible. Namely, graphs where \texttt{Mw/Ww} illusions are possible with a 2-coloring that does not have a global tie, or where monochromatic \texttt{Mw/Ww} illusions are possible with a 2-coloring that has a global tie. We leave the details as an exercise for the reader.
\end{remark}
\medskip
In conclusion, contrary to the expectation that decentrality of a network would prevent it from being under majority illusion, we found that even \texttt{Mm} illusions are possible on some regular networks. The question remains whether they are less likely on more centralized networks, something that we will study in Section \ref{sec:simulations}.

\medskip

\begin{remark}
    Since any $2$-coloring of a graph is also a $k$-coloring for $k>2$
    , all results from Sections \ref{sec:arbitrary-networks} and \ref{sec:specific-networks} 
    giving sufficient conditions for the existence of illusions (Theorems \ref{thm:maj-weak-maj} and \ref{thm:regular-maj-maj}, and Propositions \ref{prop:2-col-maj-maj}, \ref{prop:2col-strict}
    ) 
    hold also in the case where the size of the set of colors $\mathcal{C}$ is larger than 2.
    
\end{remark}

\section{Computational Experiments}\label{sec:simulations}
So far, we used purely analytical methods to determine whether 
majority illusions are possible on precisely defined classes of graphs. 
We did not yet restrict our focus to 
types of graphs that would resemble \emph{real} social networks. 
While there is no strict graph-theoretical classification of such networks, there are certain graph-theoretical properties that are known to commonly occur in social networks, in terms of the nodes' degree and the distance between nodes. Typical social networks often come with a high clustering coefficient, a power-law-like degree distribution and small distances between the nodes. This type of networks are hard to treat analytically. We  
therefore take an empirical route and
perform computational experiments in which we generate random networks exhibiting these realistic properties to a greater or lesser extent, and examine \emph{how often} illusions occur and \emph{to which degree}. We study three types of networks: Erd\H{o}s-R\'{e}nyi networks, which we treat as a baseline against the two other more realistic networks we consider; Holme-Kim networks; and a friendship network obtained from Facebook data.

Erd\H{o}s-R\'{e}nyi networks \cite{ErdosRenyi} (also used in the first work on majority illusions \cite{Lerman2016}) are random graphs, where only the number of nodes $n$ and the probability of two nodes being connected by an edge $p_{edge}$ are specified. They have a Poisson degree distribution, and do not have the typical properties of real social networks. 
Holme-Kim networks \cite{HolmeKim_2002} are random networks that are, in contrast,  designed to capture at least two properties commonly observed in real social networks. They are scale-free, that is, the degree distribution in the network asymptotically follows a power-law; and they have the small-world property, which means that on average the distances in the network are low and the clustering coefficient is high. 
The idea is to use the Barabasi-Albert algorithm \cite{Barabsi2002} 
to create a graph with a powerlaw degree distribution, but to have an extra step when a node is added to form more triangles, in order to get a higher clustering coefficient. The number $m$ of edges to add for each node and the probability $p$ of adding a triangle after adding a random edge are input parameters of the algorithm.
Finally, we also run an experiment with a real Facebook friendship network.

For each of these network types, we study both the likelihood of the occurrence of majority illusions and their extent, that is, the fraction of nodes under illusion and the error of each node 
in their estimation of the proportion of blue and red nodes
.

\paragraph{Hypotheses}
We have a couple of expectations regarding the results of our experiments.
First (H1), we expect less illusions and smaller error (smaller difference for nodes in local and global proportion of red and blue) when the networks are more connected (higher clustering coefficient, higher $p_{edge}$ in Erd\H{o}s-R\'{e}nyi graphs, higher $m$ and $p$ in Holme-Kim graphs), because when nodes have more edges, they have more information about the network and therefore should be less likely to be under illusion.
Second (H2), 
if the global distribution of colors is close to 50/50, we intuitively expect more nodes to be under \texttt{m/w} illusion, since nodes are more likely to \emph{just} see the wrong majority.
Third (H3),  given the analysis in the line of work of Centola and others on network centralization \cite{Centola2022}, we expect a positive correlation between centrality of the network and the amount of majority illusions: in more centralized networks, we expect more illusions than in less centralized networks. 
For the same reason, since Erd\H{o}s-R\'{e}nyi networks asymptotically become regular networks for large $n$, we expect less illusions with larger $n$.

Finally (H4), given the results of \cite{Lerman2016} that in Erd\H{o}s-R\'{e}nyi networks where high-degree nodes tend to connect to low-degree nodes majority illusions are more likely, we expect to find 
a negative correlation between degree assortativity and the likelihood or strength of illusions too.
And since differences between degrees of nodes are larger in larger Holme-Kim networks, we also expect a positive correlation between the size of a Holme-Kim network and the likelihood or extent of illusions.

\subsection{Experimental Set-up}
For creating Erd\H{o}s-R\'{e}nyi networks, we use the implementation of the NetworkX Python package \cite{networkX}, the \texttt{erdos\_renyi\_graph} function. It takes as parameters the graph size $n$ and the probability of an edge between two nodes $p_{edge}$.
For creating Holme-Kim networks, we also use the implementation of the NetworkX package, the \texttt{powerlaw\_cluster\_graph} function. This function takes as parameters $n$, the number of nodes of the graph; $m$, the number of random edges to add for each new node; and $p$, the probability of adding a triangle after adding a random edge; and returns a semi-randomly generated graph with the given parameters. Note that if $p = 0$, the graph is a Barabasi-Albert graph \cite{Barabsi2002}. 

To test the occurrence of majority illusions on `real' social networks, we used an existing Facebook friendship network from Stanford \cite{Facebooknetwork}. The network consists of 4039 nodes and 88234 edges.\footnote{For more details on the dataset, see \href{https://snap.stanford.edu/data/ego-Facebook.html}{https://snap.stanford.edu/data/ego-Facebook.html}.}

For both the experiments with Erd\H{o}s-R\'{e}nyi and Holme-Kim networks, we generated random graphs for values of $n$ varying from 20 to 100 and 21 to 101 with steps of 20, to capture both even and odd sized networks. For the Erd\H{o}s-R\'{e}nyi networks, we used a value of $p_{edge}$ varying from 0.1 to 0.9 with steps of 0.2 (the cases where $p_{edge}=0$ and $p_{edge}=1$ are less interesting since they correspond respectively to the empty and the complete graph), and for the Holme-Kim networks we used for every value of $n$ a low, middle, and high value of $m$ (respectively 0.1, 0.5, and 0.9 times $n$), and $p$ varying from 0 to 1 with steps of 0.1. For every such combination of $n$ and $p_{edge}$ (respectively $n$, $m$, and $p$), we generated 5000 Erd\H{o}s-R\'{e}nyi (respectively Holme-Kim) graphs and colored them randomly in red and blue, with the probability $p_{blue}$ of a node being colored blue varying from 0.1 to 0.5 with steps of 0.1 (so 1000 random graphs for every value of $p_{blue}$).
Also the Facebook network was colored randomly 1000 times for every value of $p_{blue}$ between 0.1 and 0.5, in steps of 0.1.

For every such generated colored graph, we check whether it contains a \texttt{Mm/Mw/Wm/Ww} 
illusion:
we count the number of nodes under \texttt{m/w} illusion and  for each node we measure its error
.
The error is measured as the difference between the proportion of blue/red nodes globally and the proportion of blue/red nodes in the node's neighborhood, which describes \emph{how wrong} a node is about the proportion of blue and red, rather than just whether it is wrong or not about the majority.
As a global measure of error, we will use the mean squared error of a group of nodes. The reason for this is that if we view the nodes as `estimators' of the global proportion of blue/red, as we do in this context, it becomes natural to measure the quality of these estimators: `how wrong' the nodes are. The mean squared error is a standard way to assess the accuracy of estimators, penalizing positive and negative errors by the same amount, and larger errors more than smaller errors.
We also measure some other properties of the graph 
to see whether or not there is any correlation between those properties and the fraction of nodes under illusion. Some of these network properties are directly motivated from our hypotheses: 

\begin{itemize}
    \item \emph{Global fraction of blue/red nodes} (naturally correlated with $p_{blue}$).
    \item Network \emph{centrality}: a global measure of inequality of the local centralities (degree, eigenvector centrality \cite{Landau1895}, closeness centrality, and betweenness centrality) of nodes in the network.  Both global and local centrality measures (except for eigenvector centrality) are explained in \cite{Freeman1978}, and computed using the respective NetworkX local centrality functions. 
    \item \emph{Degree assortativity coefficient}: how much nodes are connected to nodes with similar degree. The value of the coefficient is between -1 (disassortativity: negative correlation between the degrees of neighbors) and 1 (assortativity: positive correlation between the degrees of neighbors). The coefficient is calculated using NetworkX's \texttt{degree\_assortativity\_coefficient}, which uses \cite[Equation (21)]{Newman2002Deg_assort}.
\end{itemize}
Others are standard network measures: the \emph{average path length}, the \emph{clustering coefficient},  the fraction of nodes in the \emph{largest component} of the network (only for Erd\H{o}s-R\'{e}nyi graphs, since  Holme-Kim graphs consist of only one component), and the \emph{homophily} \cite{McPherson2001, Easley2010-NCM} regarding the colors of the nodes\footnote{Homophily is measured as the difference between the probability of a two-colored edge in a random graph with the same number of blue and red nodes as in the current graph, and the actual fraction of two-colored edges in the current graph. A high homophily means that there are more links between same-colored nodes than expected if the links were formed randomly, and a low homophily means that there are less links between same-colored nodes than expected.}.

All experiments were coded in Python 3.10.5, using the NetworkX package \cite{networkX}, and run on 
an Intel Core i7 processor running at 3GHz with 16 GB of RAM, using Windows 10 as operating system. Statistical analyses were performed in R.
All code can be found on \href{https://github.com/MaaikeLos/Majority_Illusions}{https://github.com/MaaikeLos/Majority\_Illusions}.

\subsection{Results}
We report below the most interesting results. For a more complete analysis, we refer to the R code. Some additional plots that are helpful for a deeper understanding of our results are provided in 
Appendix \ref{app:experiment_extras}.
\paragraph{General observations}
Figure \ref{fig:med_frac_nodes_ill} shows for the three different graph types the median and interquartile range of the fraction of nodes that are under \texttt{m/w} illusion per network, over all parameter values together. We observe that \texttt{m} illusions are scarce both in Erd\H{o}s-R\'{e}nyi and Holme-Kim networks. Furthermore, we see that in Holme-Kim networks, \texttt{w} illusions occur slightly more often (median 2.5 percent of all nodes) than in Erd\H{o}s-R\'{e}nyi graphs (median 0 percent of the nodes), and that in the Facebook network nodes are more frequently under \texttt{m} (4 percent) and \texttt{w} (7.5 percent) illusion than in both random graph types. This could point to a tendency of more illusions in real networks than in random graphs. However, since the parameters of the different graph types are incomparable, we should be careful concluding such relation. 

\begin{figure}[h]
    \centering
    \includegraphics[width=0.8\linewidth]{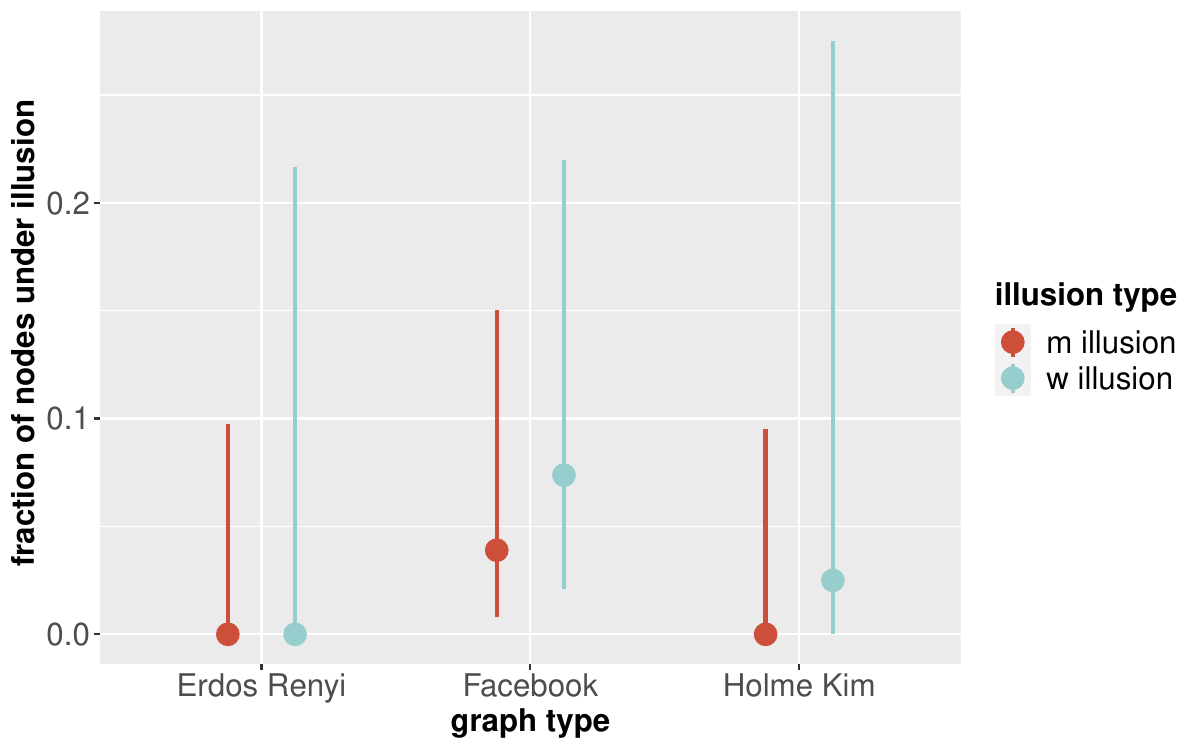}
    \caption{Median and interquartile range of fraction of nodes under \texttt{m/w} illusion, per graph type.}
    \label{fig:med_frac_nodes_ill}
\end{figure}

Figure \ref{fig:WMwmi} shows the fraction of networks in the experiment under different types of illusion. 
We find that in Erd\H{o}s-R\'{e}nyi networks it is very unlikely that at least half of the nodes are under \texttt{m} illusion, and that in approximately 7 percent of the networks at least half of the nodes are under \texttt{w} illusion. In Holme-Kim networks illusions occur more often: approximately 6 percent of the networks is in a \texttt{M/W}\texttt{m} illusion, and around 15 percent has a \texttt{M/W}\texttt{w} illusion. The Facebook network is in between the two random network types: for around 2 percent of the colorings the network is in a \texttt{Mm} illusion, while in approximately 13 percent there is a \texttt{Mw} illusion.
\begin{figure}
    \centering
    \includegraphics[width=0.8\linewidth]{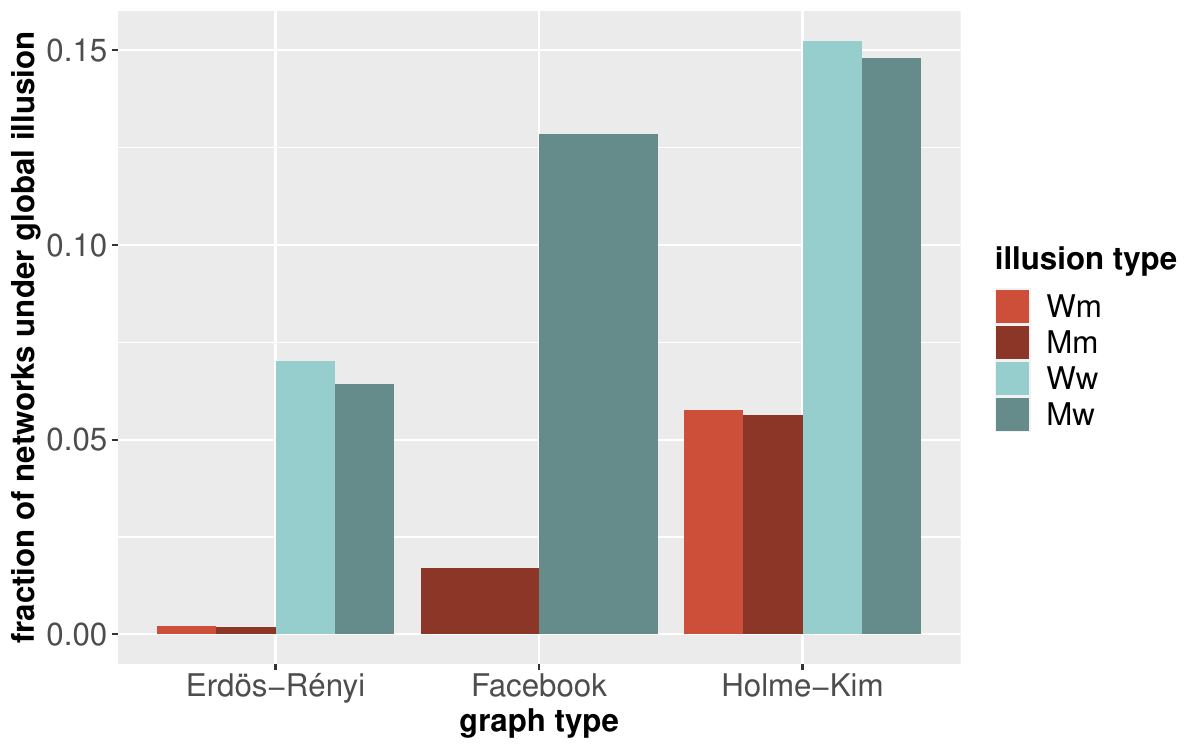}
    \caption{Fraction of networks under illusion per graph type.}
    \label{fig:WMwmi}
\end{figure}

Figure \ref{fig:MSE} shows the mean squared error of different groups of nodes for the three network types, which gives us a quantitative measure of the illusion of nodes. Indeed, nodes with \texttt{m} illusion have a larger error than nodes only under \texttt{w} illusion, which again have a larger error than nodes without illusion. This effect appears to be strongest in the Facebook network, where the mean squared error of nodes with \texttt{m} illusion is approximately 20 times higher than that of nodes with no illusion, while in Erd\H{o}s-R\'{e}nyi and Holme-Kim networks the mean squared error of nodes with \texttt{m} illusion is respectively 10  and 4 times as high as that of nodes with no illusion.

\begin{figure}
    \centering
    \includegraphics[width=0.8\linewidth]{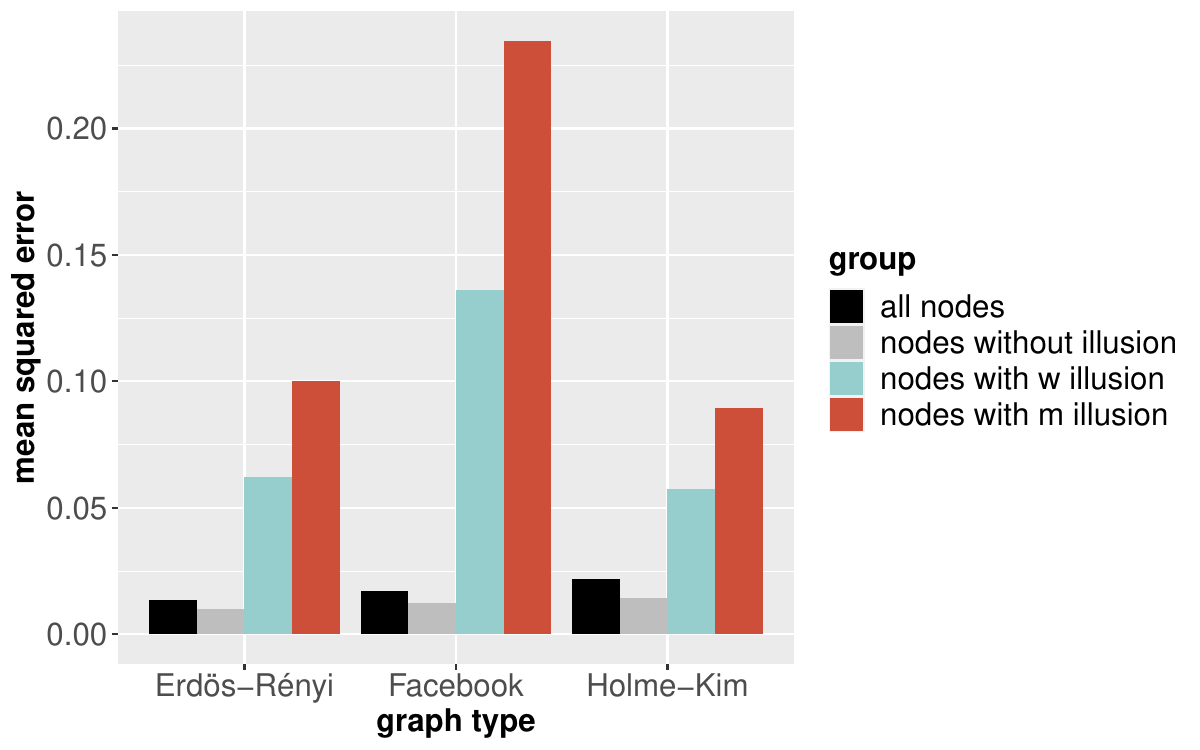}
    \caption{Mean squared error of different groups of nodes
    , per graph type.}
    \label{fig:MSE}
\end{figure}

We consider the pairwise Pearson correlation between the fraction of nodes under illusion, the mean squared error of nodes, and the other graph properties. The full correlation matrices 
can be found in Appendix \ref{app:experiment_extras}, in 
Figure \ref{fig:cormatER} (Erd\H{o}s-R\'{e}nyi graphs), Figure \ref{fig:cormatHK} (Holme-Kim graphs), and \ref{fig:cormatFB} (Facebook network). Table \ref{tab:correlation_table} shows a summary.
\begin{table}[t]
    \caption{Pearson correlation coefficient between the fraction of nodes under \texttt{m/w} illusion and the mean squared error (MSE) of nodes, and different parameters and graph properties. }\label{tab:correlation_table}
    \centering
    \begin{tabular}{|r||c|c||c|c||c|c|}
    \hline
     & \multicolumn{2}{c||}{Erd\H{o}s-R\'{e}nyi graphs} & \multicolumn{2}{c||}{Holme-Kim graphs}& \multicolumn{2}{c|}{Facebook network} \\ \hline
         & \textbf{frac.   ill.} & \textbf{MSE} & \textbf{frac.   ill.} & \textbf{MSE }& \textbf{frac.  ill.} & \textbf{MSE }  \\ \hline \hline
        $p_{edge}$ &  -0.38 & -0.6 & - & - & - & - \\ \hline
        $m$  &  - & - & -0.08 & -0.14
        \footnotemark[2] & - & -\\ \hline
        $p$  &  - & - & 0 & 0.01& - & - \\ \hline
        prop. blue/red globally & 0.63 & 0.15
        \footnotemark[1]& 0.57 & 0.2 & 0.88 &0.88\\ \hline
        $n$ & -0.17 &-0.32 & -0.16 & -0.34& - & - \\ \hline
        clustering coefficient & -0.38 & -0.6 & -0.2 & -0.34 & - & -\\ \hline
        degree centrality & 0.09 & 0.09 & 0.07 & 0.17 & - & - \\ \hline
        betweenness centrality & 0.38 & 0.81 & 0.25 & 0.5 & - & - \\ \hline
        degree assortativity & -0.12 & -0.24 & -0.08 & -0.18 & - & - \\ \hline
    \end{tabular}
    \footnotetext[1]{Note that although the general MSE has a small positive correlation with the proportion of blue/red globally, the MSE of nodes with illusion has a quite strong negative correlation with the proportion of blue/red globally. See Figure \ref{fig:cormatER}.}
    \footnotetext[2]{Figure \ref{fig:MSE_HK} shows that this is not straightforward negative correlation, but that MSE is high for low and high values of m, and low for middle values of m.}
\end{table}
From these Pearson correlations we can learn the following about our initial hypotheses. 
\paragraph{Graph parameters}
 In Erd\H{o}s-R\'{e}nyi graphs, as expected in H1, the more connected a network is (the higher $p_{edge}$\footnote{And thus the higher the global clustering coefficient, which is completely dependent on $p_{edge}$.}), the lower the fraction of nodes under illusion and the mean squared error of nodes. 
 In Holme-Kim graphs, however, we did not find such negative correlation between either $m$ (the number of random edges added for each new node) or the probability $p$ of adding a triangle after adding a random edge (the parameter that determines the average clustering coefficient of the network), and the fraction of nodes under illusion / the mean squared error
 . However, there is a small negative correlation between the \emph{global} clustering coefficient and the fraction of nodes under illusion / the 
 mean squared error. 
 
 In all simulations we observed that when the global proportions of blue and red were closer to 0.5 than to 0 and 1,
\texttt{Mm/Mw/Wm/Ww} 
illusions are more likely to happen, 
on average more nodes are under illusion,
and nodes have a larger error.
This is in line with the hypothesis H2 that in networks where the two colors are occurring equally often, it is easier for a node to be under illusion than in networks where there are more nodes of one color than of the other.

Both for Erd\H{o}s-R\'{e}nyi networks and for Holme-Kim networks, we observe a trend that in larger networks, 
illusions (both on the network level and on the node level) are slightly less likely and nodes have a smaller error. For Erd\H{o}s-R\'{e}nyi networks, this is in line with our hypothesis\footnote{And, indeed, there is a negative correlation between the network size and the degree centrality (see Figure~\ref{fig:cormatER}, however not between degree centrality and the amount or size of illusions.} (H3), for Holme-Kim networks it is not\footnote{Moreover, we see almost no correlation between network size and degree assortativity (see Figure~\ref{fig:cormatHK}), and degree assortativity is only lightly correlated with amount and size of illusions.} (H4). However, further experiments are necessary to know whether this trend continues for networks larger than 100 nodes.

\paragraph{Centrality}
We considered different measures of network centrality, of which degree centrality is the most commonly used one.  We expected in H3 a positive correlation between network centrality and likelihood and strength of illusions.
Indeed, we found a positive correlation between some of the centrality measures and the fraction of nodes under illusion / the mean squared error of nodes in Erd\H{o}s-R\'{e}nyi graphs, but surprisingly not for \emph{degree centrality}, the most straightforward measure of centrality which is considered in the line of literature by Centola et al \cite{Centola2017, Centola2021, Centola2022}. In Holme-Kim graphs, we found substantial correlation only between \emph{betweenness} centrality (the variance in nodes' local betweenness centrality, which is 
the number of 
shortest paths between pairs of nodes in the network 
on which
the node lies) and the amount / size of illusion.
Our hypothesis H3 is therefore neither directly confirmed nor rejected, and further analysis of the connection between the different types of centrality and majority illusions is necessary for a clear result.

\paragraph{Degree assortativity}

We found, in line with H4, a small negative correlation between the degree assortativity coefficient of an Erd\H{o}s-R\'{e}nyi graph and both the fraction of nodes under illusion in the graph and their error. 
However, since we observed that  the correlation is more complex
, we cannot accept H4 straightforwardly. With a degree assortativity coefficient close to 0, the fraction of nodes under illusion and the mean squared error are close to 
0, with lower or higher degree assortativity the fraction of nodes under illusion and the mean squared error are higher (see for an illustration Figure \ref{fig:deg_assort} in Appendix \ref{app:experiment_extras}). This means that if there is no correlation between nodes' degree and the degree of their neighbors, illusions are less likely and smaller than when there is either positive or negative correlation between the degrees of nodes and their neighbors. This is different from 
our hypothesis H4 based on 
the result by Lerman \cite{Lerman2016} (that higher degree assortativity would correlate with less/smaller illusions).
In Holme-Kim graphs, the correlation is even smaller, 
and we did not find a clear pattern in the relation, but note that we have almost no data for positive values of the degree assortativity coefficient: apparently the degree assortativity is mostly negative in Holme-Kim graphs. Therefore, more elaborate experiments would be necessary to study the relation between degree assortativity and majority illusions in Holme-Kim graphs.

\paragraph{Further observations}
We wondered whether there would be a relation between the \emph{homophily}
regarding nodes' colors, and the size and amount of majority illusions in a graph. One would expect that homophily decreases the probability and size of illusions, since it makes the majority of nodes more likely to see the correct majority. In Holme-Kim graphs we found such negative correlation but only small. Also, with homophily around 0, there are less and smaller illusions that with either higher or lower homophily (see for an illustration Figure \ref{fig:homophily} in Appendix \ref{app:experiment_extras}). However, since in our experiments homophily was only a measured value rather than an input variable, we do not have a good representation of graphs with different homophily values. Conducting experiments in which networks are deliberately colored to have different degrees of homophily would be necessary to form any conclusion about the relation between homophily and majority illusions. This is an interesting direction for further research.

Next, in Erd\H{o}s-R\'enyi graphs, we observe a quite high positive correlation (0.79) between the \emph{average path length} in the graph and the mean squared error of the nodes. We observed small positive correlations (between 0.2 and 0.35) between the average path length and the fraction of nodes under illusion in Erd\H{o}s-R\'enyi and Holme-Kim graphs, and the mean squared error in Holme-Kim graphs. We do not see a direct explanation for this, but note that the average path length is negatively correlated to the clustering coefficient and positively correlated to centrality measures.
Similarly, the negative correlation between the \emph{fraction of nodes in the largest component} of the graph and the mean squared error and fraction of nodes under illusion in Erd\H{o}s-R\'enyi graphs could possibly be explained by its negative correlation with centrality measures.

\section{Generalizations of Majority Illusions}\label{sec:generalizations}
Although  
the majority illusion is the only illusion 
discussed in the literature, other illusions are worth studying too. After all, the majority opinion around us is not the only thing that can influence us. While we might still want to assume that we are primarily influenced by the \emph{dominant} opinion in our surroundings, dominant can be understood in different ways, `majority' is only one way. 
Another way to define dominance is by means of a threshold: if we are not that easily convinced maybe a majority is not enough to influence us, and we only adopt an opinion if more than, for example, 2/3 of the people around us shows the opinion. In this way we can define $q$-illusions, where $q$ is the threshold of the illusion.
This is a generalisation of the \texttt{m} illusion: with a threshold of 1/2 the threshold is normal majority. 
Another generalization arises when there are more than 2 options. Then a natural conception of dominance is plurality: the option that occurs most often. One is under plurality illusion if the option that one observes most often in their neighborhood is different from the option that occurs most often in the population. Again this boils down to majority when there are only two options, then the option that occurs most often is also the majority option.
Theoretically, a way to define dominance is equivalent to a voting rule: which option would win if every node would vote for their option? Therefore, we can give a general definition of illusions based on voting rules: if, according to this voting rule, another option wins among my direct neighbors than in the total population, I am under illusion of this voting rule. And if we choose as voting rule `majority', we are back to \texttt{m} illusions.  
In this second part of this article we explore the general definition of illusions, and more specifically the illusions for quota rules and plurality. We make a start with the study of which networks allow for which kinds of illusions.

\paragraph{Preliminaries}
In this section we will consider generalizations of the definition of \texttt{Mm} illusions on the same class of graphs we studied so far: irreflexive, symmetric, finite, simple graphs.
In the definition of `Majority-majority illusions', the first `Majority' refers to the fraction of agents that is under illusion, the second `majority' refers potentially to the method with which 
opinions are aggregated.
Hence, we see the first as a quotum (the number of agents necessary to be under illusion before we say the network as a whole is under illusion) which we can generalize to arbitrary quota, and the second as a voting rule, which we can generalize to arbitrary voting rules. The intuition behind this second generalization is that agents can be influenced by their neighbors in different ways, not only by adapting the opinion that at least a majority of their neighbors have.

As mentioned before, many of the previous results also hold on graphs with more than two colors, but since many voting rules are similar or even equivalent to majority in the case with only 2 colors, we here explicitly generalize to any number $k$ of colors. Hence, we generalize the concept of \texttt{Mm} illusions on 2-colored graphs to $p$-$\mathcal{R}$ illusions on $k$-colored graphs for any fraction $p$ (not just $\frac{1}{2}$), any voting rule $\mathcal{R}$ (not just majority), and any number of colors $k$ as follows. Given a set $S$ of agents and a coloring $c$, a voting rule $\mathcal{R}: S\to 2^{\mathcal{C}}$ is a function outputting a set of winning colors. When several colors are winning, we call this a `tie'. 

\begin{definition}[General definition for illusions] \label{def:general_illusions}
    Given a $k$-colored graph $ C = \langle V, E, c\rangle$,  agent $i\in V$ is under
    \begin{itemize}
        \item $\mathcal{R}$-illusion for a voting rule $\mathcal{R}$ if $\mathcal{R}(N_i)\cap \mathcal{R}(V) = \emptyset$ and not  $\mathcal{R}(N_i) =\mathcal{R}(V) = \emptyset$;
        \item weak-$\mathcal{R}$-illusion if $\mathcal{R}(N_i) \not = \mathcal{R}(V)$.
    \end{itemize}
    A $k$-colored graph $ C = \langle V, E, c\rangle$ is in a  
    \begin{itemize}
        \item $p$-(weak-)$\mathcal{R}$-illusion if more than a $p$-fraction of the group is under (weak-)$\mathcal{R}$-illusion;
        \item weak-$p$-(weak-)$\mathcal{R}$-illusion if at least a $p$-fraction of the group is under (weak-)$\mathcal{R}$-illusion.
    \end{itemize}
\end{definition}
In this paper we considered so far $\mathcal{R}\in \{ \mathrm{unanimity, majority}\}$,
and we will consider in this section  $\mathcal{R}\in \{\mathrm{plurality}, q\text{-}\mathrm{quota}\}$
. One could study illusions with any other voting rule, but in the context of opinion diffusion we think those are the most natural rules to consider. 
Furthermore, these two rules are natural extensions of the majority rule: with only two colors plurality is equivalent to majority, while for 
$q=\frac{1}{2}-\frac{1}{2|V|}$ the $q$-quota rule is equivalent to majority\footnote{One would expect majority to be equivalent to the $q$-quota rule for $q=\frac{1}{2}$, but since we consider ties in majority as cases where there are two winners instead of no winners, majority is equivalent to the $q$-quota rule for $q=\frac{1}{2}-\epsilon$ where $\epsilon = \frac{1}{2|V|}$, such that $q|V|=\frac{|V|-1}{2}$. In this way, if exactly half of the nodes are one color, still that color is a $q$-quota winner, what it would not be with $q=\frac{1}{2}$.}.
\begin{remark}
    In the case where there are only two colors and the rule is majority, Definition \ref{def:general_illusions} reduces to Definition \ref{def:maj-ill-strict}. Mind however that one cannot use directly the phrasing of Definition \ref{def:maj-ill-strict} for more colors, because there can be local ties without overlap with the global majority (or vice versa). 
    For example, in case $M_{N_i}=\{green, blue\}$ (half of the neighbors are green and half of them are blue) and $M_V=\{red\}$,  
    agent $i$ is completely mistaken about the global majority, and therefore is under strict \texttt{m} illusion, as captured by  Definition \ref{def:general_illusions}. However, 
    if one were tempted to apply Definition \ref{def:maj-ill-strict} as it is stated but for more colors, 
    one would think that agent $i$ is not under strict \texttt{m} illusion, because  $M_{N_i}=tie$. 
    Since we only considered two colors in the first part of this paper, we chose to use the more readable Definition \ref{def:maj-ill-strict} and avoid introducing unnecessary details, but we could have defined \texttt{m} illusions for any number of colors: 

    \begin{definition*}[Definition \ref{def:maj-ill-strict} for multiple colors]
        Given a colored graph $C=\langle V, E, c\rangle$, an agent $i\in V$ is under majority illusion (\texttt{m} illusion) if $M_{N_i}$ is not a tie 
        containing any winner in  $M_V$, and $M_V$ is not a tie 
        containing any winner in  $M_{N_i}$, and $M_V\not = M_{N_i}$. A graph is in a \emph{Majority-majority illusion} (\texttt{Mm} illusion) if more than half of the agents are under majority illusion.
    \end{definition*}
    Clearly, this definition is equivalent to Definition \ref{def:general_illusions} for $\mathcal{R}=M$ (and to Definition \ref{def:maj-ill-strict} for only two colors). 
    
\end{remark}

\begin{remark}
    Instead of defining just `weak' and `strong' illusions as in Definition \ref{def:general_illusions}, one could argue that weak illusions come in many gradations. For example, if there is a lot of overlap between the local and global winner set, you are `less wrong' than if there is only very little overlap. Also in some situations it might matter whether your local winner set is a subset or a superset of the global winner set (whether there are more candidates that you think that win but do not globally win, or more candidates that you think loose that actually globally win).
\end{remark}
We can also generalize the definition for opposition and graph colorings  to arbitrary number of colors $k$ and arbitrary voting rule $\mathcal{R}$:
\begin{definition}[Weak $\mathcal{R}$ $k$-coloring]\label{def:weak_R_coloring}
    Given a colored graph $C=\langle V, E, c \rangle$, an agent $i\in V$ is under \emph{weak $\mathcal{R}$ opposition} if there is a color $c'$ such that $c'\not = c_i$, and $c'\in \mathcal{R}(N_i)$ (we could also just say: $\mathcal{R}(N_i)\not = \{c_i$\}). A \emph{weak $\mathcal{R}$ $k$-coloring} of a graph is a $k$-coloring such that all the nodes are under weak $\mathcal{R}$ opposition: for each $i\in V: \mathcal{R}(N_i)\neq \{c_i\}$.
\end{definition}

\subsection{Quota Rule Illusions ($q$-illusions)}\label{app:pq-illusions}
As a first example of illusions with another voting rule, we will consider
quota rules, which generalize majority. 
A color 
$c$ is a winner according to the $q$-quota rule iff more than a $q$-fraction of all nodes have color $c$. 

Even though all results except for Proposition \ref{prop:complete-p-q} in this section hold directly for more colors than just 2, for the ease of presentation we use just 2 colors here 
and consider $q\geq \frac{1}{2}$. For more than two colors, also lower values of $q$ could be considered. 
Below we specify Definitions \ref{def:general_illusions} and \ref{def:weak_R_coloring} for quota rules for $q\geq \frac{1}{2}$ These are not different from Definitions \ref{def:general_illusions} and \ref{def:weak_R_coloring}, but just easier to work with for quota rules.

\begin{definition}[$q$ illusion]\label{def:q-ill}
Given a 2-colored graph $ C = \langle V, E, c\rangle$, an agent $i\in V$ is under \emph{$q$ illusion} for $q\geq \frac{1}{2}$ if, for some $x\in \{red, blue\}$, 
\begin{itemize}
    \item $|\{j\in N_i \mid c_j = x\}|>q\cdot d_i$, but  $|\{j\in N \mid c_j = x\}|<q\cdot |V|$; or
    \item $|\{j\in N_i \mid c_j = x\}|<q\cdot d_i$, but  $|\{j\in N \mid c_j = x\}|>q\cdot |V|$.
\end{itemize}
\end{definition}


\begin{definition}[weak-$q$ illusion]\label{def:weak-q-ill}
Given a 2-colored graph $ C = \langle V, E, c\rangle$, an agent $i\in V$ is under \emph{weak-$q$ illusion} for $q\geq \frac{1}{2}$ if, for some $x\in \{red, blue\}$, 
\begin{itemize}
    \item $|\{j\in N_i \mid c_j = x\}|>q\cdot d_i$, but  $|\{j\in N \mid c_j = x\}|\leq q\cdot |V|$; or
    \item $|\{j\in N_i \mid c_j = x\}|\leq q\cdot d_i$, but  $|\{j\in N \mid c_j = x\}|>q\cdot |V|$.
\end{itemize}
\end{definition}

For $q=\frac{1}{2}$, Definitions \ref{def:q-ill} and \ref{def:weak-q-ill} reduce to Definitions \ref{def:maj-ill-strict} and \ref{def:maj-ill-weak}.
Additionally, note that for $q>\frac{1}{2}$,  Definitions \ref{def:q-ill} and \ref{def:weak-q-ill} are equivalent.

In the same network, different agents can be under a $q$ illusion with respect to different colors if $q >\frac{1}{2}$ (or a weak $q$ illusion if $q \geq\frac{1}{2}$), since then for both colors there can be less than (or equal to) $qn$ of the nodes in the network of that color, while in the different neighborhoods there can be more than $q$ of different colors.
We call a network illusion where all agents that are under illusion have an illusion of the same color (or all see a tie) a \emph{monochromatic} illusion, and the general case where different agents can have illusions of different colors a \emph{polychromatic} illusion or just an illusion (so the polychromatic illusion is more general and includes also the monochromatic one). 
Note that for $q =\frac{1}{2}$ a strict illusion can only be monochromatic.

We can also make the first `majority' in `majority-$q$ illusion' an arbitrary fraction of agents $p$ instead of exactly $\frac{1}{2}$, and we can define both the weak and strong version of this to study cases where either \emph{at least} $p$ of the agents are under illusion or \emph{more than} $p$ of the agents are under illusion. 
\begin{definition}[(weak)-$p$-(weak)-$q$ illusion]
A graph $ C = \langle V, E, c\rangle$ is in a \emph{$p$-(weak)-$q$ illusion} if more than a $p$ proportion of the agents is under (weak-)$q$ illusion. A graph is in a \emph{weak-$p$-(weak)-$q$ illusion} if at least a proportion $p$ of the agents is under (weak-)$q$ illusion. 
\end{definition}
We examine the possibility of $p$-$q$ illusions on complete graphs.
  
\begin{proposition}
\label{prop:complete-p-q}
    If $G=\langle V,E \rangle$ is a complete graph with $|V|= n$, 
    \begin{itemize}
        \item  a $p$-$q$ illusion is possible in $G$ iff there is an integer $x$
    , such that either 
        \begin{itemize}
            \item $q(n-1)<x\leq qn$ and $q(n-1)+1\geq n-x$ and $n-x>pn$; or 
            \item $q(n-1)+1\geq x > qn$ and $qn\geq n-x$ and $x>pn$
            . 
        \end{itemize} 
        \item A weak-$p$-$q$ illusion is possible in $G$ iff there is an integer $x$ such that either
        \begin{itemize}
            \item $q(n-1)<x\leq qn$ and $q(n-1)+1\geq n-x$ and $n-x\geq pn$; or 
            \item $q(n-1)+1\geq x > qn$ and $qn\geq n-x$ and $x\geq pn$
            .
        \end{itemize} 
        \item  A $p$-weak-$q$ illusion is possible in $G$ iff there is an integer $x$ such that either 
        \begin{itemize}
            \item $q(n-1) < x \leq qn$ and $n-x > pn$; or 
            \item $q(n-1) +1 \geq x > qn$ and $x > pn$
            .
        \end{itemize}
        \item  A weak-$p$-weak-$q$ illusion is possible in $G$ iff there is an integer $x$ such that  either
        \begin{itemize}
            \item $q(n-1) < x \leq qn$ and $n-x \geq pn$; or 
            \item $q(n-1) +1 \geq x > qn$ and $x \geq pn$
            .
        \end{itemize}
    \end{itemize}
    If there exists such $x$, we can generate the respective illusions by coloring exactly $x$ arbitrary nodes in the color of the illusion.
\end{proposition}

The proof of Proposition \ref{prop:complete-p-q}  consists of writing out and analysing the conditions under which the network is under the respective illusion, and can be found in Appendix \ref{app:propcompletepq}.

Proposition \ref{prop:c} can be written in a sligtly more general way:
\begin{proposition}\label{prop:c-general}\footnote{We can make a similar theorem for general $q$
, but then we only get that $d\leq n-2$ for the possibility of an illusion, which we already knew because for $d+1=n$ we have a complete network in which a (strict) illusion is not possible.}
        If a 2-colored $d$-regular graph $G=\langle V,E,c \rangle$ with $|V|=n$ is in a (weak-)$p$-majority illusion (for any $p>0$), then $n$ and $d$ must satisfy
        \begin{itemize}
            \item $d\leq n-4$  if $n$ and $d$ are even; 
            \item $d\leq n-3$  if one of $n$ and $d$ is even and one is odd.
        \end{itemize}
    \end{proposition}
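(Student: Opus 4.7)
The plan is to reduce the claim to a counting inequality about the minimum number of minority-colored nodes needed to support even a single majority illusion. First I would observe that being in a (weak-)$p$-majority illusion for some $p>0$ is equivalent, for small enough $p$, to the existence of at least one agent $i\in V$ under (strict) majority illusion, so it suffices to derive the bounds from this much weaker hypothesis. Without loss of generality I then take $M_V = red$, so that the number $b$ of blue nodes satisfies $b < n/2$.

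The key lower bound on $b$ would come from $i$ itself. Since $i$ is under majority illusion we have $M_{N_i}=blue$, so strictly more than $k/2$ of $i$'s $k$ neighbors are blue, i.e., at least $\lfloor k/2\rfloor + 1$ of them. Since these are distinct blue vertices of $G$, this gives $b \geq \lfloor k/2\rfloor + 1$; if $i$ is itself blue, the bound only sharpens to $b \geq \lfloor k/2\rfloor + 2$, so the displayed inequality holds in either case.

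To finish, I would combine this with the parity-dependent upper bound, namely $b \leq n/2 - 1$ when $n$ is even and $b \leq (n-1)/2$ when $n$ is odd, and then split on the parity of $k$; this directly yields the two listed cases ($k\leq n-4$ in the even/even case and $k\leq n-3$ in the mixed-parity cases). The remaining parity combination, $n$ and $k$ both odd, is ruled out from the outset by the handshake lemma, since $nk = 2|E|$ must be even, so no such $k$-regular graph exists at all. I do not expect any genuine obstacle here; the only care point is consistently tracking strict versus non-strict inequalities when translating ``more than $k/2$'' and ``more than $n/2$'' into integer bounds, which is exactly where the gap between $n-4$ and $n-3$ originates.
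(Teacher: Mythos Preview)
Your proposal is correct and follows essentially the same approach as the paper: reduce to the existence of a single agent under majority illusion, lower-bound the number of minority-colored nodes by $\lfloor k/2\rfloor+1$ from that agent's neighborhood, upper-bound it by the parity-dependent value coming from $b<n/2$, and combine. Your added remarks about the handshake lemma excluding the odd/odd case and about $i$ possibly being blue are minor extras the paper leaves implicit, but the core argument is the same.
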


\begin{proof}
    Assume 
    $G$ is in a (weak-)$p$-majority illusion. 
    Then there is at least one agent under \texttt{m} illusion, so more than $\frac{d}{2}$ of its neighbors have color $c$, but less than $\frac{n}{2}$ of the agents in the entire network have color $c$. Hence, this agent has at least $\frac{(d+1)}{2}$ neighbors of color $c$ if $d$ is odd and at least $\frac{(d+2)}{2}$ if $d$ is even. 
    But in total, there are less than $\frac{n}{2}$ nodes of color $c$, so at most $\frac{n-2}{2}$ if $n$ is even and at most  $\frac{n-1}{2}$ if $n$ is odd. Hence we must have: 
    \begin{itemize}
        \item if $n$ and $d$ are even:  $\frac{(d+2)}{2} \leq \frac{n-2}{2}$, so  $d\leq n-4$; 
        \item if $n$ is even and $d$ is odd: $\frac{(d+1)}{2} \leq \frac{n-2}{2}$, so  $d\leq n-3$;
        \item if $n$ is odd and $d$ is even:   $\frac{(d+2)}{2} \leq \frac{n-1}{2}$, so $d\leq n-3$.
    \end{itemize}
\end{proof}

Proposition \ref{prop:d} can be generalized for arbitrary $p$ and $q$ as follows:
   \begin{proposition}\label{prop:d-general}
    If a 2-colored $d$-regular graph $G=\langle V,E,c \rangle$ with $|V|=n$ is in a monochromatic $p$-$q$ illusion (for $0\leq p\leq 1$ and $0\leq q \leq 1$), then for any $n>1$, $d>1$, $0 < q\leq 1$, $p$ must satisfy $0\leq p<\frac{dn}{(n+1)(d+1)}$.
    \end{proposition}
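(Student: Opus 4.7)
The plan is to bound the number $a = |I|$ of illusion agents by a double-counting argument on incidences between $I$ and the set $X$ of $x$-colored vertices, where $x$ is the common illusion color (which exists because the illusion is monochromatic). Write $y = |X|$. By Definitions~\ref{def:q-ill} and~\ref{def:weak-q-ill}, we have $y < qn$, $a > pn$, and $|N_i \cap X| > qk$ for every $i \in I$. The only nontrivial algebraic input the plan requires is an integer-sharpening of the strict inequality on each illusion agent, combined with a degree-sum upper bound from $k$-regularity.

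The first move is to sharpen the strict inequality on each illusion agent using integrality. For every $i \in I$ the chain $\frac{y}{n} < q < \frac{|N_i \cap X|}{k}$ holds, which on cross-multiplication gives $n\,|N_i \cap X| > yk$. Because $n, k, y$ and $|N_i \cap X|$ are positive integers, this sharpens to $n\,|N_i \cap X| \geq yk + 1$. Summing over $i \in I$ yields
\[
n \sum_{i \in I} |N_i \cap X| \;\geq\; a(yk + 1).
\]
For the matching upper bound, I would swap the order of summation and rewrite $\sum_{i \in I} |N_i \cap X| = \sum_{j \in X} |N_j \cap I|$; by $k$-regularity each inner term is bounded by $|N_j| = k$, so the sum is at most $yk$. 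Substituting gives $a(yk + 1) \leq nyk$, equivalently $a \leq \frac{nyk}{yk+1}$.

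The final step is to turn this into the claimed bound $p < \frac{kn}{(n+1)(k+1)}$, and this is the part I expect to be the main obstacle. Substituting $a > pn$ directly into $a \leq \frac{nyk}{yk+1}$ only yields $p < \frac{yk}{yk+1}$, which matches the target only when $y \leq \frac{n}{n+k+1}$. To close the gap I would bring in additional structural input to control $y$: namely, use $k$-regularity together with the monochromaticity constraint (so that non-illusion vertices satisfy the reverse bound $|N_v \cap X| \leq qk$) to write the sharpened balance $\sum_{v \in V} |N_v \cap X| = yk$ as the identity $yk \geq a(\lfloor qk\rfloor+1) + (n-a)\cdot 0$ combined with $yk \leq ak + (n-a)\lfloor qk\rfloor$, and then eliminate the parameter $q$ by minimizing the right-hand side over admissible $q$. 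I would likely need a short case split on whether $y \leq \frac{n}{n+k+1}$ or not, handling the large-$y$ regime by the extra constraint that the $a$ illusion agents must be supported by edges inside $X$ as well as across the cut, which scales with $(k+1)$ rather than $k$ because each supporting vertex's own edge budget must accommodate it plus its illusion-neighbor role. The algebra should reduce to $a(n+1)(k+1) \leq n^2 k$, which is precisely the claimed bound on $p$.
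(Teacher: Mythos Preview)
Your double-counting frame is the same as the paper's, but the sharpening you choose is the wrong one, and the gap you identify at the end is real and not closable along the lines you sketch.

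The paper never introduces $y$ into the lower bound. It sharpens the two strict inequalities \emph{separately, keeping $q$ and $p$ in play}: from $a>np$ it passes to $a\ge (n+1)p$ (with a footnote about divisibility), and from $|N_i\cap X|>qk$ it passes to $|N_i\cap X|\ge (k+1)q$. Summing the second over $I$ and bounding from above exactly as you do gives
\[
(n+1)p\,(k+1)q \;\le\; \sum_{i\in I}|N_i\cap X| \;\le\; yk \;<\; qnk,
\]
and dividing through by $q>0$ yields $p<\frac{kn}{(n+1)(k+1)}$ in one line. The factors $(n+1)$ and $(k+1)$ in the denominator come precisely from these two sharpenings.

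Your route instead chains $\tfrac{y}{n}<q<\tfrac{|N_i\cap X|}{k}$ and then sharpens the resulting integer inequality $n\,|N_i\cap X|>yk$. That step is valid, but by eliminating $q$ you also eliminate the leverage that lets the $q$'s cancel at the end: the upper bound $yk<qnk$ still carries a factor of $q$, while your lower bound no longer does. The outcome $p<\tfrac{yk}{yk+1}$ is a true inequality but, as you correctly compute, it implies the target only when $y(n+k+1)\le n$, i.e.\ only in the degenerate case $y=0$. Your proposed patches---a case split at that degenerate threshold, invoking non-illusion vertices, an informal ``scales with $k{+}1$'' heuristic---do not supply the missing $(n+1)(k+1)$ factor; there is no further structural constraint available beyond the degree bound you already used. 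The fix is simply to retain $p$ and $q$ through the sharpening step rather than replacing $q$ by $y/n$.
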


\begin{proof}
    Assume 
    $G$ is in a monochromatic $p$-$q$ illusion, and assume that the color of illusion is $c$. Then more than $np$ of the nodes are under $q$ illusion, so at least $(n+1)p$ nodes are under illusion\footnote{The exact value can differ depending on the divisibility of $n$ by $\frac{1}{p}$, or maybe on the common divisors of $n$ and $p$, but it always has the lower bound of $(n+1)p$.}. One of the following two holds for all nodes under illusion (since the illusion is monochromatic):
    \begin{itemize}
        \item more than $qd$ of the neighbors are colored $c$ and less than $qn$ nodes in the network are colored $c$. Then at least $(d+1)q$ of the neighbors are colored $c$. Hence, there have to be at least $(n+1)p(d+1)q$ edges to a $c$-colored node, and since every node has $d$ edges, there have to be at least $\frac{(n+1)p(d+1)q}{d}$ nodes of color $c$. However, in the total network there are less than $qn$ nodes of color $c$, so $\frac{(n+1)p(d+1)q}{d} < qn$, which, for $q>0$, boils down to $p<\frac{dn}{(n+1)(d+1)}$. 
        \item less than $qd$ of the neighbors are colored $c$ and more than $qn$ nodes in the network are colored $c$. Then more than $(1-q)d$ of the neighbors are the other color $c'$, so at least $(1-q)(d+1)$ are colored $c'$.  This means that there have to be at least $(n+1)p(1-q)(d+1)$ edges to $c'$-colored nodes, so there have to be at least $\frac{(n+1)p(1-q)(d+1)}{d}$ nodes of color $c'$. However, since in total more than $qn$ nodes of the network are $c$, less than $(1-q)n$ nodes are colored $c'$. Therefore,  $\frac{(n+1)p(1-q)(d+1)}{d}<(1-q)n$, so $p<\frac{dn}{(n+1)(d+1)}$.
    \end{itemize}
\end{proof}

In conclusion, we defined $p$-$q$ illusions as a generalization of \texttt{Mm} illusions, we gave an example of a graph type (complete graphs) on which $p$-$q$ illusions are possible with certain constraints, and we showed that some of the propositions about \texttt{Mm} illusions can be generalized to $p$-$q$ illusions.

\subsection{Plurality Illusions}
The next voting rule 
we 
consider 
is plurality ($P$), 
which is a generalization of 
majority for 
more than 2 colors. 
With the plurality voting rule, the color that occurs most often wins. Plurality is not resolute: if there are multiple colors that occur equally often but more often than all other colors, they are all plurality winners. 
We write $P_{N_i}$ for the set of plurality winners in the neighborhood of agent $i$, and $P_{V}$ for the set of global plurality winners. 

Definitions \ref{def:plurality_illusion} and \ref{def:weak-plur-coloring} are specific cases of the general Definitions \ref{def:general_illusions} and~\ref{def:weak_R_coloring}.
\begin{definition}[Plurality illusion]\label{def:plurality_illusion}
    Given a colored graph $C=\langle V, E, c \rangle$, 
    an agent $i\in V$ is under \emph{plurality illusion} if 
    $P_{N_i}\cap P_V = \emptyset$ and not $P_{N_i} = P_V = \emptyset$. 
    An agent $i$ is under \emph{weak-plurality illusion} if $P(N_i) \not = P(V)$.
\end{definition}

`A plurality of agents is under illusion' does not have a clear meaning (since the property of being under illusion is binary), so something like `plurality-plurality illusion' is not well-defined. 
Instead, for the amount of agents under illusion we consider $q$-quota rules as in the previous subsection, with $q=\frac{1}{k}$. 
Intuitively, $\frac{1}{k}$-quota rules are connected to the plurality rule: if a color is a plurality winner among $k$ colors, it is also a weak-$\frac{1}{k}$-quota winner.
\begin{definition}[$\frac{1}{k}$-plurality illusion]
     A $k$-colored graph $C=\langle V, E, c \rangle$ with $|V|=n$ is in a \emph{$\frac{1}{k}$-plurality illusion} if more than $\frac{1}{k}$ of the agents are under plurality illusion.
\end{definition}
The weak versions are defined according to Definition \ref{def:general_illusions}.


\begin{remark}
    With only two colors, the definition of $\frac{1}{k}$-plurality illusions is equivalent to that of \texttt{Mm} illusions. Indeed, if $k=2$, any agent that is under plurality illusion must have $P_{N_i}\cap P_V = \emptyset$ (so $P_{N_i}$ and $P_V$ cannot be ties and must be different), so must be under \texttt{m} illusion. Furthermore, \emph{more than $\frac{1}{2}$} of the agents are under plurality illusion iff \emph{a majority} of agents is under plurality illusion.
\end{remark}

Analogous to weak majority 2-colorings (Definition \ref{def:weak_maj_2-col}), we can define weak plurality $k$-colorings (this is Definition \ref{def:weak_R_coloring} with $\mathcal{R} = P$).
\begin{definition}[Weak plurality $k$-coloring]\label{def:weak-plur-coloring}
    Given a colored graph $C=\langle V, E, c \rangle$, an agent $i\in V$ is under \textit{weak plurality opposition} if there is a color $c'$ such that $c'\not = c_i$, and a weak plurality (there may be a tie) of $i$'s neighbors is colored $c'$. A \emph{weak plurality coloring} of a graph is a coloring such that all the nodes are under weak plurality opposition: for each $i\in V: P_{N_i}\neq \{c_i\}$.
\end{definition}
\begin{example}
The 3-colored graph $C = \langle V, E, c \rangle$ with $V = \{A, B, C, D, E\}$ as shown in Figure \ref{fig:ex:plurality_illusion} can illustrate the notions defined above. In this graph, the global plurality winner $P_V =\{red\}$. Nodes $A$ and $B$ are under weak plurality illusion because $P_{N_A} = P_{N_B} = \{blue, green, red\} \not = P_V$. Node $D$ is under plurality illusion because $P_{N_D}=\{blue, green\}$ and $\{blue, green\}\cap P_V = \emptyset$. Therefore, more than $\frac{|V|}{k}=\frac{5}{3}$ nodes are under weak plurality illusion: $C$ is under $\frac{1}{3}$-plurality illusion. Furthermore, $c$ is a weak plurality coloring since for all $i\in V: P_{N_i}\not = \{c_i\}$.
     \begin{figure}
        \centering
        \includegraphics{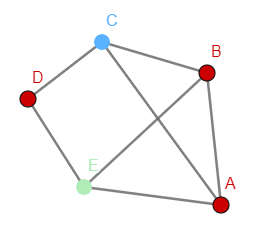}
        \caption{Example network with $\frac{1}{k}$-plurality illusion.}
        \label{fig:ex:plurality_illusion}
    \end{figure}
\end{example}

With the machinery defined above, we can generalize Theorem \ref{thm:maj-weak-maj} for $\frac{1}{k}$-plurality illusions. First we generalize Lemma \ref{lem:min-monochromatic}:

\begin{lemma}\label{lem:weak-plurality-coloring}
Let $G = \langle V,E\rangle$ 
be a graph, and let $c$ be a $k$-coloring of $G$ that minimizes the number of monochromatic edges. Then, $c$ is a weak plurality $k$-coloring of $G$.
\end{lemma}
\begin{proof}
    Let $E_{M}$ be the set of monochromatic edges 
    in graph $G$ colored by $c$. 
    Assume for contradiction that there is a node $i\in V$ that is an endpoint of strictly more monochromatic edges (we write $E_{M_i}$ for the set of such edges) than edges to any other color $d$. We call the set of edges from $i$ to color $d$ $E_{d_i}$, so we have that $|E_{M_i}|>|E_{d_i}|$ for any other color $d$.
Consider now a second $k$-coloring $c'$ of $G$ that only differs from $c$ with respect to $i$'s color, i.e., $c'$ assigns the same color as $c$ did to all nodes except for $i$: $c(i)\neq c'(i)$. Let us write $E'_M$ for the new set of monochromatic edges, and $E'_{M_i}$ and $E'_{d_i}$ for the new sets of monochromatic edges and edges to a color $d$ from $i$. Given that 
$|E_{c'(i)_i}|=|E'_{M_i}|$ and 
$|E_{M_i}|>|E_{c'(i)_i}|$ (
by construction
) we now have $|E_{M_i}|>|E'_{M_i}|$. Given that no other edge of the graph is affected by this change, the total number of monochromatic edges is smaller with coloring $c'$ than it was with $c$: $|E_M|>|E'_M|$. But since we started by assuming that $c$ was such that $|E_M|$ was minimal, this is a contradiction. 
\end{proof}

With the help of Lemma \ref{lem:weak-plurality-coloring} we can prove Theorem \ref{thm:weak_plurality} in a very similar way to Theorem \ref{thm:maj-weak-maj}:

\begin{theorem}\label{thm:weak_plurality}
    In any graph $G = \langle V,E \rangle$, a 
    $\frac{1}{k}$-weak-plurality illusion is possible.
\end{theorem}

\begin{proof}
  Let $G=\langle V,E \rangle$ be a graph and let $c$ be a $k$-coloring of $G$ 
  that minimizes the total number of monochromatic edges. By Lemma \ref{lem:weak-plurality-coloring}, $c$ is a weak plurality $k$-coloring of $G$. 
    There are two cases: 
    \begin{itemize}
        \item $|P_V|=1$ (there is only one plurality winner, no tie). 
        Assume w.l.o.g. that  $P_V=\{red\}$, so $|\{i\in V: c_i = red\}|>\frac{|V|}{k}$. Since $c$ is a weak plurality coloring, for any red vertex $i$, $P_{N_i}\neq \{red\}$, and therefore $P_{N_i}\neq P_V$. Hence,  more than $\frac{|V|}{k}$ of the nodes (all the red ones) are under (possibly weak)  plurality illusion: we have a $\frac{1}{k}$-weak-plurality illusion. 
        \item $|P_V| >1$. 
        There are two cases:
        \begin{itemize}
        \item  If $|\{i\in V: P_{N_i}\neq P_V\}| > \frac{|V|}{k}$, we have a $\frac{1}{k}$-weak-plurality illusion.
        \item  Otherwise,  
        (if $|\{i\in V: P_{N_i}= P_V\}| \geq \frac{|V|}{k}$)
        there are two cases:
            \begin{itemize}
                \item If there is a node $j$ with $P_{N_j} = P_V$ and $c_j\in P_V$: choose any such $j$ and define a new coloring $c'$ that is equal to $c$ for all nodes except for $j$:  $c'_j\neq c_j$. Since $j$ has the same number of neighbors of any color in $P_V$, this does not change the total number of monochromatic edges in the graph. Therefore,  $c'$ is also a coloring that minimizes this number. Hence, by Lemma \ref{lem:weak-plurality-coloring}, $c'$ is also a weak plurality $k$-coloring of $G$. Now, we have $P_V = \{c'_j\}$, and we can apply the logic of the first case: Assume w.l.o.g. that  $c'_j=red$. Since $c'$ is a weak plurality coloring, for any red vertex $i$, $M_{N_i}\neq red$. It follows that more than $\frac{|V|}{k}$ of the nodes has $P_{N_i}\neq P_V$:        we have a $\frac{1}{k}$-weak-plurality illusion. 
                \item If there is no such node, then for all nodes $j$ with $c_j\in P_V$  we have $ P_{N_j}\neq P_V$, so they are all under (weak) plurality illusion. But these are at least $2\frac{|V|}{k}$ nodes: $\frac{1}{k}$-weak-plurality illusion.
            \end{itemize}
        \end{itemize}
    \end{itemize}
\end{proof}


The results on \texttt{m} illusions with 2 colors in regular graphs (Propositions \ref{prop:c} and \ref{prop:d}) can be generalized to plurality illusions with multiple colors as well, but since the constraints become rather unreadable when the number of colors is to be added as a parameter, we decide not to include the details here. 

We finish this section with two propositions on plurality illusions in the most extreme types of regular graphs: simple cycles and complete graphs.
\begin{proposition}
\label{prop:cycles_plurality}
    In any connected 2-regular graph $G =\langle V, E\rangle$ (a simple cycle), a $\frac{1}{k}$-plurality illusion is possible iff $n>k\geq 3$ (where $k$ is the number of colors).
\end{proposition}
\begin{proof}
    By construction: Let $G =\langle V, E\rangle$ be a simple cycle, with $n\geq 4$, and let $n>k\geq 3$. Take a color $c$ and start by coloring one node $i$ in this color. Then walk around the circle while coloring every second node with the same color $c$, until there are just more than $\frac{1}{k}$ nodes colored $c$. Because of the constraints on $n$ and $k$, this is always possible. Divide the not-yet-colored nodes as evenly as possible over the remaining colors (the position of these nodes in the circle is irrelevant). 
   Now $P_V = \{c\}$, but every $c$-colored node (of which there are more than $\frac{1}{k}$) has only non-$c$-colored neighbors, and is therefore under plurality illusion: we have $\frac{1}{k}$-plurality illusion.
    Note that for $k=2$, we are in the case of \texttt{Mm} illusions, for which we already know that they are not possible on simple cycles, and for $k=n$ all nodes have a separate color, so there is a global tie over all colors and every node sees exactly two of those. 
\end{proof}

\begin{proposition}
\label{prop:complete_plurality}
    In any complete $k$-colored graph $C=\langle V, E, c\rangle$, no agent can be under plurality illusion.
\end{proposition}
\begin{proof}
    Since any node $i$ is connected to all nodes in $V$ except itself, we have that for all colors $c\in P_V\backslash\{c_i\}$, $c\in P_{N_i}$. Furthermore, if $c_i\in P_V$ but $c_i\notin P_{N_i}$, then there is another color $c'\not = c_i$ such that $|\{j\in V: c_j = c_i\}|-1 < |\{j\in V: c_j = c'\}|$
    , so $|\{j\in V: c_j = c_i\}| \leq |\{j\in V: c_j = c'\}|$: $c_i$ is not the only color in $P_V$. Hence, there must be at least one color $c'$ that is both in $P_V$ and in $P_{N_i}$: $i$ is not under plurality illusion.
\end{proof}

\begin{remark}
    With multiple colors, it could be interesting to study the case where colors can be divided into two groups (e.g. light/dark blue and light/dark red), because in real voting scenarios parties can sometimes be divided into two categories (e.g. left and right). 
    We could then study the relation between multiple-color illusions (plurality, for example) and majority-illusions.
    Our expectation is that there is less illusion with less colors, because people that were wrong about the winner when the categories were more precise can be correct if categories are combined. We leave this as a direction for future research.
\end{remark}

\subsection{Not Necessarily Irreflexive Networks}\label{sec:reflexive}

Instead of generalizing the definition of \texttt{Mm/Mw/Wm/Ww} illusions, we could also instead generalize the class of graphs. In this 
subsection 
we briefly consider a generalization to not necessarily irreflexive networks. We formalize the intuitions mentioned here in 
Appendix \ref{app:reflexive}. 

Adding reflexive loops does not influence the possibility of illusions as much as one would initially expect. However, it is slightly harder to have a \texttt{m} illusion on a graph with reflexive edges, since nodes have just more information about the true distribution of colors in the graph. If we have a strict illusion, adding any number of reflexive loops to the network
will preserve at least a weak illusion. For nodes with a high degree, intuitively adding an extra edge does not have a large influence on whether or not the node is under illusion. A node does not need many more than half of its neighbors to be of the minority color, to still be under illusion once a reflexive edge is added. 
Note also that weak illusions that are not strict are already on the edge of being no illusion because there is a tie involved: a difference of only one node (which can be caused by a reflexive edge) can change the tie into the correct majority. 
In this light it is not surprising that some results 
on irreflexive graphs
still hold if the class of graphs is extended to graphs that can have reflexive edges, but some do not. In particular, Lemma \ref{lem:2-col} and Propositions \ref{prop:2-col-maj-maj}, \ref{prop:2col-strict}, \ref{prop:k=2}, \ref{prop:complete-weak-maj-maj}\footnote{We assume `complete' means fully reflexive in this case. In fact, also \texttt{w} illusions are not possible on complete graphs, since all nodes see everything.}, \ref{prop:c}, \ref{prop:d}, 
\ref{prop:c-general}, and \ref{prop:d-general} still hold on graphs with reflexive edges. Lemmas \ref{lem:min-monochromatic} and \ref{lem:weak-plurality-coloring}, Theorems \ref{thm:maj-weak-maj} and \ref{thm:weak_plurality}, and Propositions  \ref{thm:degree-odd}, \ref{prop:complete-maj-weak-maj}, \ref{prop:complete-p-q},  \ref{prop:cycles_plurality}, and \ref{prop:complete_plurality} on the contrary, need the assumption that the graph is irreflexive. In Table \ref{tab:results_reflexive} in Appendix \ref{app:reflexive} we give an overview of which type of illusions hold in which class of graphs.

\section{Conclusion and Outlook}

\begin{table}[t]
\caption{The (im)-possibility of majority illusions (using two colors) on different classes of (irreflexive simple) graphs. \checkmark~ indicates that the illusion is possible on all graphs of the class, \xmark~  indicates that the illusion is not possible on any graph in the class, \checkmark / \xmark~ indicates that the illusion is possible on some but not all graphs of the class.  References to the relevant results are given. 
Note that for graphs with only odd-degree nodes and 2-colorable graphs the \texttt{Mw} illusion is always either a \texttt{Mm} illusion or \texttt{Uw} illusion, conform 
Proposition \ref{thm:degree-odd} and Lemma \ref{lem:2-col}, and that for complete graphs with an even number of nodes $|V|$, the \texttt{Mw} illusion is always a \texttt{Uw} illusion, conform Proposition \ref{prop:complete-maj-weak-maj}. 
}
   \centering
    \begin{tabular}{|p{0.41\linewidth}|p{0.10\linewidth}| p{0.15\linewidth}|p{0.12\linewidth}|}
    \hline
        \textbf{Class of graphs}  &  \multicolumn{1}{c|}{\textbf{\texttt{Mw} illusion}} & \textbf{\texttt{Wm} illusion}& \textbf{\texttt{Mm} illusion}  \\ \hline
         All graphs & \multicolumn{1}{c|}{\multirow[c]{8}{*}{\checkmark (Thm. \ref{thm:maj-weak-maj})}} & \multicolumn{2}{c|}{\checkmark / \xmark} \\ \cline{1-1} \cline{3-4}
         2-colorable graphs with $|V|$ odd  & & \multicolumn{2}{c|}{\checkmark~ (Prop. \ref{prop:2-col-maj-maj})} \\ \cline{1-1}  \cline{3-4}
         2-colorable graphs with $i\in V: \forall j\in N_i: d_j>2$  & & \multicolumn{1}{c|}{\multirow{2}{*}{\checkmark~ (Prop. \ref{prop:2col-strict})}}& \multicolumn{1}{c|}{\multirow{2}{*}{\checkmark / \xmark}}\\ \cline{1-1} \cline{3-4}
          2-regular graphs  & & \multicolumn{2}{c|}{\xmark~ (Prop. \ref{prop:k=2})}\\ \cline{1-1} \cline{3-4}
         Complete graphs   &  & \multicolumn{2}{c|}{\xmark~ (Prop. \ref{prop:complete-weak-maj-maj})} \\     \hline
    \end{tabular}
    \label{tab:results}
\end{table}


\paragraph{Conclusions} 
We studied weak and strong versions of the majority illusion
and some of its generalizations using analytical and computational methods.
Table \ref{tab:results} summarizes our analytical findings. 
Using 
 results 
about majority colorings, our main result shows that no network  
is immune to Majority-weak-majority (\texttt{Mw}) illusion (Theorem \ref{thm:maj-weak-maj}). The result indicates that one cannot exclude the possibility of illusions by only controlling how the network is wired.

We subsequently strengthened this result by showing that some specific classes of graphs are not immune to \emph{stronger} types of illusions either. The results on 2-colorable graphs (Propositions \ref{prop:2-col-maj-maj} and \ref{prop:2col-strict}) show that stronger illusions are even possible on all graphs in some  classes. 
 Even though these classes are admittedly artificial, they do provide insights into structural features of networks that impact the majority illusion. Similarly, the results on complete graphs (Proposition \ref{prop:complete-weak-maj-maj}) and on $2$-regular graphs (Proposition \ref{prop:k=2}) reveal some of the contours of the relation between connectivity within a graph and majority illusions: one naturally expects that when  agents have many connections (that is, a lot of information about other agents in the population), they are less likely to be under majority illusion. Indeed, in the limit case of complete graphs, strict illusions are \emph{not} possible (Proposition \ref{prop:complete-weak-maj-maj}). This conclusion aligns with the results in Section \ref{sec:simulations} for Erd\H{o}s-R\'{e}nyi graphs, where we observe that in more connected graphs, the fraction of nodes under illusion is lower.  At the other extreme, when connections are very few ($2$-regular graphs), we are able to make the same observation: strict illusions are not possible (Proposition \ref{prop:k=2}). It therefore appears that, for illusions to be possible, the network should neither exhibit too much nor too little connectivity.  

We also provided an algorithm to find a $d$-regular graph of size $n$ with a Majority-majority (\texttt{Mm}) illusion, when it 
exists.
With computational simulations we studied the likelihood and size of majority illusions on different types of random 
networks, given the parameters of the graph, and on a real example network. 
We proved that some of the results about the possibility or impossibility of majority illusions are generalizable to quota rule illusions and plurality illusions.


\medskip

\paragraph{Future work} A natural direction for further research is to broaden the scope of our study by considering other classes of graphs. One interesting class is that of directed networks, since many real social media networks are directed. 
Furthermore, in this paper we only considered finite graphs. Theoretically it could be interesting to study majority illusions on infinite graphs as models of unbounded networks, possibly with the notion of majority over infinite sets introduced by \cite{PacuitSalame2004}, or by using the results on unfriendly partitions by \cite{Aharoni1990, Bernardi1987, Shelah1990}. 

A different direction is to expand on the results in Section \ref{sec:simulations} on the likelihood of illusions in certain types of random graphs by studying asymptotic properties of those graphs in the line of \cite{ErdosRenyi}: if the number of nodes is large, can we theoretically analyse the probability of majority illusions to occur?

 Last but not least, it would be interesting to measure the impact of 
 majority/plurality/$q$-quota illusions on specific social phenomena. For instance, how do they affect opinion diffusion dynamics in a population? How do they interact with polling effects? And how do they relate to better known types of illusions, such as the above-mentioned `friendship paradox' \cite{Feld1991}?

\medskip
\noindent\textbf{Acknowledgments. }
We thank the anonymous reviewers of EUMAS 2023 and of JAAMAS for their helpful comments.
We also thank Naomï Broersma for pointing out some mistakes in an earlier version of the algorithms in Appendix \ref{app:algorithm}.

\section*{Declarations}


\textbf{Funding} Zo\'{e} Christoff acknowledges support from the project \href{https://zoechristoff.com/veni-nwo-research-project-2021-2024-social-networks-and-democracy/}{ Democracy on Social Networks} (VENI project number Vl.Veni.201F.032) financed by the Netherlands
Organisation for Scientific Research (NWO). 
Davide Grossi acknowledges support by the \href{https://hybrid-intelligence-centre.nl}{Hybrid Intelligence Center}, a 10-year program funded by the Dutch
Ministry of Education, Culture and Science through the Netherlands
Organisation for Scientific Research (NWO) and by the European Union under the Horizon Europe project \href{https://perycles-project.eu/}{Perycles} (Participatory Democracy that Scales).
\smallskip
\begin{center}
\includegraphics[width=0.5\textwidth]{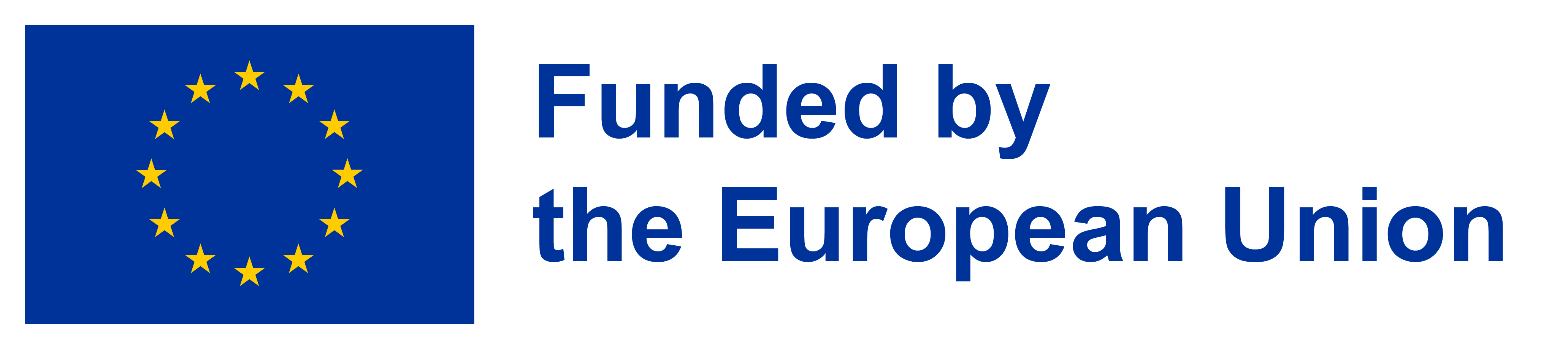}
\end{center}
\smallskip
\textbf{Competing interests} The authors have no competing interests to declare that are relevant to the content of this article.\\
\textbf{Ethics approval} Not applicable\\
\textbf{Consent to participate} Not applicable\\
\textbf{Consent for publication} Not applicable\\
\textbf{Availability of data and materials} The data generated in the computational experiments is available as supplementary material to this paper. Additionally, it is available upon request from the authors.\\
\textbf{Code availability} All code used for the computational experiments and data analysis is available on \href{https://github.com/MaaikeLos/Majority_Illusions}{https://github.com/MaaikeLos/Majority\_Illusions}\\
\textbf{Authors' contributions} All authors contributed to the study conception and design and the theoretical results. Computational experiments and data analysis were performed by Maaike Venema-Los. The manuscript was written by all authors together, and all authors read and approved the final manuscript.

\begin{appendices}

\section{Constructive Argument for Majority-2-Colorability}\label{app:additional}\label{app:proof-lovasz}

In 
Section \ref{sec:arbitrary-networks}, we mentioned that the result that every graph is weak majority 2-colorable is attributed to Lovász \cite{Lovász1966}, but described there in a much more general way, focusing on multigraphs. To make this paper self-contained, we provide a direct proof here. 
\begin{proposition}[attributed to \cite{Lovász1966}] 
Every graph is weak majority $2$-colorable.
\end{proposition}
\begin{proof}
We construct a weak majority 2-coloring following Algorithm \ref{Alg:maj2col}:
Consider a graph $G=\langle V, E \rangle$ and an arbitrary $2$-coloring $c$ of $G$. Then, search for a node with more monochromatic edges than dichromatic ones. If such a node does not exist, we are done, and our initial coloring $c$ is already a majority $2$-coloring. If such a node does exist, swap its color. 
Even though this might make another node increase its number of monochromatic edges, the total number of monochromatic edges in the graph can only decrease by such a step. 
We can proceed with this until there is no node to swap color of anymore, i.e. all nodes have at least as many dichromatic edges as monochromatic ones. We can be sure that at some point this will be the case, since every step can only strictly decrease the total number of monochromatic edges in the graph, and this number has a lower bound.
\begin{algorithm}
\caption{Weak majority 2-coloring}\label{Alg:maj2col}
 \hspace*{\algorithmicindent} \textbf{Input:} graph $G$\\
 \hspace*{\algorithmicindent} \textbf{Output:} weak majority $2$-coloring of $G$
 \begin{algorithmic}[1]
\State $c\gets$ random 2-coloring of $G$
\While{$\exists$ node $i$  with $E_M(i)>E_D(i)$}
    \State $c_i \gets \bar{c_i}$
\EndWhile
\State \Return{$c$}
\end{algorithmic}
\end{algorithm} 
\end{proof}

See Figure \ref{fig:maj-2-col} for an illustration of the algorithm. 
\begin{figure}[ht]
    \centering
    \begin{subfigure}{.25\textwidth}
        \centering\captionsetup{width=.8\linewidth}
        \includegraphics[width=0.8\textwidth]{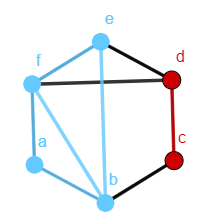}
        \caption{Initial coloring. Some nodes have more di- than monochromatic edges, for example $a$. Total nr. of monochromatic edges: 6}
    \end{subfigure}%
    \begin{subfigure}{.25\textwidth}
        \centering\captionsetup{width=.8\linewidth}
        \includegraphics[width=0.8\textwidth]{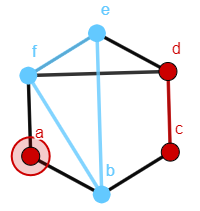}
        \caption{Result of swapping  node $a$'s color.  
        Node $e$ still has too many monochromatic edges. Total nr. of monochromatic edges: 4}
    \end{subfigure}%
    \begin{subfigure}{.25\textwidth}
        \centering\captionsetup{width=.8\linewidth}
        \includegraphics[width=0.8\textwidth]{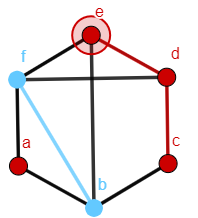}
        \caption{Result of swapping $e$'s color.  Even though this is bad for $d$, the total number of bad edges decreases. Total nr. of monochromatic edges: 3}
    \end{subfigure}%
    \begin{subfigure}{.25\textwidth}
        \centering\captionsetup{width=.8\linewidth}
        \includegraphics[width=0.8\textwidth]{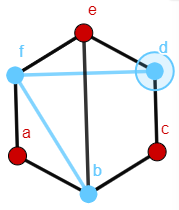}
        \caption{Result of swapping $d$'s color. Total nr. of monochromatic edges: 2. Now the graph is in a weak majority-2-coloring}
    \end{subfigure}%
    \caption[short]{Algorithm \ref{Alg:maj2col} executed on an example network.}
    \label{fig:maj-2-col}
\end{figure}

\section{Algorithm for Finding a $k$-Regular Graph with a Majority-Majority Illusion.}\label{app:algorithm}

\subsection{Pseudocode}
The pseudocode of the algorithm can be found in Algorithm \ref{Alg:main}. For an illustration of the algorithm, see Figure \ref{fig:algorithm-12-6}.
Note that this algorithm will not always return a connected graph. However, if required one can easily change the resulting graph in a connected graph by rewiring edges between red nodes: in every connected component break the a between two red nodes and then add links between the red nodes of different component. Proof that this always works is left as an exercise to the reader.


\begin{algorithm}
\caption{Finding a \texttt{Mm} illusion coloring}\label{Alg:main}
 \hspace*{\algorithmicindent} \textbf{Input:} number of nodes $n$, degree $k$\\
 \hspace*{\algorithmicindent} \textbf{Output:} $k$-regular graph of $n$ nodes, with \texttt{Mm} illusion
 \begin{algorithmic}[1]

\State $n_{red}\gets \lfloor{\frac{n}{2}+1}\rfloor$
\State $n_{blue}\gets \lceil{\frac{n}{2}-1}\rceil$
 \If{$n$ is even}
     \If{$k$ is even}
         \State $k_{blue} \gets \frac{k}{2}-3$
         \State $k_{red} \gets \frac{k}{2}-1$
     \Else[$k$ is odd]
         \State $k_{blue} \gets \frac{k-5}{2}$
         \State $k_{red} \gets \frac{k-1}{2}$
     \EndIf
     \If{$k_{blue}<0$}
         \State $k_{blue}\gets 0$
     \EndIf
 \Else[$n$ is odd]
     \State $k_{blue} \gets \frac{k}{2}-2$
     \State $k_{red} \gets \frac{k-2}{2}$
 \EndIf
 \Comment{Now $n_{blue}$ and $n_{red}$ are the numbers of blue and red nodes,  and $k_{blue}$ and $k_{red}$ are the degrees of the regular subgraphs built in the last step.}
 \State $N\gets $ set of $n$ nodes
 \State $E\gets \emptyset$ \Comment{set of edges starts empty}
\State $G \gets \langle N,E\rangle$
\State $R \gets $ any subset of $n_{red}$ nodes of $N$
\State $B \gets N\backslash B$
\State $E\gets $ add\_initial\_edges$(E, B, R, k)$\Comment{Algorithm \ref{alg:rounds}}
\State $E\gets $ add\_extra\_blue\_edges$(E, B, k, k_{blue})$\Comment{Algorithm \ref{alg:extra-blue}}
\If{$k_{red}$ is even \textbf{or} $n_{red}$ is even}
    \State $E\gets $add\_regular\_subgraph$(E, R, k_{red})$\Comment{Algorithm \ref{alg:add-regular-subgraph}}
\EndIf
\If{$k_{blue}$ is even \textbf{or} $n_{blue}$ is even}
    \State $E \gets$ add\_regular\_subgraph$(E, B, k_{blue})$
\EndIf
\If{ $k_{red}$ is odd \textbf{and} $n_{red}$ is odd}\Comment{(Hence also $k_{blue}$ and $n_{blue}$ are odd)}
    \State $k_{red_{temp}}\gets  k_{red} -1$ \Comment{Make regular graphs with one less edge per node}
    \State $E\gets$ add\_regular\_subgraph$(E, R, k_{red_{temp}})$
    \If{ $k_{blue} >0$}
        \State $k_{blue_{temp}}\gets  k_{blue} -1$
        \State $E\gets$ add\_regular\_subgraph$(E, B, k_{blue_{temp}})$
    \EndIf
    \State $blue_1 \gets$ blue node with least amount of edges
    \State $red_c \gets$ a red node that $blue_1$ is not yet connected to
    \State $E\gets E\cup \{(blue_1, red_c)\}$ \Comment{add an edge between ($blue_1$ and $red_c$)}
    \If{$k_{blue}>0$}
        \For{node in $B\backslash \{blue_1\}$ }
            \If{number of neighbors of node $<k$}
                \State $next \gets$ first blue node in $B\backslash \{blue_1\}$ that is not yet connected to $node$ and has less than $k$ edges
                \State $E\gets E\cup \{(node, next)$\}
            \EndIf
        \EndFor
    \EndIf
    \For{node in $R\backslash \{red_c\}$}
        \If{number of neighbors of node $<k$}
            \State $next \gets$ first red node in $R\backslash \{red_c\}$ that is not yet connected to $node$ and has less than $k$ edges
            \State $E\gets E\cup \{(node, next)$\}
        \EndIf
    \EndFor
\EndIf
\State $G \gets \langle N, E \rangle$
\State \Return{G}
 \end{algorithmic}
\end{algorithm}

\begin{algorithm}
\caption{add\_initial\_edges}\label{alg:rounds}
 \hspace*{\algorithmicindent} \textbf{Input:} set of edges $E$, blue nodes $B$, red nodes $R$, degree $k$\\
 \hspace*{\algorithmicindent} \textbf{Output:} $E$ with some edges added between blue and red nodes
 \begin{algorithmic}[1]
\State $x\gets 0$ \Comment{value to avoid double edges}
\If{$k$ is even}
     \State $ n_{edges} \gets \lfloor|R|\cdot\frac{k+2}{2}\rfloor$ \Comment{$n_{edges}$ is the nr of necessary edges}
\Else
    \State $n_{edges} \gets \lfloor|R|\cdot\frac{k+1}{2}\rfloor$
\EndIf
\For{$0\leq i <  n_{edges}$}
    \State $index_{red} \gets  i \mod |R|$
    \State $node_{red} \gets R[index_{red}]$

    \State $index_{blue} \gets (x + i) \mod |B|$
    \State $node_{blue} \gets B[index_{blue}]$

    \If{($node_{red}, node_{blue})\in E$}
        \State $x\gets 1$ \Comment{(edge already exists, we shift one position)}
        \State $index_{blue} = (x + i) \mod |B|$
        \State $node_{blue}\gets B[index_{blue}]$
    \EndIf
    \State $E\gets E\cup \{(node_{red}, node_{blue})\}$
\EndFor
\State \Return{$E$}
\end{algorithmic}
\end{algorithm}

\begin{algorithm}

\caption{add\_extra\_blue\_edges}\label{alg:extra-blue}
\hspace*{\algorithmicindent} \textbf{Input:} set of edges $E$, blue nodes $B$, degree $k$, degree $k_{blue}$\\
\hspace*{\algorithmicindent} \textbf{Output:} $E$ with extra blue edges
 \begin{algorithmic}[1]
\State sort $B$ such that ones missing most edges come first
\For{node in $B$}
    \State $index_{next} \gets (B.index (node)+1) \mod |B|$
    \State $next \gets B[index_{next}]$
    \If{(node, next\_node)$\notin E \wedge  |\text{neighbors(node)}| < k-k_{blue} \wedge |\text{neighbors}(next)| <k-k_{blue}$}\Comment{The node has more than $k_{blue}$ open edges (which is the nr to be left over after this), so less than $k-k_{blue}$ neighbors (the next-node check is only for the case where one node has to be left open (n odd, k even))}
        \State $E\gets E\cup \{(node, next)\}$
    \EndIf
\EndFor
\State \Return{$E$}
\end{algorithmic}
\end{algorithm}

\begin{algorithm}
\caption{add\_regular\_subgraph}\label{alg:add-regular-subgraph}
\hspace*{\algorithmicindent} \textbf{Input:} set of edges $E$, colored nodes $C$, degree $k_{color}$\\
\hspace*{\algorithmicindent} \textbf{Output:} $E$ with added regular subgraph between nodes of $C$
\begin{algorithmic}[1]
\For{node in $C$}
    \If{$|C|$ is even}
        \State $index_{start}\gets ( C.index(node)+\lfloor\frac{|C|}{2}\rfloor) \mod |C|$ \Comment{The index of the node opposite to the current node}
    \Else
        \State $index_{start}\gets C.index(node)+\frac{|C|}{2}$ \Comment{this will be a half number}
    \EndIf
    \If{$|C|$ is even \textbf{ and } $k_{color}$ is odd}
        \State $E\gets E\cup (node, C[index_{start}])$ \Comment{add an edge to the node opposite}
    \EndIf
    \State $r\gets \lfloor\frac{k_{color}}{2}\rfloor$ \Comment{$r$ is the range to both sides}
    \If{$|C|$ is odd \textbf{ and } $k_{color}$ is odd}
        \State $r\gets \frac{k_{color}-1}{2}$
    \EndIf
    \For{$1\leq i \leq r$}
        \State $index_{minus} \gets \lceil index_{start} - i\rceil \mod |C|$ 
        \State $index_{plus} \gets \lfloor(index_{start} + i)\rfloor  \mod |C|$
        \State $E\gets E\cup \{(node, C[index_{minus}]), (node, C[index_{plus}])\}$
    \EndFor
\EndFor
\State \Return{$E$}    
\end{algorithmic}
\end{algorithm}

\subsection{
Proof of Correctness of the Algorithm
}
\begin{proposition}
If $n$ and $d$ are  two integers such that  $d>2$ and $d$ or $n$ is even, which satisfy the conditions of Propositions \ref{prop:c} and \ref{prop:d}, Algorithm \ref{Alg:main} finds a
    $d$-regular 2-colored graph $C=\langle V, E, c\rangle$ with $|V|=n$ which is under \texttt{Mm} illusion.
\end{proposition}

\begin{proof}
 We go by cases over the combinations of the possible parities of $|V|=n$ and $k$.
 \begin{itemize}
     \item  \textbf{$n$ and $k$ are even}
 For the case where $n$ and $k$ are both even: Let there be $n_{red}=\frac{n}{2} + 1$ red nodes and $n_{blue}=\frac{n}{2}-1$ blue nodes.  We assume that the requirements $n-4\geq k$ (and $n\geq \frac{2(3k+2)}{k-2}$) hold.

 We start by just connecting every red node to $\frac{k+2}{2}$ blue nodes, and do this as evenly spread out over the blue nodes as possible by just going in rounds over the red nodes and over the blue nodes until we are done. After this, all red nodes have enough blue neighbors to be in \texttt{m} illusion, and there are in total $\frac{n+2}{2}\frac{k+2}{2}$ blue edges used, so there are $k(\frac{n}{2}-1) - \frac{n+2}{2}\frac{k+2}{2}$ blue edge-ends left over, evenly spread over the blue nodes. We can write this as $k(\frac{n}{2}-1)-(\frac{n}{2}-1+2)\frac{k+2}{2} = (\frac{n}{2}-1)(k-\frac{k+2}{2})-(k+2) = (\frac{n}{2}-1)(\frac{k-2}{2})-(k+2)$. Hence, every blue node has $\frac{k-2}{2}$ edge-ends left over, but $k+2$ of them have one less left over. But, $k+2$ may exceed the number of blue nodes $\frac{n}{2}-1$, namely when $n<2k+6$, and then there must be some nodes that have two less (because we cannot spread out the $k+2$ evenly over the $\frac{n}{2}-1$. However, $k+2$ cannot be bigger than $2(\frac{n}{2}-1)$ (since then we would have $k>n-4$, which is contradicted by our first assumption), so there are no blue nodes with more than 2 less open ends than  $\frac{k-2}{2}$.
 We can therefore approach it from the other side, if we start with stating that every blue node has $\frac{k-2}{2}-2 = \frac{k-6}{2}$ edge-ends left open, then $(\frac{n}{2}-1)2-(k+2)=n-k-4$ have another extra left open (and if $(\frac{n}{2}-1)2-(k+2) >\frac{n}{2}-1$ some have two extra, that is, when $n>2k+6$).
 So, we have three kinds of open ends left over. 
 \begin{enumerate}[(a)]
     \item $\frac{n}{2}+1$ red nodes that all have $k_{red}=\frac{k}{2}-1$ open ends;
     \item $\frac{n}{2}-1$ blue nodes that all have $k_{blue}=\frac{k-6}{2}=\frac{k}{2}-3$ open ends 
     \item another $n-k-4$ open ends evenly divided over the blue nodes (where each blue node has at most two ends of this kind).
 \end{enumerate}

Now we can link those open edges to make a regular graph:
\begin{enumerate}[i]
    \item we can tie the ends in (c) to each other (possible since $n-k-4$ is even, and no blue is yet connected to another blue);
    \item if $k_{blue} = \frac{k}{2}-3$ (and hence also  $k_{red}=\frac{k}{2}-1$) is even: 
    \begin{enumerate}
        \item make a regular graph between the remaining blue edge ends ($\frac{n}{2}-1$ nodes with all $\frac{k}{2}-3$ open ends);
        \item make a regular graph between the remaining red edge ends ($\frac{n}{2}+1$ nodes with all $\frac{k}{2}-1$ open ends);
    \end{enumerate} 
    Where 2 (a) is possible because any blue node is connected to at most two other blue nodes as a result of 1, so there are at least $(\frac{n}{2}-1)-3$ other blue nodes left over (also subtracting 1 for the node itself). Since $n-2\geq k$, we also have $(\frac{n}{2}-1)-3\geq \frac{k}{2}-3$. If $k_{blue}<0$, we just skip this step, because then all blue nodes already have enough edges due to step (i). Clearly, 2(b) is always possible, since there are no edges between any red nodes yet.
    \item if $k_{blue}$ and hence also  $k_{red}$ are odd, 
    \begin{enumerate}
        \item make a regular graph between the remaining blue edge ends leaving one edge per node open ($\frac{n}{2}-1$ nodes with all $\frac{k}{2}-4$ ends);
        \item make a regular graph between the remaining red edge ends, leaving one edge per node open ($\frac{n}{2}+1$ nodes with all $\frac{k}{2}-2$ ends);
    \end{enumerate}
    Then we are left with exactly one open end for all nodes, but since everything so far was spread out as evenly as possible over the network, it is possible to divide the nodes into pairs that are not yet connected to each other.
\end{enumerate}

The cases where one of $k$ and $n$ is odd work very similar:
\item \textbf{$n$ is even, $k$ is odd}
When $n$ is even and $k$ is odd, we still have $n_{red}=\frac{n}{2}+1$ red nodes and $n_{blue}=\frac{n}{2}-1$ blue. We assume that, as given in Proposition \ref{prop:c}, $k\leq n-3$, and, as given in Prop. \ref{prop:d}, $n\geq \frac{2(3k+1)}{k-1}$.
Then we proceed as in Algorithm 2 above: we tie every red node to $\frac{k+1}{2}$ blue nodes, as evenly as possible. Then, there are still $k(\frac{n-2}{2})-\frac{n+2}{2}\frac{k+1}{2} = \frac{k-1}{2}\frac{n-2}{2}-(k+1)$ blue edge-ends left over. Since these are as evenly spread over the blue nodes as possible, every blue node has at least $\frac{k-1}{2}-2=\frac{k-5}{2}$ open ends, and then there are another $\frac{n-2}{2}\cdot 2 -(k+1) = n-k-3$ blue edge-ends left over. (if $n-k-3>\frac{n}{2}-1$, some blue nodes have two extra, otherwise those can be divided over $n-k-3$ blue nodes that all have one extra). Then, all red nodes have a \texttt{m} illusion, and we still have the following open ends to fill the regular graph: 
\begin{enumerate}[a)]
    \item $\frac{n}{2}+1$ red nodes, all with $k_{red}=\frac{k-1}{2}$ open ends;
    \item $\frac{n}{2}-1$ blue nodes, all with $k_{blue}=\frac{k-5}{2}$ open ends to start with;
    \item another $n-k-3$ blue open ends, at most 2 per node.
\end{enumerate}
Since $n-k-3$ is even, and no blue is yet connected to another blue, we can just connect the open ends in (c) to each other. 
Then, if $\frac{k-1}{2}$ and $\frac{k-5}{2}$ are even, just make two regular graphs between the two, otherwise leave one open and connect those, just as in the case with both $n$ and $k$ even.
\item \textbf{$n$ is odd, $k$ is even}
When $n$ is odd and $k$ is even, we have $n_{red}=\frac{n+1}{2}$ red nodes, $n_{blue}=\frac{n-1}{2}$ blue nodes, and a node needs to be connected to $\frac{k}{2}+1$ blue nodes to have an illusion. We assume that $k\leq n-3$ 
conformingly to Proposition \ref{prop:c}, and that $n\geq \frac{3k+2}{k-2}$ 
conformingly to Proposition \ref{prop:d}. 
Just as in the other cases, we start by tying every red node to $\frac{k}{2}+1$ blue nodes, which leaves us with $k(\frac{n-1}{2})-\frac{n+1}{2}\frac{k+2}{2} = \frac{k-2}{2}\frac{n-1}{2}-\frac{k+2}{2}$ blue ends open. 
It is not possible that $\frac{k+2}{2}>\frac{n-1}{2}$, since that would imply that $k>n-3$, so the remaining $\frac{k+2}{2}$ can be divided over the blue nodes. What this means is that every blue node has $\frac{k-2}{2}-1 = \frac{k}{2}-2$ open ends at least, and $\frac{n=-1}{2}-\frac{k+2}{2} = \frac{n-k-3}{2}$ blue nodes have one more open end (so those have $\frac{k-2}{2}$).
This means that we still have the following open ends to fill the regular graph: 
\begin{enumerate}[(a)]
    \item $\frac{n+1}{2}$ red nodes, all with $k_{red}=\frac{k-2}{2}$ open ends;
    \item $\frac{n-1}{2}-\frac{n-k-3}{2} = \frac{k+2}{2}$ blue nodes that have $k_{blue}=\frac{k}{2}-2$ open ends; and
    \item $\frac{n-k-3}{2}$ blue nodes that have  $\frac{k}{2}-1$ open end.
\end{enumerate}
If the number of nodes mentioned in (c) is even, we can connect them to each other and make regular graphs between the remaining red resp. blue nodes. 
If the number mentioned in (c) is odd, then either both  $\frac{n+1}{2}$ ($n_{red}$) and $\frac{k-2}{2}$ ($k_{red}$) are even or both $\frac{n-1}{2}$ ($n_{blue}$) and $\frac{k}{2}-2$ ($k_{blue}$) are even, but not both. That means that we can make a regular graph of the remaining edges of either the blue nodes or the red nodes, but not both. However, of the one that cannot form a regular graph, we tie one node to one of the nodes in $c$, which makes the remaining number in $c$ even, and of the rest, we make an almost regular graph.
 \end{itemize}
    
\end{proof}

 \paragraph{Faster algorithm (only for nice values of $n$ and $k$).} 
 If $n\equiv 2\mod 4$ and $n\leq 2k-2$, we can connect all red nodes to all blue nodes. Then all red nodes still have $k-\frac{n}{2}+1$ open ends, and all blue nodes still have $k-\frac{n}{2}-1$ open ends. Since $n\equiv 2\mod 4$ and $k$ is even, both $k-\frac{n}{2}+1$ and $k-\frac{n}{2}-1$ are even. Hence, we can make two regular graphs with the remaining open ends, one between the red nodes, and one between the blue nodes.
 Note that if $n\equiv 0\mod 4$, both the numbers of open ends of each color and the number of nodes of each color is odd, so we cannot apply this algorithm.

\section{Proof of Proposition \ref{prop:complete-p-q}}\label{app:propcompletepq}
\noindent\textbf{Proposition \ref{prop:complete-p-q}} ($p$-$q$ illusions on complete graphs)\textbf{.}
\emph{
    If $G=\langle V,E \rangle$ is a complete graph with $|V|= n$, 
    \begin{itemize}
        \item  a $p$-$q$ illusion is possible in $G$ iff there is an integer $x$
    , such that either 
        \begin{itemize}
            \item $q(n-1)<x\leq qn$ and $q(n-1)+1\geq n-x$ and $n-x>pn$; or 
            \item $q(n-1)+1\geq x > qn$ and $qn\geq n-x$ and $x>pn$
            .
        \end{itemize} 
        \item A weak-$p$-$q$ illusion is possible in $G$ iff there is an integer $x$ such that either
        \begin{itemize}
            \item $q(n-1)<x\leq qn$ and $q(n-1)+1\geq n-x$ and $n-x\geq pn$; or 
            \item $q(n-1)+1\geq x > qn$ and $qn\geq n-x$ and $x\geq pn$
            .
        \end{itemize} 
        \item  A $p$-weak-$q$ illusion is possible in $G$ iff there is an integer $x$ such that either 
        \begin{itemize}
            \item $q(n-1) < x \leq qn$ and $n-x > pn$; or 
            \item $q(n-1) +1 \geq x > qn$ and $x > pn$
            .
        \end{itemize}
        \item  A weak-$p$-weak-$q$ illusion is possible in $G$ iff there is an integer $x$ such that  either
        \begin{itemize}
            \item $q(n-1) < x \leq qn$ and $n-x \geq pn$; or 
            \item $q(n-1) +1 \geq x > qn$ and $x \geq pn$
            .
        \end{itemize}
    \end{itemize}
    If there exists such $x$, we can generate the respective illusions by coloring exactly $x$ arbitrary nodes in the color of the illusion.
}

\begin{proof}
    For a $p$-weak-$q$ illusion to be possible in $G$, we need that for more than a $p$-fraction of the nodes, more than $q\cdot k$ of its neighbors has a color $c$, but that at the same time at most $q\cdot n$ of the set of all nodes has color $c$, which can only happen if the node itself does not have color $c$; or that at most $qk$ of the neighbors has color $c$ while more than $qn$ of all nodes has color $c$, which can only happen if the node itself has color $c$. Hence, if we call the total number of $c$-colored nodes in the network $x$, we need to have that $q(n-1)<x\leq qn$ (if the node is not $c$) or $q(n-1)+1\geq x>qn$ (if the node is $c$). 
    If there is $x$ such that $q(n-1)<x\leq qn$, in fact all nodes that are not color $c$ are under weak-$q$ illusion, and if there is $x$ such that $q(n-1)+1\geq x>qn$, all nodes that are color $c$ are under weak-$q$ illusion. Hence, since $x$ represented the number of nodes of color $c$, there is a $p$-weak-$q$ illusion iff there is an integer $x$ such that $q(n-1)<x\leq qn$ and $n-x>pn$ (or $n-x\geq pn$ for weak-$p$-weak-$q$ illusion), or there is an integer $x$ such that $q(n-1)+1\geq x>qn$ and $x>pn$ (or $x\geq pn$ for weak-$p$-weak-$q$ illusion).
    
    For the strict $q$ illusions the requirements are slightly more complicated. If a node $i$ has color $c$ and there are $x$ nodes of color $c$, 
    the only option for $i$ to be under $q$ illusion with color $c$ is that $c$ is a $q$-quota winner globally but not locally. Then the other color cannot be a $q$-quota winner globally, but it can either be or be not a $q$-quota winner locally. 
    Therefore, we need that $q(n-1)+1\geq x>qn$ ($c$ is not a winner locally but it is a winner globally), and either $q(n-1)<n-x\leq qn$ (the other color is a winner locally but not globally) or both $q(n-1)+1\geq n-x$ and $qn\geq n-x$ (the other color is neither a winner locally nor globally). These requirements combine to the requirement
        \begin{equation}\label{eq:1}
            q(n-1)+1\geq x>qn \text{ and } qn\geq n-x.
        \end{equation}
    If $i$ has the other color than $c$, it can only be under $q$-illusion with color $c$ if $c$ is a $q$-quota winner locally but not globally. Then the other color cannot be a $q$-quota winner locally, but it can either be or be not a $q$-quota winner globally. Therefore, we need that $q(n-1)< x\leq qn$ ($c$ is a winner locally but it is not a winner globally), and either $q(n-1)\geq n-x > qn$ (the other color is not a winner locally but it is globally) or both $q(n-1)+1\geq n-x$ and $qn\geq n-x$ (the other color is neither a winner locally nor globally). These requirements combine to the requirement 
    \begin{equation}\label{eq:2}
    q(n-1) < x \leq qn \text{ and } q(n-1)+1\geq n-x.         
    \end{equation}
    If there is an $x$ that satisfies equation \ref{eq:1}, all nodes of color $c$ are under $q$ illusion, so if \ref{eq:1} and $x>pn$, there is a $p$-$q$ illusion (and if \ref{eq:1} and $x\geq pn$, there is a weak-$p$-$q$ illusion).
    If there is an $x$ that satisfies equation \ref{eq:2}, all nodes that are not color $c$ are under $q$ illusion, so if \ref{eq:2} and $n-x>pn$, there is a $p$-$q$ illusion (and if \ref{eq:2} and $n-x\geq pn$, there is a weak-$p$-$q$ illusion).
\end{proof}

\section{
Not Necessarily Irreflexive Networks}\label{app:reflexive}
In this appendix we study concisely the possibility of majority illusions on graphs that are not necessarily irreflexive%
: graphs in which nodes can have an edge to themselves. 
In this section, if we mention a graph $G$ or a colored graph $C$, we do not assume that it is irreflexive as in the 
main paper. A self-loop counts as only one edge for a node, for example, a node with only an edge to itself has degree 1. 

As mentioned in 
Section \ref{sec:reflexive}, adding reflexive loops to a graph only lightly influences the possibility of illusions on the graph. Here we give some formal statements displaying the size of this effect.

First, 
if we have a strict illusion, 
adding any number of reflexive loops to the network 
will preserve at least a weak illusion:

\begin{proposition}\label{prop:strict_weak}
    If a \texttt{Mm/Wm} illusion is possible on an irreflexive network $G'=\langle V, E'  \rangle$, then on the network $G =\langle V, E \rangle$ where $E = E' \cup R$ and $R$ is a set of reflexive edges on $G$, a \texttt{Mw/Ww} illusion is possible.
\end{proposition}
\begin{proof}
    If a \texttt{Mm/Wm} illusion is possible on $G'$, then at least / more than half of the nodes in $V$ are under \texttt{m} illusion. Assume without loss of generality that $M_V = red$ (it cannot be a tie, since there are nodes under \texttt{m} illusion). When we add only reflexive edges to $G'$ to obtain $G$, all the nodes $i$ that were under \texttt{m} illusion in $G'$ (so that had $M_i=blue$), get at most one more red `neighbor', so now $M_i\in\{blue, tie\}$, so they are all under \texttt{w} illusion.
\end{proof}
Note that this does not hold in the other direction: if a graph with reflexive edges is under \texttt{Mw/Ww} illusion and we remove the reflexive edges, the graph is not necessarily under \texttt{Mm/Wm} illusion.

For nodes with a high degree, intuitively adding an extra edge does not have a large influence on 
whether or not the node is under illusion. A node does not need many more than half of its neighbors to be of the minority color, to still be under illusion once a reflexive edge is added. This idea is captured in the following proposition.

\begin{proposition}\label{prop:degree_irreflexive}
    Let  $C'=\langle V, E', c\rangle$ be 
    an irreflexive 2-colored graph 
    where $M_V \not = tie$.
    If a node $i$ with degree $d'_i$ has more than a $q$-fraction of neighbors of the global minority color for $q = \frac{1}{2} + \frac{1}{d'_i}$ if $d'_i$ is even or $q = \frac{1}{2}+\frac{3}{2d'_i}$ if $d'_i$ is odd, then on any colored graph $C = \langle V, E, c \rangle$ where $E$ is $E'$ with any number of reflexive edges added, $i$ is under \texttt{m} illusion.    
\end{proposition}
\begin{proof}
    Take any irreflexive colored graph $C'=\langle V, E', c\rangle$, and consider any node $i$ with even degree $d'_i$ that has more than $q$ neighbors of the global minority color for $q = \frac{1}{2} + \frac{1}{d'_i}$ if $d'_i$ is even or $q = \frac{1}{2}+\frac{3}{2d'_i}$ if $d'_i$ is odd.
    \begin{itemize}
        \item \textbf{Case $d'_i$ is even:} 
        In $C'$, $i$ has more than a fraction of $ \frac{1}{2} + \frac{1}{d'_i}$ neighbors of the global minority color, assume w.l.o.g. this is red. 
        Now consider a graph $C = \langle V, E, c \rangle$ where $E$ is $E'$ with any number of reflexive edges added. Clearly, the total number of red nodes has not changed, so also in $C$, red is the minority color. In this graph, $i$ can have a loop to itself, so it's degree $d_i$ is either $d'_i$ or $d'_i +1$. We do not know whether $i$ itself is red or not, but we do know that its neighbors have the same colors as in $C'$, so $i$ has more than $\frac{d'_i}{2} + \frac{d'_i}{d'_i} = \frac{d'_i}{2} +1$  red neighbors. 
        If $d_i =d'_i$, then $d_i$ is even and $i$ has more than $\frac{d_i}{2} +1 $ red neighbors while less than half of all nodes in $C$ are red: $i$ is under \texttt{m/w} illusion. 
        If $d_i = d'_i+1$, then $d_i$ is odd and $i$ has more than $\frac{d_i-1}{2} +1 = \frac{d_i +1}{2}$ red neighbors while less than  half of all nodes in $C$ are red: $i$ is under \texttt{m} illusion. 
        
        \item \textbf{Case $d'_i$ is odd:} 

        In $C'$, more than a fraction of $\frac{1}{2}+\frac{3}{2d'_i}$ of $i$'s  neighbors are of the global minority color, assume w.l.o.g. this is red. 
        Now consider a graph $C = \langle V, E, c \rangle$ where $E$ is $E'$ with any number of reflexive edges added. Clearly, the total number of red nodes has not changed, so also in $C$, red is the minority color. In this graph, $i$ can have a loop to itself, so it's degree $d_i$ is either $d'_i$ or $d'_i +1$. We do not know whether $i$ itself is red or not, but we do know that its neighbors have the same colors as in $C'$, so $i$ has more than $\frac{d'_i}{2}+\frac{3d'_i}{2d'_i} = \frac{d'_i}{2} + \frac{3}{2}$ red neighbors. 
        If $d_i =d'_i$, then $d_i$ is odd and $i$ has more than $\frac{d_i}{2} + \frac{3}{2} = \frac{d_i +1}{2} +1$ red neighbors while less than half of all nodes in $C$ are red: $i$ is under \texttt{m} illusion. 
        If $d_i = d'_i+1$, then $d_i$ is even and $i$ has more than $\frac{d_i-1}{2} + \frac{3}{2} = \frac{d_i+2}{2}$ red neighbors while less than  half of all nodes in $C$ are red: $i$ is under \texttt{m} illusion. 
    \end{itemize}   
\end{proof}
The same intuition works for \texttt{w} illusions:
\begin{corollary}
    Let $C'=\langle V, E', c\rangle$ be an irreflexive 2-colored graph, let $i$ be a node with degree $d'_i$ and let $q$ be a fraction such that $q = \frac{1}{2} + \frac{1}{d'_i}$ if $d'_i$ is even and $q = \frac{1}{2}+\frac{3}{2d'_i}$ if $d'_i$ is odd.
    If $i$ has more than a $q$-fraction of neighbors of a color that is a tied global winner, or, if $M_V \not = tie$, at least a $q$-fraction of neighbors of the minority color,
    then on any colored graph $C = \langle V, E, c \rangle$ where $E$ is $E'$ with any number of reflexive edges added, $i$ is under \texttt{w} illusion. 
\end{corollary}

\noindent\emph{Proof sketch.}
    Analogously to the proof of Proposition \ref{prop:degree_irreflexive}, but with weak instead of strict inequalities. \\


If nodes can have reflexive edges, there is the possibility that a node sees all nodes in the graph, and therefore cannot be under any kind of illusion. In general, we can define precisely when nodes see too much to be under \texttt{m/w} illusion (Proposition \ref{lem:minority}), and when the fact that a node can be under illusion does not leave enough nodes for the majority color to be majority (Proposition \ref{lem:reflexive_mwmi}). 

\begin{proposition}
\label{lem:minority}
    Let $G = \langle V, E, c\rangle$ be a 2-colored graph where $M_V \not = tie$, and let $N_{min}\subseteq V = \{i\in V: c_i \not = M_V\}$ be the set of nodes in the minority color.
    Then, if for more than half of the nodes $i\in V$, $d_i\geq 2\cdot |N_{min}|+1$, the graph cannot be under \texttt{Mm/Mw/Wm/Ww} illusion. If for at least half of the nodes $i\in V$, $d_i\geq 2\cdot |N_{min}|+1$, the graph cannot be under \texttt{Mm/Mw} illusion. 
\end{proposition}
\begin{proof}
    Any node $i$ with degree $d_i\geq 2\cdot |N_{min}|+1$ has at most $|N_{min}|$ neighbors in the minority color, so at least $|N_{min}|+1$ neighbors in the majority color, and therefore cannot be under \texttt{m/w} illusion.
    If this is the case for more than half of the nodes, the graph cannot be under \texttt{Wm/Ww} illusion. If this is the case for at least half of the nodes, the graph cannot be under \texttt{Mm/Mw} illusion.
\end{proof}

\begin{proposition}\label{lem:reflexive_mwmi}
    On a graph $G=\langle V, E \rangle$ where for at least half of the nodes $i$, $\lceil \frac{d_i+1}{2}\rceil > \lceil\frac{|V|}{2}\rceil$, a \texttt{Mm/Mw} illusion is not possible. 
\end{proposition}
\begin{proof}
    Take any graph $G = \langle V, E, \rangle$ 
    with a subset of the nodes $V'\subseteq V$ such that $|V'|\geq \frac{|V|}{2}$ and for all $i\in |V'|$, 
    $\lceil \frac{d_i+1}{2}\rceil > \lceil\frac{|V|}{2}\rceil$, and suppose for a contradiction that there exists a coloring $c$ of $G$ with which $G$ is in \texttt{Mw} illusion. This means that there is a set of nodes $I\subseteq V$ with $|I| > \frac{|V|}{2}$ such that all the nodes in $I$ are under \texttt{w} illusion. 
    There are two cases:
    \begin{itemize}
        \item $M_V \not = tie$. W.l.o.g. assume that $M_V = red$. Then all the nodes $j\in I$ have $M_{N_j}\in\{blue, tie\}$, so for all  $j\in I$, at least $\lceil \frac{d_j+1}{2}\rceil$ of $j$'s neighbors are blue, so there are at least $\lceil \frac{d_j+1}{2}\rceil$ blue nodes in the graph.   
        Since $|I| > \frac{|V|}{2}$ and $|V'|\geq \frac{|V|}{2}$, there is at least one node $i'\in I\cap V'$, so $\lceil \frac{d_{i'}+1}{2}\rceil > \lceil\frac{|V|}{2}\rceil$ and there are at least $\lceil \frac{d_{i'}+1}{2}\rceil$ blue nodes in the graph.  Therefore, there are at least $ \lceil\frac{|V|}{2}\rceil$ blue nodes in the graph,  which is a contradiction with the assumption that $M_V = red$.
        \item $M_V = tie$. Then all the nodes $i$ under \texttt{w} illusion have $M_{N_i}\in\{blue, red\}$, so for all  $j\in I$, at least $\lfloor \frac{d_j+2}{2}\rfloor$ of $j$'s neighbors are blue or at least $\lfloor \frac{d_j+2}{2}\rfloor$ of $j$'s neighbors are red.
        Again,  since $|I| > \frac{|V|}{2}$ and $|V'|\geq \frac{|V|}{2}$, there is at least one node $i'\in I\cap V'$. For this node $i'$, $\lceil \frac{d_{i'}+1}{2}\rceil > \lceil\frac{|V|}{2}\rceil$ and there are either at least $\lfloor \frac{d_{i'}+2}{2}\rfloor$ blue nodes in the graph or at least $\lfloor \frac{d_{i'}+2}{2}\rfloor$ red nodes in the graph.
        Since $\lfloor \frac{d_{i'}+2}{2}\rfloor \geq \lceil \frac{d_{i'}+1}{2}\rceil $, either the total numer of red nodes in the graph or the total number of blue nodes in the graph is greater than $\lceil\frac{|V|}{2}\rceil$, which is a contradiction with the assumption that $M_V = tie$. 
    \end{itemize}
\end{proof}
   
Even though Proposition \ref{lem:reflexive_mwmi} seems rather general, it is only applicable to some specific cases.    Let us consider what $\lceil \frac{d_i+1}{2}\rceil > \lceil\frac{|V|}{2}\rceil$ means for the possible parities of $d_i$ and $|V|$.
    \begin{itemize}
        \item $d_i$ is even and $|V|$ is even: $\lceil \frac{d_i+1}{2}\rceil > \lceil\frac{|V|}{2}\rceil$ becomes $\frac{d_i+2}{2} > \frac{|V|}{2}$, so $d_i+2>|V|$. Since $d_i$ and $|V|$ are both even, this can only happen if $d_i = |V|$.
        \item $d_i$ is even and $|V|$ is odd: $\lceil \frac{d_i+1}{2}\rceil > \lceil\frac{|V|}{2}\rceil$ becomes $\frac{d_i+2}{2} > \frac{|V|+1}{2}$, so $d_i+1>|V|$, which implies  $d_i = |V|$. This is not possible, since one of them is odd and one is even.
        \item $d_i$ is odd and $|V|$ is even: $\lceil \frac{d_i+1}{2}\rceil > \lceil\frac{|V|}{2}\rceil$ becomes $\frac{d_i+1}{2} > \frac{|V|}{2}$, so $d_i+1>|V|$, which implies  $d_i = |V|$. This is not possible, since one of them is odd and one is even.
        \item $d_i$ is odd and $|V|$ is odd: $\lceil \frac{d_i+1}{2}\rceil > \lceil\frac{|V|}{2}\rceil$ becomes $\frac{d_i+1}{2} > \frac{|V|+1}{2}$, so $d_i>|V|$, which is clearly impossible.
    \end{itemize}
    Hence, the only possibility for $\lceil \frac{d_i+1}{2}\rceil > \lceil\frac{|V|}{2}\rceil$ is that both $d_i$ and $|V|$ are even, and they are equal: $i$ is connected to \emph{all} nodes. 
    This is also true in the other direction, if for more than half of the nodes $i$,  $d_i = |V|$ and they are both even, a \texttt{Mm/Mw} illusion is not possible. On this idea and the idea expressed in Proposition \ref{lem:minority} is the following proposition based, however, it is slightly more general (not just about even degrees and not only about cases where there is no tie). 
    We can prove the proposition in an easier way without both previous propositions:
\begin{proposition}\label{prop:refl_halfnodes}
On a 
graph 
$G = \langle V, E \rangle$ 
where at least half of the nodes are connected to all nodes, a \texttt{Mm/Mw} illusion is not possible.
\end{proposition}
\begin{proof}
 Take a 
 graph $G$ where at least half of the nodes are connected to all nodes. Any node $i$ that is connected to all nodes cannot be under \texttt{m/w} illusion, since it sees exactly the correct proportion of red and blue nodes, so $G$ cannot be in \texttt{Mm/Mw} illusion.
\end{proof}

We observe something more specific about the edge cases where exactly half or just less than half nodes are connected to all nodes:
\begin{proposition}\label{prop:exact_half_nodes}
    If exactly $\frac{|V|}{2}$ nodes are connected to all nodes ($|V|$ is even), a \texttt{Wm} illusion is possible, and if exactly $\frac{|V|-1}{2}$ nodes are connected to all nodes ($|V|$ is odd), a \texttt{Mw} illusion is possible. 
\end{proposition}
\begin{proof}
    In both cases, color all the nodes that are connected to all nodes blue and all other nodes red: then all red nodes are connected to all blue nodes but not to all red nodes, so more than / at least half of their neighbors are blue: they are under illusion.
\end{proof}
\begin{remark}
    Proposition \ref{prop:exact_half_nodes} might seem to contradict Proposition \ref{prop:refl_halfnodes} but it does not: in the even case, the illusion here is a \texttt{Wm} illusion, while Proposition \ref{prop:refl_halfnodes} is about \texttt{Mw} illusions. In the odd case, the number of nodes here is just below half, while in Proposition \ref{prop:refl_halfnodes} it is just more than half.
\end{remark}

As mentioned in 
Section \ref{sec:reflexive}, some results on irreflexive graphs still hold on graphs that can have reflexive edges, but some do not.
In Table \ref{tab:results_reflexive}, a summary is given of the classes of graphs on which majority illusions are possible and on which they are not.
  \begin{table}[h]
\caption{The (im)-possibility of majority illusions (using two colors) on different classes of graphs. \checkmark~ indicates that the illusion is possible on all graphs of the class, \xmark~  indicates that the illusion is not possible on any graph in the class, \checkmark / \xmark~ indicates that the illusion is possible on some but not all graphs of the class. 
References to the relevant results are given. We do not assume that graphs are irreflexive here.
Note that for irreflexive graphs with only odd-degree nodes and 2-colorable graphs the \texttt{Mw} illusion is always either a \texttt{Mm} illusion or \texttt{Uw} illusion, conform 
Proposition \ref{thm:degree-odd} and Lemma \ref{lem:2-col}, and that for irreflexive complete graphs with an even number of nodes $|V|$, the \texttt{Mw} illusion is always a \texttt{Uw} illusion, conform Proposition \ref{prop:complete-maj-weak-maj}. 
}
   \centering
    \begin{tabular}{|p{0.21\linewidth}|p{0.21\linewidth}| p{0.15\linewidth}|p{0.12\linewidth}|}
    \hline
        \multirow{1}{*}{\textbf{Class of graphs}}  &  \multicolumn{1}{c|}{\multirow{1}{*}{\textbf{\texttt{Mw} illusion}}} & \textbf{\texttt{Wm} illusion}& \textbf{\texttt{Mm} illusion}  \\ \hline
         Irreflexive graphs & \multicolumn{1}{c|}{\multirow[c]{11}{*}{\checkmark (Thm. \ref{thm:maj-weak-maj})}} & \multicolumn{2}{c|}{\checkmark / \xmark}      \\  \cline{1-1} \cline{3-4}
         2-colorable graphs with $|V|$ odd  & & \multicolumn{2}{c|}{\multirow{2}{*}{\checkmark~ (Prop. \ref{prop:2-col-maj-maj})}} \\ \cline{1-1}  \cline{3-4}
         2-colorable graphs with $i\in V: \forall j\in N_i: d_j>2$  & & \multicolumn{1}{c|}{\multirow{3}{*}{\checkmark~ (Prop. \ref{prop:2col-strict})}}& \multicolumn{1}{c|}{\multirow{3}{*}{\checkmark / \xmark}}\\ \cline{1-1} \cline{3-4}
         Irr. 2-regular graphs  & & \multicolumn{2}{c|}{\xmark~ (Prop. \ref{prop:k=2})}\\ \cline{1-1} \cline{3-4}
         Irr. complete graphs   &  & \multicolumn{2}{c|}{\xmark~ (Prop. \ref{prop:complete-weak-maj-maj})} \\     \hline
           2-regular graphs  & \multicolumn{1}{c|}{\checkmark/\xmark}
         & \multicolumn{2}{c|}{\xmark~ (Prop. \ref{prop:k=2})}\\ \hline
           complete graphs  & \multicolumn{3}{c|}{\xmark}\\ \hline
         Graphs with at least $\frac{|V|}{2}$ nodes connected to all nodes  & \multicolumn{1}{c|}{\multirow{3}{*}{\xmark (Prop. \ref{prop:refl_halfnodes})}} & \multirow{3}{*}{\checkmark/\xmark} & \multirow{3}{*}{\xmark (Prop. \ref{prop:refl_halfnodes})} \\ \hline
         Graphs with exactly $\frac{|V|}{2}$ nodes connected to all nodes  & \multicolumn{1}{c|}{\multirow{3}{*}{\checkmark/\xmark}} & \multirow{3}{*}{\checkmark (Prop. \ref{prop:exact_half_nodes})} &\multirow{3}{*}{\checkmark/\xmark} \\ \hline
         Graphs with exactly $\frac{|V|-1}{2}$ nodes connected to all nodes  & \multicolumn{1}{c|}{\multirow{3}{*}{\checkmark (Prop. \ref{prop:exact_half_nodes}) }} & \multicolumn{2}{c|}{\multirow{3}{*}{\checkmark/\xmark}} \\ \hline
    \end{tabular}
    \label{tab:results_reflexive}
\end{table}

\section{Additional Plots of the Experiments in Section \ref{sec:simulations}}\label{app:experiment_extras}

This appendix gives some extra plots on the experiments in Section \ref{sec:simulations}. The full data can be found on \href{https://github.com/MaaikeLos/Majority_Illusions}{https://github.com/MaaikeLos/Majority\_Illusions}, as well as the R code to generate the plots in this section and more plots to study the effect of the different parameters in detail.

First, we give the full Pearson correlation matrices for Erd\H{o}s-R\'{e}nyi graphs, Holme-Kim graphs, and the Facebook network in respectively Figure \ref{fig:cormatER}, Figure \ref{fig:cormatHK}, and Figure \ref{fig:cormatFB}.

Figures \ref{fig:median_frac_illusion_ER} and \ref{fig:median_frac_illusion_HK} show the median fraction of nodes under illusion (with the interquartile range) in Erd\H{o}s-R\'{e}nyi (respectively Holme-Kim) graphs with different input parameters, which allows us to see the interaction effects between the different parameters.
Figure \ref{fig:MSE_HK} shows the mean squared error of nodes in Holme-Kim graphs with different input parameters. Similar plots for Erd\H{o}s-R\'{e}nyi graphs and for the fraction of networks under \texttt{Mm/Mw/Wm/Ww} illusion can be obtained using the R code on Github.

Figure \ref{fig:deg_assort} (a) shows the median fraction of nodes under illusion (and the interquartile range) versus \emph{degree assortativity} in Erd\H{o}s-R\'{e}nyi graphs. Figure \ref{fig:deg_assort} (b) shows the mean squared error of nodes versus degree assortativity in Erd\H{o}s-R\'{e}nyi graphs.


Figure \ref{fig:homophily} (a) 
shows the median fraction of nodes under illusion (and the interquartile range) versus \emph{homophily} in 
Holme-Kim graphs. Figure \ref{fig:homophily} 
(b) shows the mean squared error of nodes versus the homophily in 
Holme-Kim
graphs
.


\begin{figure}
    \centering
    \includegraphics[width=\linewidth]{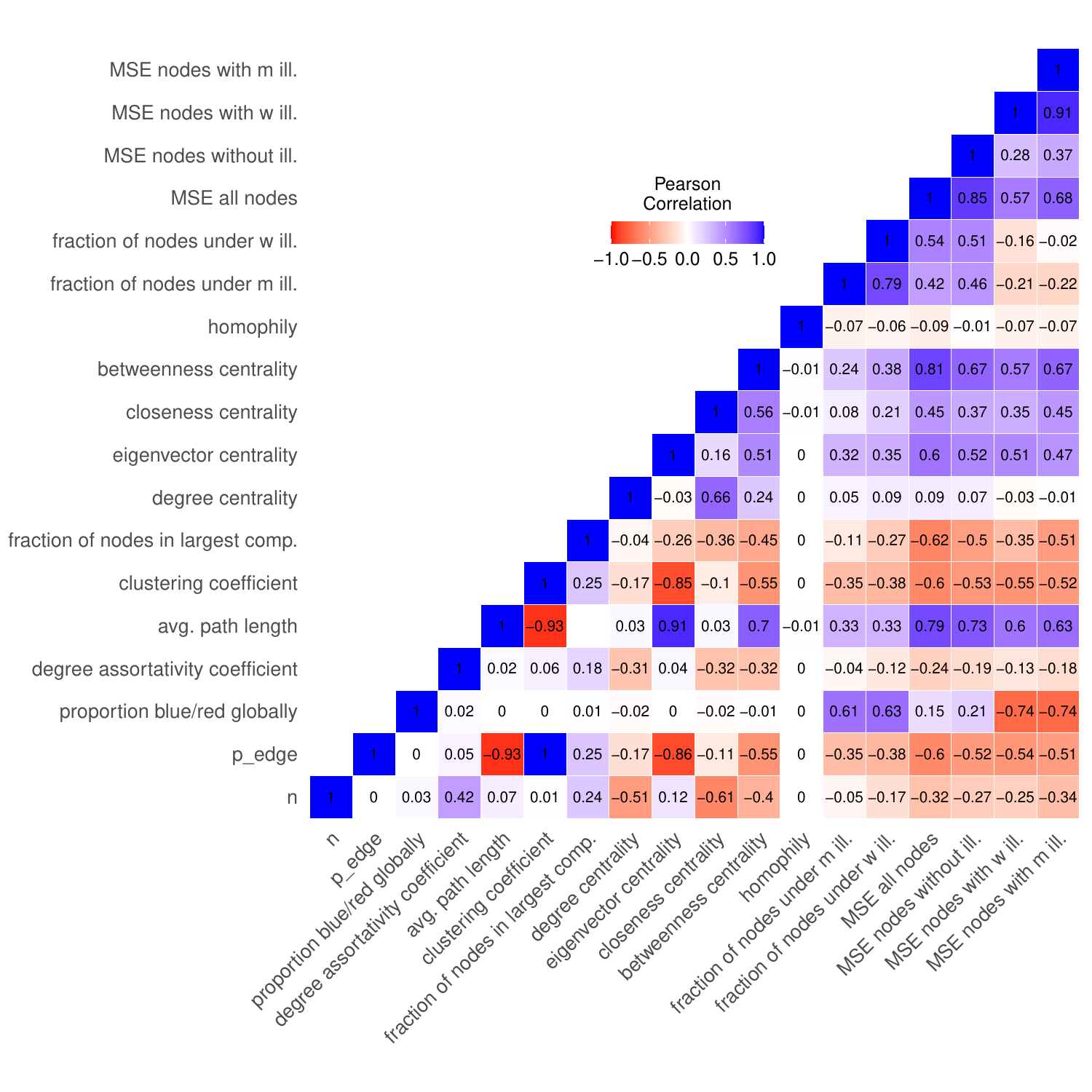}
    \caption{Correlation matrix for Erd\H{o}s-R\'{e}nyi graphs. Blue shades correspond to positive correlation, red shades to negative correlation, the intensity of the color indicates the strength of the correlation.}
    \label{fig:cormatER}
\end{figure}
\begin{figure}
    \centering
    \includegraphics[width=\linewidth]{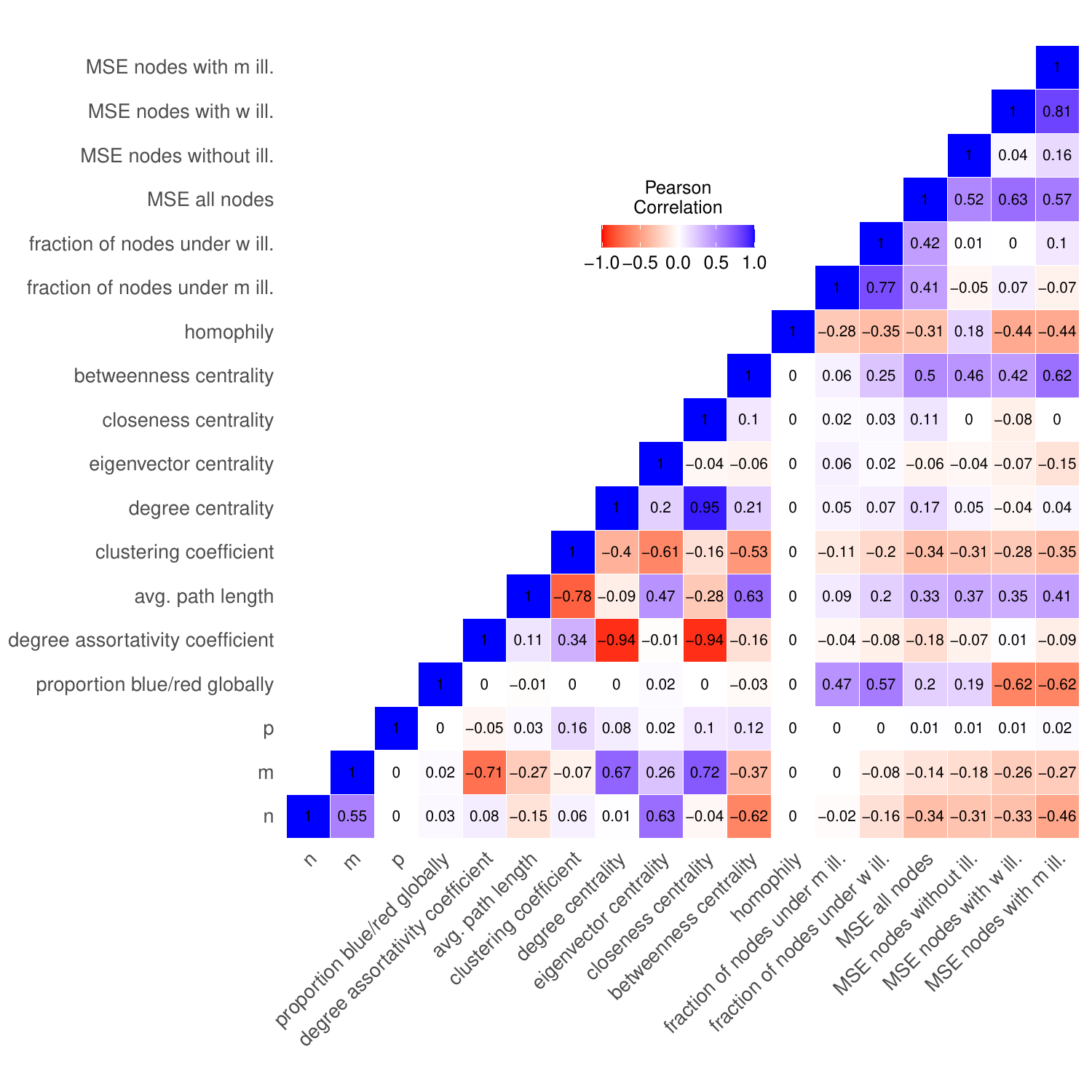}
    \caption{Correlation matrix for Holme-Kim graphs. Blue shades correspond to positive correlation, red shades to negative correlation, the intensity of the color indicates the strength of the correlation.}
    \label{fig:cormatHK}
\end{figure}
\begin{figure}
    \centering
    \includegraphics[width=0.65\linewidth]{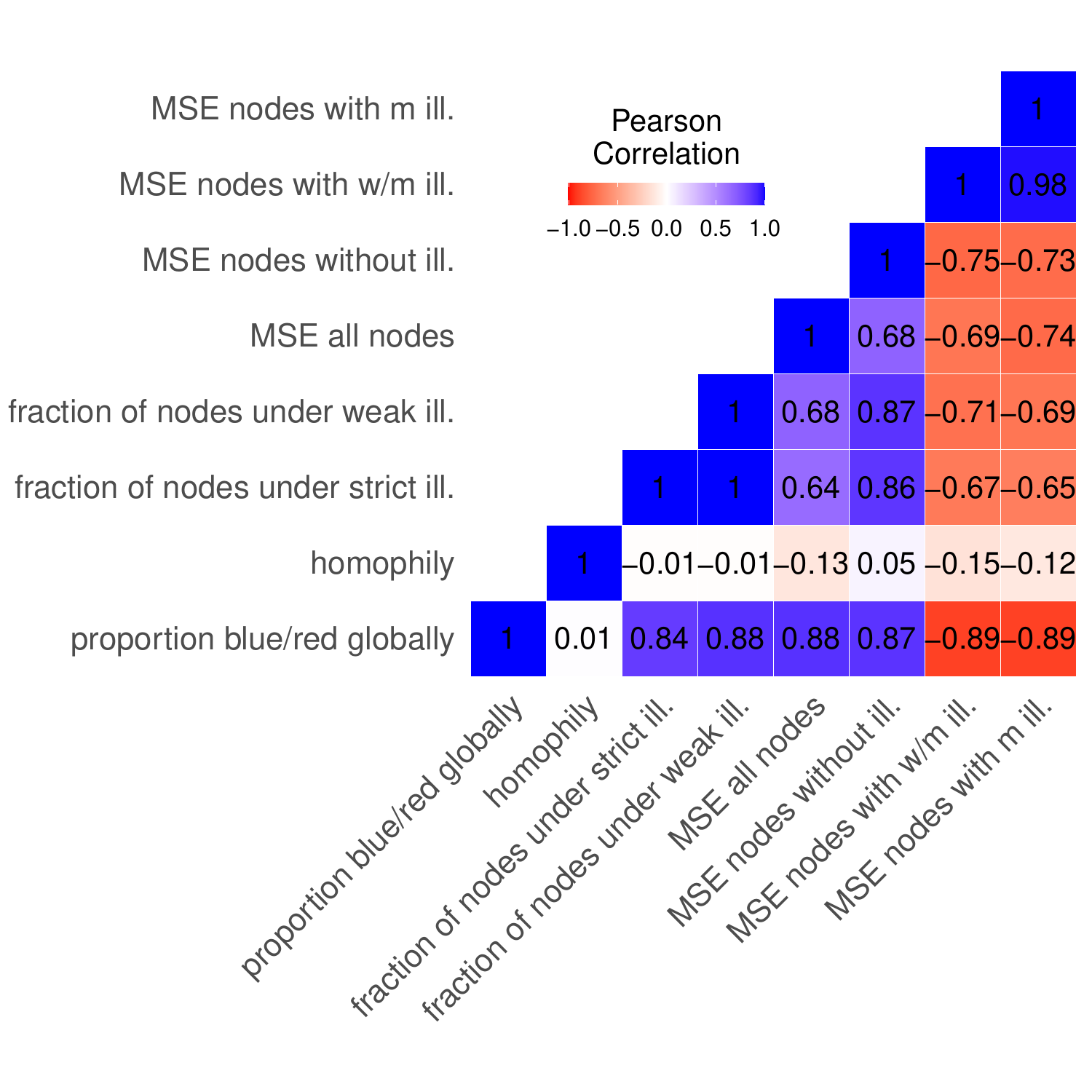}
    \caption{Correlation matrix for the Facebook network. Blue shades correspond to positive correlation, red shades to negative correlation, the intensity of the color indicates the strength of the correlation.}
    \label{fig:cormatFB}
\end{figure}

\begin{figure}
    \centering
    \includegraphics[width=\linewidth]{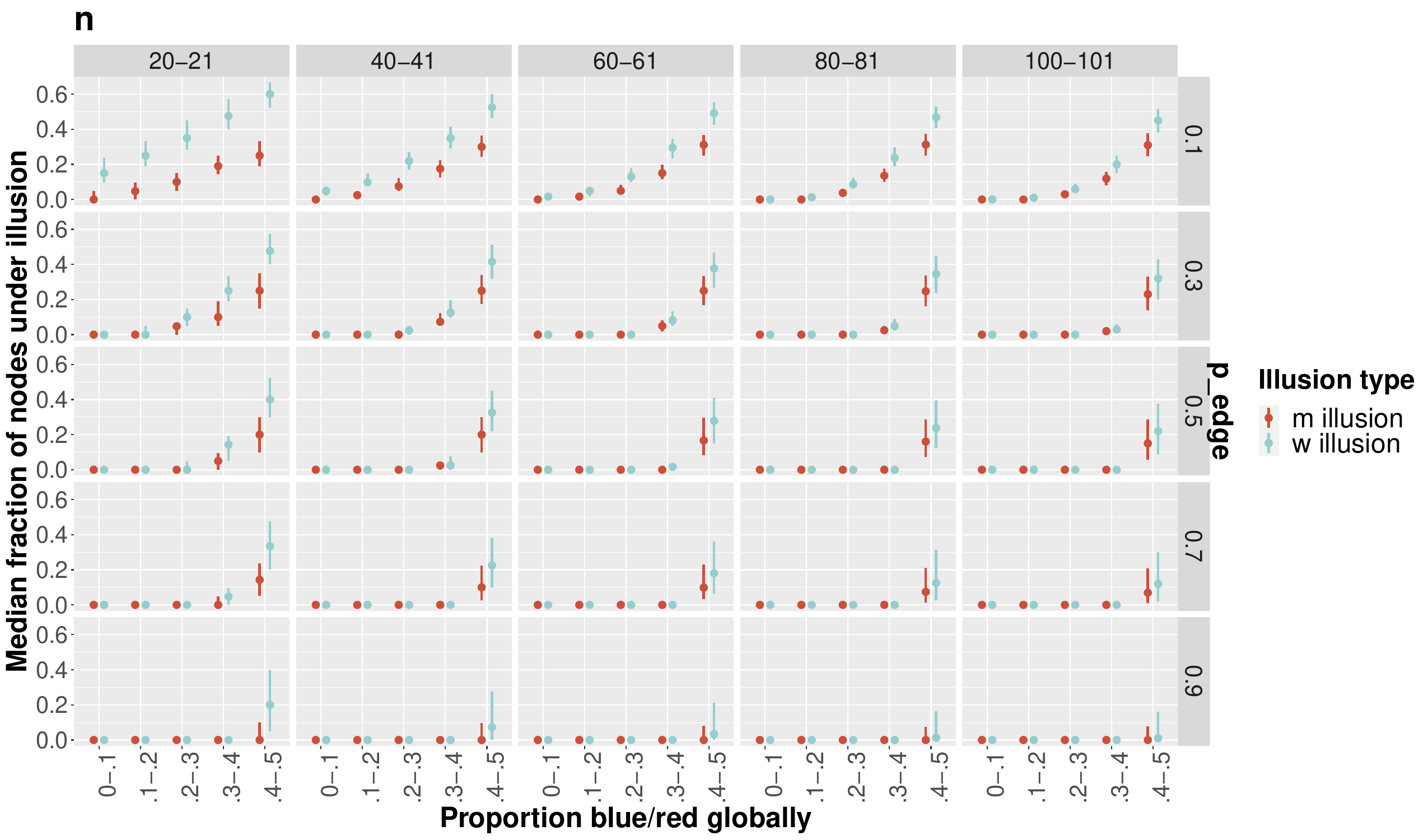}
    \caption{Median fraction of nodes under illusion (and interquartile range) in Erd\H{o}s-R\'enyi graphs with different parameters}
    \label{fig:median_frac_illusion_ER}
\end{figure}
\begin{figure}
    \centering
    \includegraphics[width=\linewidth]{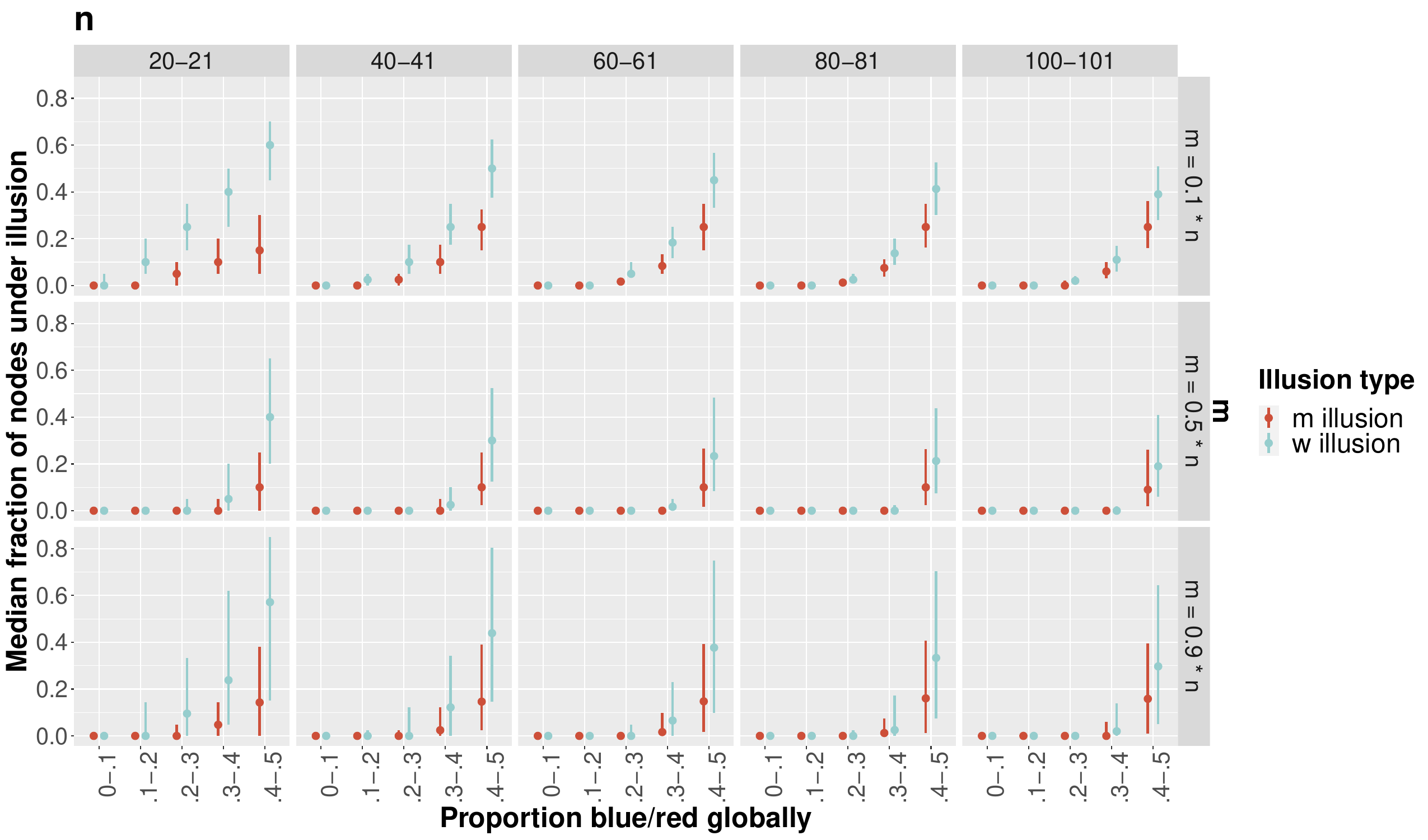}
    \caption{Median fraction of nodes under illusion (and interquartile range) in Holme-Kim graphs with different parameters}
    \label{fig:median_frac_illusion_HK}
\end{figure}

\begin{figure}
    \centering
    \includegraphics[width=\linewidth]{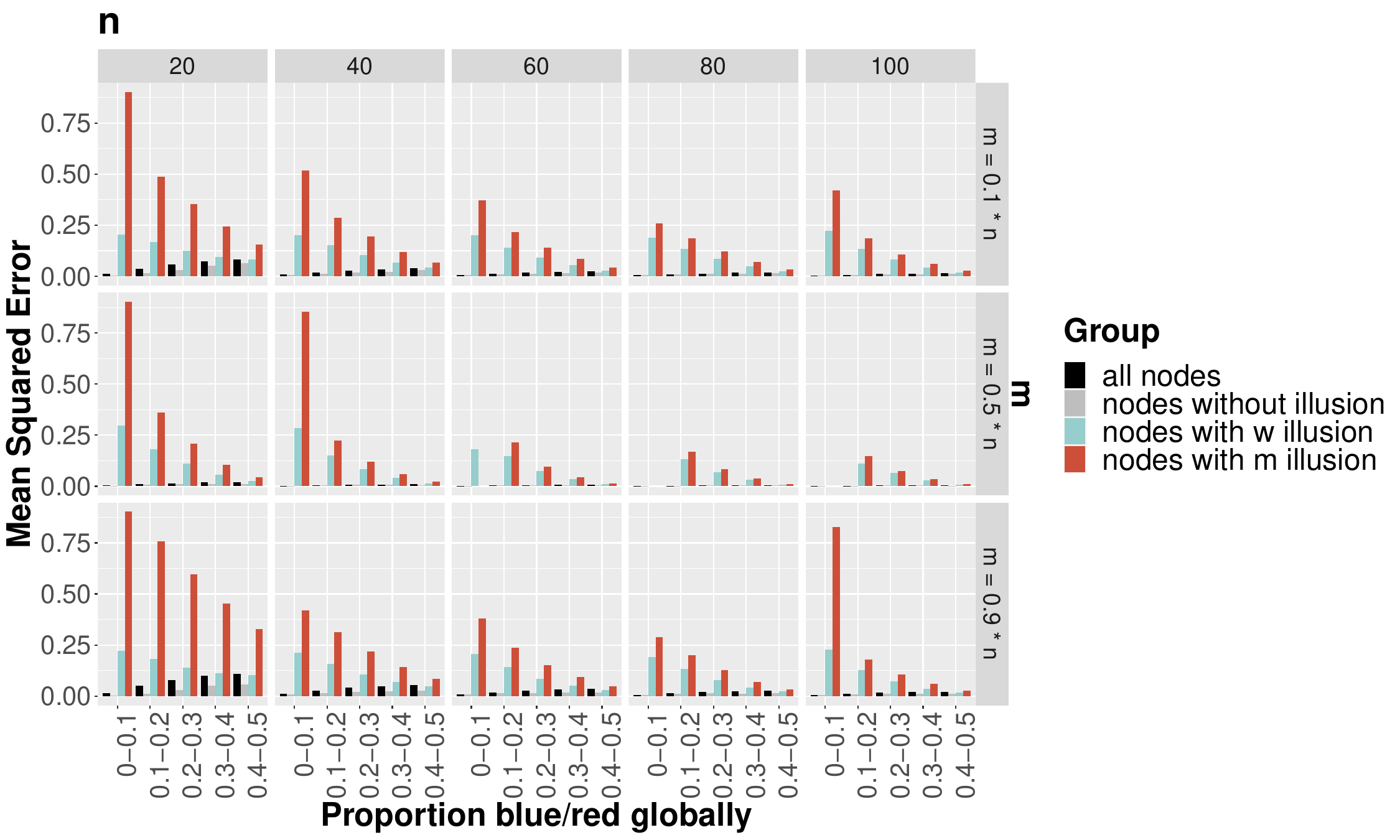}
    \caption{Mean squared error in Holme-Kim graphs with different parameters}
    \label{fig:MSE_HK}
\end{figure}

\begin{figure}
    \centering
    \begin{subfigure}{\textwidth}
    \centering\captionsetup{width=\linewidth}
        \includegraphics[width=\textwidth]{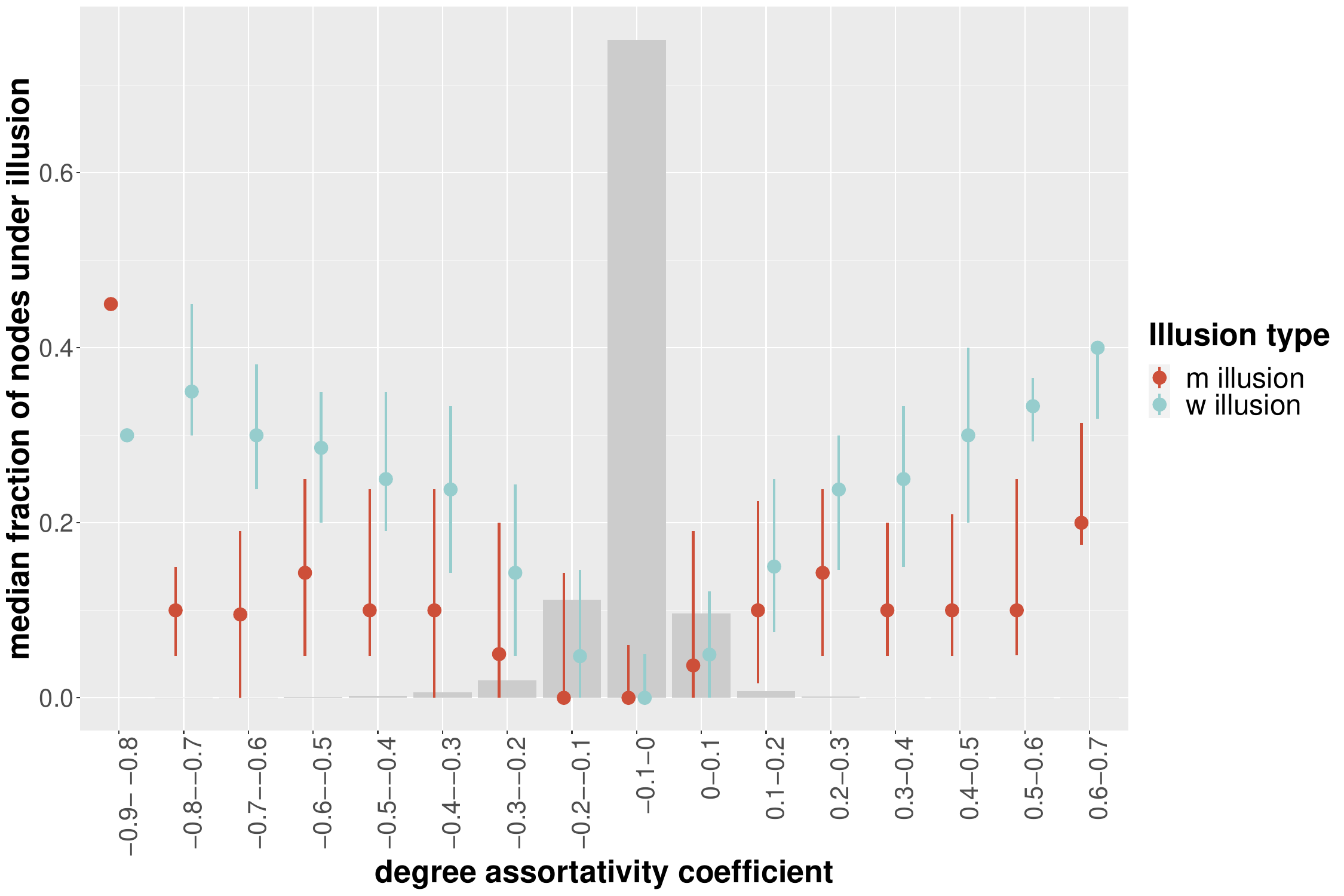}
        \caption{Median fraction of nodes under illusion versus degree assortativity coefficient.}
    \end{subfigure}\\
    \begin{subfigure}{\textwidth}
    \centering\captionsetup{width=\linewidth}
    \includegraphics[width=\textwidth]{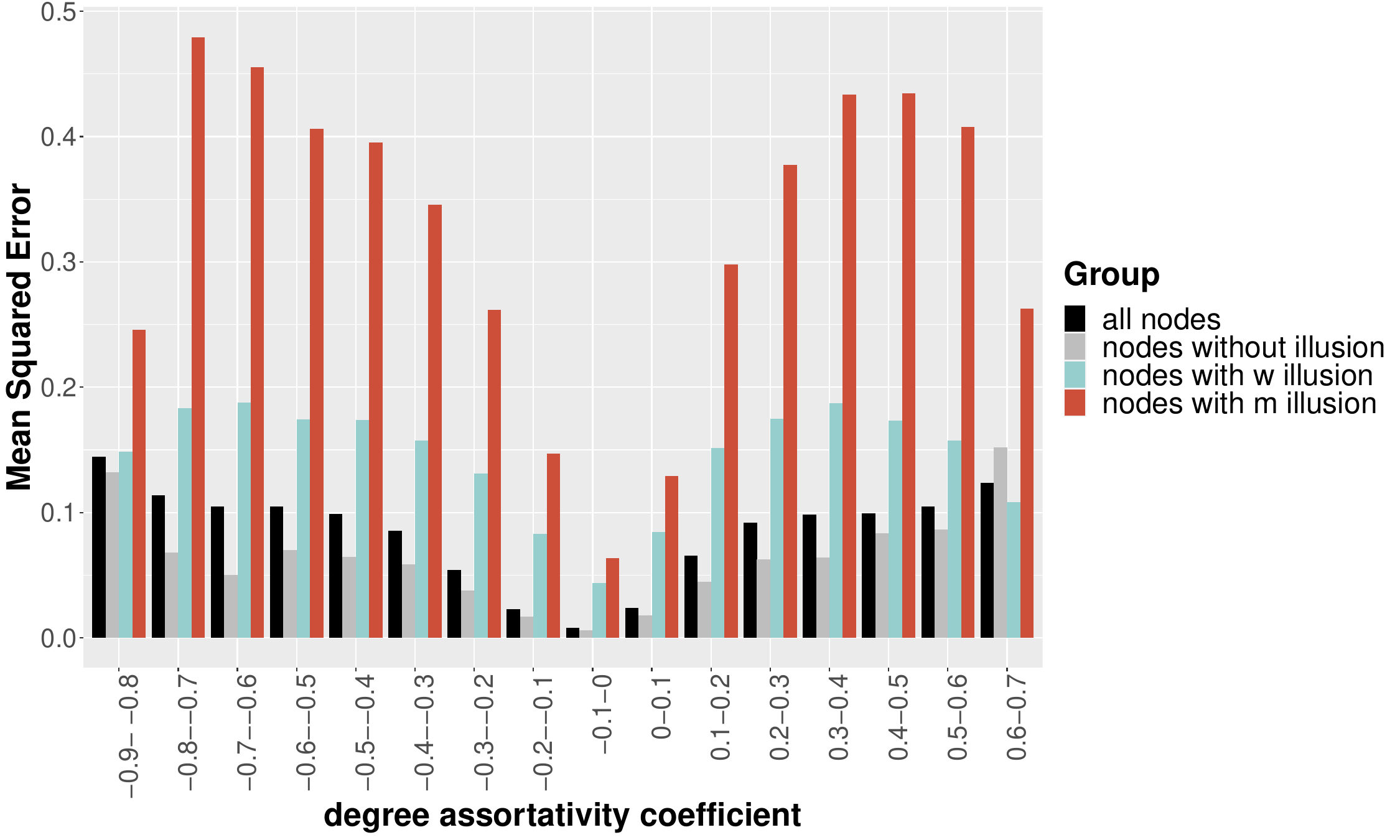}
        \caption{Mean squared error versus degree assortativity coefficient.}
    \end{subfigure}%
\caption{Median fraction of nodes under illusion / mean squared error versus \textbf{degree assortativity} in  Erd\H{o}s-R\'{e}nyi graphs. In (a) in grey bars in the background the distribution of the datapoints.}
\label{fig:deg_assort}
\end{figure}

\begin{figure}
    \centering
    \begin{subfigure}{\textwidth}
    \centering\captionsetup{width=\linewidth}
        \includegraphics[width=\textwidth]{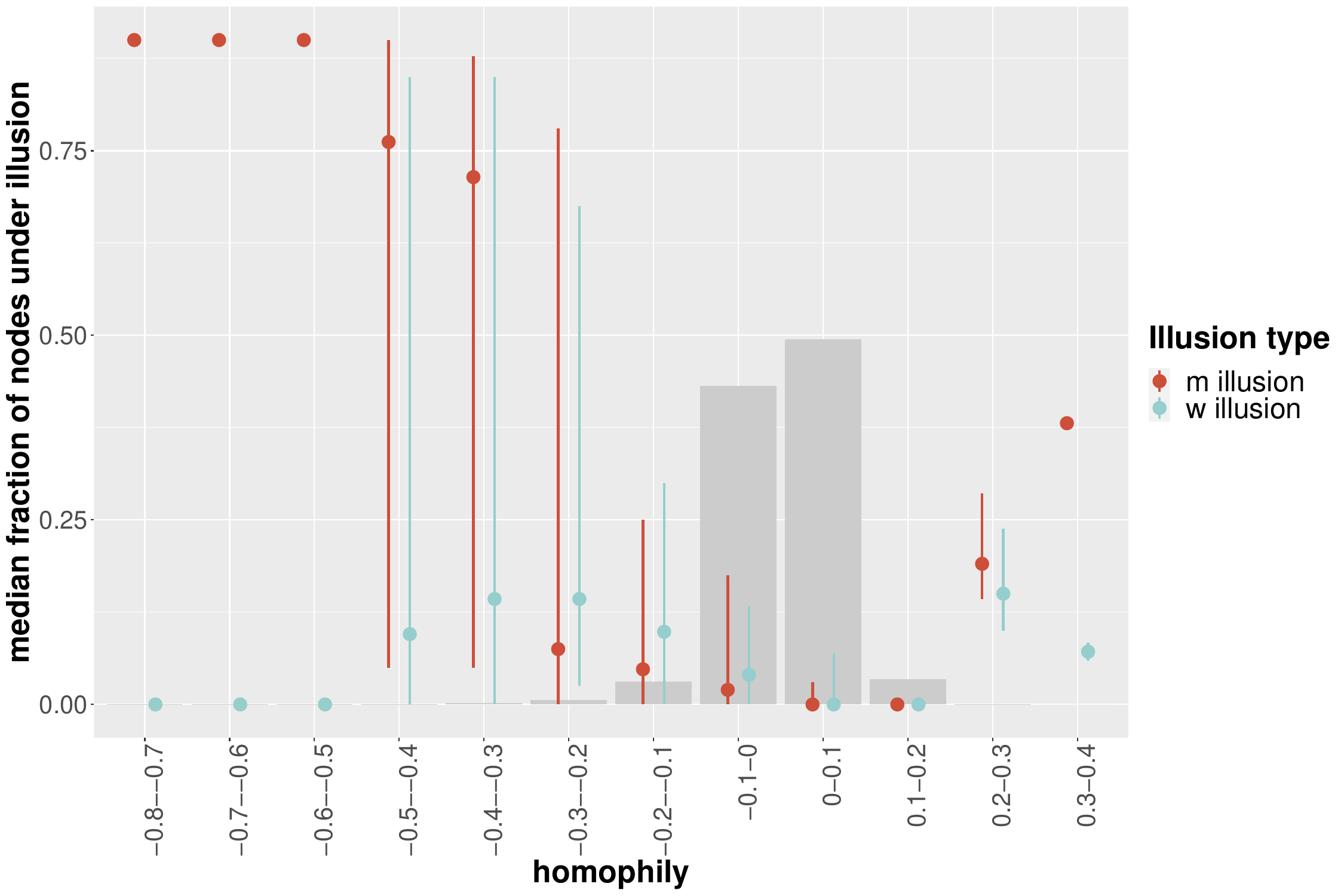}
        \caption{Median fraction of nodes under illusion versus homophily.}
    \end{subfigure}\\
    \begin{subfigure}{\textwidth}
    \centering\captionsetup{width=\linewidth}
    \includegraphics[width=\textwidth]{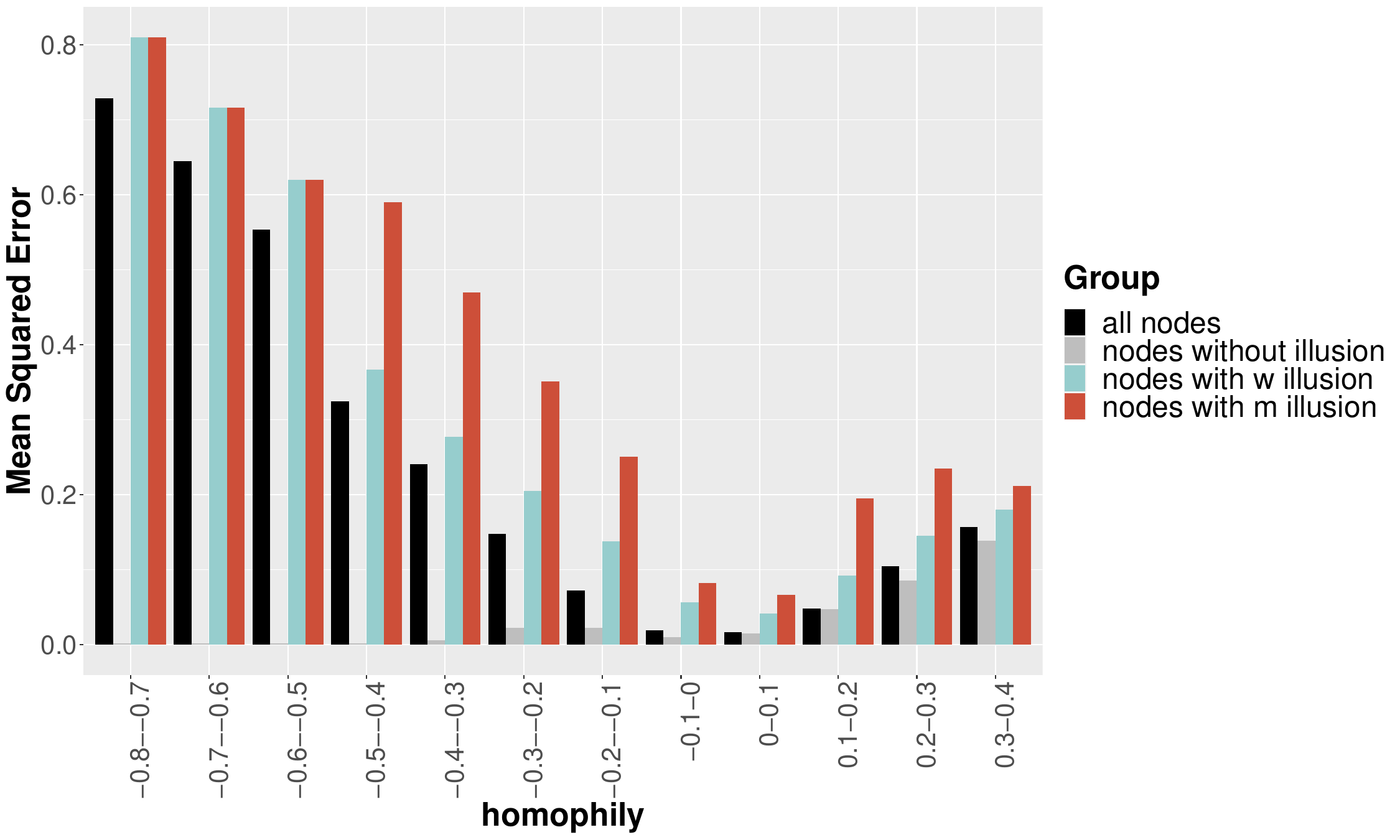}
        \caption{Mean squared error versus homophily.}
    \end{subfigure}
\caption{Median fraction of nodes under illusion / mean squared error versus  \textbf{homophily} in Holme-Kim graphs. In (a) in the background in grey bars the distribution of the datapoints.}
\label{fig:homophily}
\end{figure}

\end{appendices}

\clearpage
\bibliography{bibliography}


\begin{thebibliography}{53}
\ifx \bisbn   \undefined \def \bisbn  #1{ISBN #1}\fi
\ifx \binits  \undefined \def \binits#1{#1}\fi
\ifx \bauthor  \undefined \def \bauthor#1{#1}\fi
\ifx \batitle  \undefined \def \batitle#1{#1}\fi
\ifx \bjtitle  \undefined \def \bjtitle#1{#1}\fi
\ifx \bvolume  \undefined \def \bvolume#1{\textbf{#1}}\fi
\ifx \byear  \undefined \def \byear#1{#1}\fi
\ifx \bissue  \undefined \def \bissue#1{#1}\fi
\ifx \bfpage  \undefined \def \bfpage#1{#1}\fi
\ifx \blpage  \undefined \def \blpage #1{#1}\fi
\ifx \burl  \undefined \def \burl#1{\textsf{#1}}\fi
\ifx \doiurl  \undefined \def \doiurl#1{\url{https://doi.org/#1}}\fi
\ifx \betal  \undefined \def \betal{\textit{et al.}}\fi
\ifx \binstitute  \undefined \def \binstitute#1{#1}\fi
\ifx \binstitutionaled  \undefined \def \binstitutionaled#1{#1}\fi
\ifx \bctitle  \undefined \def \bctitle#1{#1}\fi
\ifx \beditor  \undefined \def \beditor#1{#1}\fi
\ifx \bpublisher  \undefined \def \bpublisher#1{#1}\fi
\ifx \bbtitle  \undefined \def \bbtitle#1{#1}\fi
\ifx \bedition  \undefined \def \bedition#1{#1}\fi
\ifx \bseriesno  \undefined \def \bseriesno#1{#1}\fi
\ifx \blocation  \undefined \def \blocation#1{#1}\fi
\ifx \bsertitle  \undefined \def \bsertitle#1{#1}\fi
\ifx \bsnm \undefined \def \bsnm#1{#1}\fi
\ifx \bsuffix \undefined \def \bsuffix#1{#1}\fi
\ifx \bparticle \undefined \def \bparticle#1{#1}\fi
\ifx \barticle \undefined \def \barticle#1{#1}\fi
\bibcommenthead
\ifx \bconfdate \undefined \def \bconfdate #1{#1}\fi
\ifx \botherref \undefined \def \botherref #1{#1}\fi
\ifx \url \undefined \def \url#1{\textsf{#1}}\fi
\ifx \bchapter \undefined \def \bchapter#1{#1}\fi
\ifx \bbook \undefined \def \bbook#1{#1}\fi
\ifx \bcomment \undefined \def \bcomment#1{#1}\fi
\ifx \oauthor \undefined \def \oauthor#1{#1}\fi
\ifx \citeauthoryear \undefined \def \citeauthoryear#1{#1}\fi
\ifx \endbibitem  \undefined \def \endbibitem {}\fi
\ifx \bconflocation  \undefined \def \bconflocation#1{#1}\fi
\ifx \arxivurl  \undefined \def \arxivurl#1{\textsf{#1}}\fi
\csname PreBibitemsHook\endcsname

\bibitem[\protect\citeauthoryear{Lazarsfeld and Merton}{1954}]{lazarsfeld1954}
\begin{botherref}
\oauthor{\bsnm{Lazarsfeld}, \binits{P.F.}},
\oauthor{\bsnm{Merton}, \binits{R.K.}}:
Friendship as a social process: a substantive and methodological analysis.
Freedom and Control in Modern Society,
18--66
(1954)
\doiurl{10.1007/978-3-658-21742-6_76}
\end{botherref}
\endbibitem

\bibitem[\protect\citeauthoryear{McPherson et~al.}{2001}]{McPherson2001}
\begin{barticle}
\bauthor{\bsnm{McPherson}, \binits{M.}},
\bauthor{\bsnm{Smith-Lovin}, \binits{L.}},
\bauthor{\bsnm{Cook}, \binits{J.M.}}:
\batitle{Birds of a feather: Homophily in social networks}.
\bjtitle{Annual Review of Sociology}
\bvolume{27}(\bissue{1}),
\bfpage{415}--\blpage{444}
(\byear{2001})
\doiurl{10.1146/annurev.soc.27.1.415}
\end{barticle}
\endbibitem

\bibitem[\protect\citeauthoryear{Lerman et~al.}{2016}]{Lerman2016}
\begin{barticle}
\bauthor{\bsnm{Lerman}, \binits{K.}},
\bauthor{\bsnm{Yan}, \binits{X.}},
\bauthor{\bsnm{Wu}, \binits{X.-Z.}}:
\batitle{The "majority illusion" in social networks}.
\bjtitle{PLOS ONE}
\bvolume{11}(\bissue{2}),
\bfpage{1}--\blpage{13}
(\byear{2016})
\doiurl{10.1371/journal.pone.0147617}
\end{barticle}
\endbibitem

\bibitem[\protect\citeauthoryear{Schelling}{1973}]{Schelling1973}
\begin{barticle}
\bauthor{\bsnm{Schelling}, \binits{T.C.}}:
\batitle{Hockey helmets, concealed weapons, and daylight saving: A study of binary choices with externalities}.
\bjtitle{Journal of Conflict Resolution}
\bvolume{17}(\bissue{3}),
\bfpage{381}--\blpage{428}
(\byear{1973})
{\href{https://arxiv.org/abs/https://doi.org/10.1177/002200277301700302}{{https://doi.org/10.1177/002200277301700302}}}
\end{barticle}
\endbibitem

\bibitem[\protect\citeauthoryear{Granovetter}{1978}]{Granovetter1978}
\begin{barticle}
\bauthor{\bsnm{Granovetter}, \binits{M.}}:
\batitle{Threshold models of collective behavior}.
\bjtitle{American Journal of Sociology}
\bvolume{83}(\bissue{6}),
\bfpage{1420}--\blpage{1443}
(\byear{1978})
{\href{https://arxiv.org/abs/http://www.jstor.org/stable/2778111}{{http://www.jstor.org/stable/2778111}}}
\end{barticle}
\endbibitem

\bibitem[\protect\citeauthoryear{Salganik et~al.}{2006}]{Salganik2006}
\begin{barticle}
\bauthor{\bsnm{Salganik}, \binits{M.J.}},
\bauthor{\bsnm{Dodds}, \binits{P.S.}},
\bauthor{\bsnm{Watts}, \binits{D.J.}}:
\batitle{Experimental study of inequality and unpredictability in an artificial cultural market}.
\bjtitle{Science}
\bvolume{311}(\bissue{5762}),
\bfpage{854}--\blpage{856}
(\byear{2006})
\doiurl{10.1126/science.1121066}
\end{barticle}
\endbibitem

\bibitem[\protect\citeauthoryear{Centola}{2010}]{Centola2010}
\begin{barticle}
\bauthor{\bsnm{Centola}, \binits{D.}}:
\batitle{The spread of behavior in an online social network experiment}.
\bjtitle{Science}
\bvolume{329}(\bissue{5996}),
\bfpage{1194}--\blpage{1197}
(\byear{2010})
\doiurl{10.1126/science.1185231}
\end{barticle}
\endbibitem

\bibitem[\protect\citeauthoryear{Young}{2011}]{Young2011}
\begin{barticle}
\bauthor{\bsnm{Young}, \binits{H.P.}}:
\batitle{The dynamics of social innovation}.
\bjtitle{Proceedings of the National Academy of Sciences}
\bvolume{108}(\bissue{supplement\_4}),
\bfpage{21285}--\blpage{21291}
(\byear{2011})
\doiurl{10.1073/pnas.1100973108}
\end{barticle}
\endbibitem

\bibitem[\protect\citeauthoryear{Centola and Baronchelli}{2015}]{Centola2015}
\begin{barticle}
\bauthor{\bsnm{Centola}, \binits{D.}},
\bauthor{\bsnm{Baronchelli}, \binits{A.}}:
\batitle{The spontaneous emergence of conventions: An experimental study of cultural evolution}.
\bjtitle{Proceedings of the National Academy of Sciences}
\bvolume{112}(\bissue{7}),
\bfpage{1989}--\blpage{1994}
(\byear{2015})
\doiurl{10.1073/pnas.1418838112}
\end{barticle}
\endbibitem

\bibitem[\protect\citeauthoryear{DeGroot}{1974}]{Degroot1974}
\begin{barticle}
\bauthor{\bsnm{DeGroot}, \binits{M.H.}}:
\batitle{Reaching a consensus}.
\bjtitle{Journal of the American Statistical Association}
\bvolume{69}(\bissue{345}),
\bfpage{118}--\blpage{121}
(\byear{1974})
\doiurl{10.1080/01621459.1974.10480137}
\end{barticle}
\endbibitem

\bibitem[\protect\citeauthoryear{Harary}{1959}]{Harary1959}
\begin{botherref}
\oauthor{\bsnm{Harary}, \binits{F.}}:
A criterion for unanimity in french's theory of social power.
Studies in social power,
168--182
(1959)
\end{botherref}
\endbibitem

\bibitem[\protect\citeauthoryear{Friedkin and Johnsen}{2011}]{Friedkin_Johnsen_2011}
\begin{bbook}
\bauthor{\bsnm{Friedkin}, \binits{N.E.}},
\bauthor{\bsnm{Johnsen}, \binits{E.C.}}:
\bbtitle{Social Influence Network Theory: A Sociological Examination of Small Group Dynamics}.
\bsertitle{Structural Analysis in the Social Sciences}.
\bpublisher{Cambridge University Press},
\blocation{Cambridge}
(\byear{2011}).
\doiurl{10.1017/CBO9780511976735}
\end{bbook}
\endbibitem

\bibitem[\protect\citeauthoryear{Bettencourt et~al.}{2006}]{Bettencourt2006}
\begin{barticle}
\bauthor{\bsnm{Bettencourt}, \binits{L.M.A.}},
\bauthor{\bsnm{Cintrón-Arias}, \binits{A.}},
\bauthor{\bsnm{Kaiser}, \binits{D.I.}},
\bauthor{\bsnm{Castillo-Chávez}, \binits{C.}}:
\batitle{The power of a good idea: Quantitative modeling of the spread of ideas from epidemiological models}.
\bjtitle{Physica A: Statistical Mechanics and its Applications}
\bvolume{364},
\bfpage{513}--\blpage{536}
(\byear{2006})
\doiurl{10.1016/j.physa.2005.08.083}
\end{barticle}
\endbibitem

\bibitem[\protect\citeauthoryear{Allen}{1994}]{Allen1994}
\begin{barticle}
\bauthor{\bsnm{Allen}, \binits{L.J.S.}}:
\batitle{Some discrete-time si, sir, and sis epidemic models}.
\bjtitle{Mathematical Biosciences}
\bvolume{124}(\bissue{1}),
\bfpage{83}--\blpage{105}
(\byear{1994})
\doiurl{10.1016/0025-5564(94)90025-6}
\end{barticle}
\endbibitem

\bibitem[\protect\citeauthoryear{Schmitt-Beck}{2015}]{SchmittBeck2015Bandwagon}
\begin{bchapter}
\bauthor{\bsnm{Schmitt-Beck}, \binits{R.}}:
\bctitle{Bandwagon effect}.
In: \bbtitle{The International Encyclopedia of Political Communication},
pp. \bfpage{1}--\blpage{5}.
\bpublisher{John Wiley and Sons, Ltd},
\blocation{Chichester, England}
(\byear{2015}).
\doiurl{10.1002/9781118541555.wbiepc015}
\end{bchapter}
\endbibitem

\bibitem[\protect\citeauthoryear{Auletta et~al.}{2023}]{Auletta2023}
\begin{bchapter}
\bauthor{\bsnm{Auletta}, \binits{V.}},
\bauthor{\bsnm{Ferraioli}, \binits{D.}},
\bauthor{\bsnm{Viscito}, \binits{C.}}:
\bctitle{Election manipulation on social networks with abstention}.
In: \beditor{\bsnm{Malvone}, \binits{V.}},
\beditor{\bsnm{Murano}, \binits{A.}} (eds.)
\bbtitle{Multi-Agent Systems},
pp. \bfpage{435}--\blpage{444}.
\bpublisher{Springer},
\blocation{Cham}
(\byear{2023}).
\burl{https://doi.org/10.1007/978-3-031-43264-4_29}
\end{bchapter}
\endbibitem

\bibitem[\protect\citeauthoryear{Bergstrom and Bak-Coleman}{2019}]{Bergstrom2019}
\begin{barticle}
\bauthor{\bsnm{Bergstrom}, \binits{C.T.}},
\bauthor{\bsnm{Bak-Coleman}, \binits{J.B.}}:
\batitle{Information gerrymandering in social networks skews collective decision-making}.
\bjtitle{Nature}
\bvolume{573},
\bfpage{40}--\blpage{41}
(\byear{2019})
\doiurl{10.1038/d41586-019-02562-z}
\end{barticle}
\endbibitem

\bibitem[\protect\citeauthoryear{Stewart et~al.}{2019}]{stewart2019-long}
\begin{barticle}
\bauthor{\bsnm{Stewart}, \binits{A.J.}},
\bauthor{\bsnm{Mosleh}, \binits{M.}},
\bauthor{\bsnm{Diakonova}, \binits{M.}},
\bauthor{\bsnm{Arechar}, \binits{A.A.}},
\bauthor{\bsnm{Rand}, \binits{D.G.}},
\bauthor{\bsnm{Plotkin}, \binits{J.B.}}:
\batitle{Information gerrymandering and undemocratic decisions}.
\bjtitle{Nature}
\bvolume{573}(\bissue{7772}),
\bfpage{117}--\blpage{121}
(\byear{2019})
\doiurl{10.1038/s41586-019-1507-6}
\end{barticle}
\endbibitem

\bibitem[\protect\citeauthoryear{Feld}{1991}]{Feld1991}
\begin{barticle}
\bauthor{\bsnm{Feld}, \binits{S.L.}}:
\batitle{Why your friends have more friends than you do}.
\bjtitle{American Journal of Sociology}
\bvolume{96}(\bissue{6}),
\bfpage{1464}--\blpage{1477}
(\byear{1991})
{\href{https://arxiv.org/abs/http://www.jstor.org/stable/2781907}{{http://www.jstor.org/stable/2781907}}}
\end{barticle}
\endbibitem

\bibitem[\protect\citeauthoryear{Baer et~al.}{1991}]{Baer1991}
\begin{barticle}
\bauthor{\bsnm{Baer}, \binits{J.S.}},
\bauthor{\bsnm{Stacy}, \binits{A.}},
\bauthor{\bsnm{Larimer}, \binits{M.}}:
\batitle{Biases in the perception of drinking norms among college students.}
\bjtitle{Journal of Studies on Alcohol}
\bvolume{52}(\bissue{6}),
\bfpage{580}--\blpage{586}
(\byear{1991})
\doiurl{10.15288/jsa.1991.52.580}
{\href{https://arxiv.org/abs/https://doi.org/10.15288/jsa.1991.52.580}{{https://doi.org/10.15288/jsa.1991.52.580}}}
\end{barticle}
\endbibitem

\bibitem[\protect\citeauthoryear{Prentice and Miller}{1993}]{Prentice1993}
\begin{barticle}
\bauthor{\bsnm{Prentice}, \binits{D.A.}},
\bauthor{\bsnm{Miller}, \binits{D.T.}}:
\batitle{Pluralistic ignorance and alcohol use on campus: Some consequences of misperceiving the social norm}.
\bjtitle{Journal of Personality and Social Psychology}
\bvolume{62}(\bissue{2}),
\bfpage{243}--\blpage{256}
(\byear{1993})
\doiurl{10.1037/0022-3514.64.2.243}
\end{barticle}
\endbibitem

\bibitem[\protect\citeauthoryear{Erd\H{o}s and R\'{e}nyi}{1959}]{ErdosRenyi}
\begin{barticle}
\bauthor{\bsnm{Erd\H{o}s}, \binits{P.}},
\bauthor{\bsnm{R\'{e}nyi}, \binits{A.}}:
\batitle{On random graphs.}
\bjtitle{Publicationes Mathematicae}
\bvolume{6},
\bfpage{290}--\blpage{297}
(\byear{1959})
\doiurl{10.5486/PMD.1959.6.3-4.12}
\end{barticle}
\endbibitem

\bibitem[\protect\citeauthoryear{Merriam-Webster}{}]{mw:gerrymandering}
\begin{botherref}
\oauthor{\bsnm{Merriam-Webster}}:
Gerrymandering.
In: Merriam-Webster.com Dictionary.
\url{https://www.merriam-webster.com/dictionary/gerrymandering}
Accessed 2025-01-15
\end{botherref}
\endbibitem

\bibitem[\protect\citeauthoryear{Schuck}{1987}]{Schuck1987}
\begin{barticle}
\bauthor{\bsnm{Schuck}, \binits{P.H.}}:
\batitle{The thickest thicket: Partisan gerrymandering and judicial regulation of politics}.
\bjtitle{Columbia Law Review}
\bvolume{87}(\bissue{7}),
\bfpage{1325}--\blpage{1384}
(\byear{1987})
\doiurl{10.2307/1122527}
\end{barticle}
\endbibitem

\bibitem[\protect\citeauthoryear{Grandi et~al.}{2022}]{Grandi2022}
\begin{botherref}
\oauthor{\bsnm{Grandi}, \binits{U.}},
\oauthor{\bsnm{Lisowski}, \binits{G.}},
\oauthor{\bsnm{Ramanujan}, \binits{M.S.}},
\oauthor{\bsnm{Turrini}, \binits{P.}}:
On the Complexity of Majority Illusion in Social Networks.
arXiv
(2022).
\doiurl{10.48550/ARXIV.2205.02056}
\end{botherref}
\endbibitem

\bibitem[\protect\citeauthoryear{Centola}{2022}]{Centola2022}
\begin{barticle}
\bauthor{\bsnm{Centola}, \binits{D.}}:
\batitle{The network science of collective intelligence}.
\bjtitle{Trends in Cognitive Sciences}
\bvolume{26}(\bissue{11}),
\bfpage{923}--\blpage{941}
(\byear{2022})
\doiurl{10.1016/j.tics.2022.08.009}
\end{barticle}
\endbibitem

\bibitem[\protect\citeauthoryear{Venema-Los et~al.}{2023}]{Venema-Los2023}
\begin{bchapter}
\bauthor{\bsnm{Venema-Los}, \binits{M.}},
\bauthor{\bsnm{Christoff}, \binits{Z.}},
\bauthor{\bsnm{Grossi}, \binits{D.}}:
\bctitle{On the graph theory of majority illusions}.
In: \beditor{\bsnm{Malvone}, \binits{V.}},
\beditor{\bsnm{Murano}, \binits{A.}} (eds.)
\bbtitle{Multi-Agent Systems},
pp. \bfpage{17}--\blpage{31}.
\bpublisher{Springer},
\blocation{Cham}
(\byear{2023}).
\burl{https://doi.org/10.1007/978-3-031-43264-4_2}
\end{bchapter}
\endbibitem

\bibitem[\protect\citeauthoryear{Anholcer et~al.}{2020}]{Anholcer2020}
\begin{botherref}
\oauthor{\bsnm{Anholcer}, \binits{M.}},
\oauthor{\bsnm{Bosek}, \binits{B.}},
\oauthor{\bsnm{Grytczuk}, \binits{J.}}:
Majority choosability of countable graphs
(2020).
\doiurl{10.48550/arXiv.2003.02883}
\end{botherref}
\endbibitem

\bibitem[\protect\citeauthoryear{Bosek et~al.}{2019}]{Bosek2019}
\begin{barticle}
\bauthor{\bsnm{Bosek}, \binits{B.}},
\bauthor{\bsnm{Grytczuk}, \binits{J.}},
\bauthor{\bsnm{Jakóbczak}, \binits{G.}}:
\batitle{Majority coloring game}.
\bjtitle{Discrete Applied Mathematics}
\bvolume{255},
\bfpage{15}--\blpage{20}
(\byear{2019})
\doiurl{10.1016/j.dam.2018.07.020}
\end{barticle}
\endbibitem

\bibitem[\protect\citeauthoryear{Kreutzer et~al.}{2017}]{Kreutzer2017}
\begin{botherref}
\oauthor{\bsnm{Kreutzer}, \binits{S.}},
\oauthor{\bsnm{Oum}, \binits{S.-i.}},
\oauthor{\bsnm{Seymour}, \binits{P.}},
\oauthor{\bsnm{Zypen}, \binits{D.V.}},
\oauthor{\bsnm{Wood}, \binits{D.R.}}:
Majority colourings of digraphs.
The Electronic Journal of Combinatorics
\textbf{24}(2)
(2017)
\doiurl{10.37236/6410}
\end{botherref}
\endbibitem

\bibitem[\protect\citeauthoryear{Zollman}{2014}]{Zollman2014}
\begin{barticle}
\bauthor{\bsnm{Zollman}, \binits{K.}}:
\batitle{Social network structure and the achievement of consensus}.
\bjtitle{Politics, Philosophy \& Economics}
\bvolume{11}(\bissue{1}),
\bfpage{26}--\blpage{44}
(\byear{2014})
\doiurl{10.1177/1470594X11416766}
\end{barticle}
\endbibitem

\bibitem[\protect\citeauthoryear{Jackson}{2010}]{Jackson2010}
\begin{bbook}
\bauthor{\bsnm{Jackson}, \binits{M.O.}}:
\bbtitle{Social and Economic Networks}.
\bpublisher{Princeton University Press},
\blocation{USA}
(\byear{2010})
\end{bbook}
\endbibitem

\bibitem[\protect\citeauthoryear{Zhen and Seligman}{2011}]{Zhen2011}
\begin{barticle}
\bauthor{\bsnm{Zhen}, \binits{L.}},
\bauthor{\bsnm{Seligman}, \binits{J.}}:
\batitle{A logical model of the dynamics of peer pressure}.
\bjtitle{Electronic Notes in Theoretical Computer Science}
\bvolume{278},
\bfpage{275}--\blpage{288}
(\byear{2011})
\doiurl{10.1016/j.entcs.2011.10.021} .
\bcomment{Proceedings of the 7th Workshop on Methods for Modalities (M4M 2011) and the 4th Workshop on Logical Aspects of Multi-Agent Systems (LAMAS 2011)}
\end{barticle}
\endbibitem

\bibitem[\protect\citeauthoryear{Liu et~al.}{2014}]{Seligman2014}
\begin{barticle}
\bauthor{\bsnm{Liu}, \binits{F.}},
\bauthor{\bsnm{Seligman}, \binits{J.}},
\bauthor{\bsnm{Girard}, \binits{P.}}:
\batitle{Logical dynamics of belief change in the community}.
\bjtitle{Synthese}
\bvolume{191}(\bissue{11}),
\bfpage{2403}--\blpage{2431}
(\byear{2014})
\doiurl{10.1007/s11229-014-0432-3}
\end{barticle}
\endbibitem

\bibitem[\protect\citeauthoryear{Lovász}{1966}]{Lovász1966}
\begin{barticle}
\bauthor{\bsnm{Lovász}, \binits{L.}}:
\batitle{On decomposition of graphs}.
\bjtitle{Studia Sci. Math. Hungar.}
\bvolume{1},
\bfpage{237}--\blpage{238}
(\byear{1966})
\end{barticle}
\endbibitem

\bibitem[\protect\citeauthoryear{Aharoni et~al.}{1990}]{Aharoni1990}
\begin{barticle}
\bauthor{\bsnm{Aharoni}, \binits{R.}},
\bauthor{\bsnm{Milner}, \binits{E.C.}},
\bauthor{\bsnm{Prikry}, \binits{K.}}:
\batitle{Unfriendly partitions of a graph}.
\bjtitle{Journal of Combinatorial Theory, Series B}
\bvolume{50}(\bissue{1}),
\bfpage{1}--\blpage{10}
(\byear{1990})
\doiurl{10.1016/0095-8956(90)90092-E}
\end{barticle}
\endbibitem

\bibitem[\protect\citeauthoryear{Bernardi}{1987}]{Bernardi1987}
\begin{barticle}
\bauthor{\bsnm{Bernardi}, \binits{C.}}:
\batitle{On a theorem about vertex colorings of graphs}.
\bjtitle{Discrete Mathematics}
\bvolume{64}(\bissue{1}),
\bfpage{95}--\blpage{96}
(\byear{1987})
\doiurl{10.1016/0012-365X(87)90243-3}
\end{barticle}
\endbibitem

\bibitem[\protect\citeauthoryear{Shelah and Milner}{1990}]{Shelah1990}
\begin{bbook}
\bauthor{\bsnm{Shelah}, \binits{S.}},
\bauthor{\bsnm{Milner}, \binits{E.C.}}:
In: \beditor{\bsnm{Baker}, \binits{A.}},
\beditor{\bsnm{Bollobás}, \binits{B.}},
\beditor{\bsnm{Hajnal}, \binits{A.E.}} (eds.)
\bbtitle{Graphs with no unfriendly partitions},
pp. \bfpage{373}--\blpage{384}.
\bpublisher{Cambridge University Press},
\blocation{Cambridge}
(\byear{1990}).
\doiurl{10.1017/CBO9780511983917.031}
\end{bbook}
\endbibitem

\bibitem[\protect\citeauthoryear{Brown}{1996}]{Brown1996}
\begin{barticle}
\bauthor{\bsnm{Brown}, \binits{J.I.}}:
\batitle{The complexity of generalized graph colorings}.
\bjtitle{Discrete Applied Mathematics}
\bvolume{69}(\bissue{3}),
\bfpage{257}--\blpage{270}
(\byear{1996})
\doiurl{10.1016/0166-218X(96)00096-0}
\end{barticle}
\endbibitem

\bibitem[\protect\citeauthoryear{Becker et~al.}{2017}]{Centola2017}
\begin{barticle}
\bauthor{\bsnm{Becker}, \binits{J.}},
\bauthor{\bsnm{Brackbill}, \binits{D.}},
\bauthor{\bsnm{Centola}, \binits{D.}}:
\batitle{Network dynamics of social influence in the wisdom of crowds}.
\bjtitle{Proceedings of the National Academy of Sciences}
\bvolume{114}(\bissue{26}),
\bfpage{5070}--\blpage{5076}
(\byear{2017})
\doiurl{10.1073/pnas.1615978114}
\end{barticle}
\endbibitem

\bibitem[\protect\citeauthoryear{Becker et~al.}{2019}]{Becker2019}
\begin{barticle}
\bauthor{\bsnm{Becker}, \binits{J.}},
\bauthor{\bsnm{Porter}, \binits{E.}},
\bauthor{\bsnm{Centola}, \binits{D.}}:
\batitle{The wisdom of partisan crowds}.
\bjtitle{Proceedings of the National Academy of Sciences}
\bvolume{116}(\bissue{22}),
\bfpage{10717}--\blpage{10722}
(\byear{2019})
\doiurl{10.1073/pnas.1817195116}
\end{barticle}
\endbibitem

\bibitem[\protect\citeauthoryear{Guilbeault et~al.}{2018}]{Guilbeault2018}
\begin{barticle}
\bauthor{\bsnm{Guilbeault}, \binits{D.}},
\bauthor{\bsnm{Becker}, \binits{J.}},
\bauthor{\bsnm{Centola}, \binits{D.}}:
\batitle{Social learning and partisan bias in the interpretation of climate trends}.
\bjtitle{Proceedings of the National Academy of Sciences}
\bvolume{115}(\bissue{39}),
\bfpage{9714}--\blpage{9719}
(\byear{2018})
\doiurl{10.1073/pnas.1722664115}
\end{barticle}
\endbibitem

\bibitem[\protect\citeauthoryear{Guilbeault and Centola}{2020}]{Guilbeault2020}
\begin{barticle}
\bauthor{\bsnm{Guilbeault}, \binits{D.}},
\bauthor{\bsnm{Centola}, \binits{D.}}:
\batitle{Networked collective intelligence improves dissemination of scientific information regarding smoking risks}.
\bjtitle{PLOS ONE}
\bvolume{15}(\bissue{2}),
\bfpage{1}--\blpage{14}
(\byear{2020})
\doiurl{10.1371/journal.pone.0227813}
\end{barticle}
\endbibitem

\bibitem[\protect\citeauthoryear{Centola et~al.}{2021}]{Centola2021}
\begin{barticle}
\bauthor{\bsnm{Centola}, \binits{D.}},
\bauthor{\bsnm{Guilbeault}, \binits{D.}},
\bauthor{\bsnm{Sarkar}, \binits{U.}},
\bauthor{\bsnm{Khoong}, \binits{E.}},
\bauthor{\bsnm{Zhang}, \binits{J.}}:
\batitle{{The reduction of race and gender bias in clinical treatment recommendations using clinician peer networks in an experimental setting}}.
\bjtitle{Nature Communications}
\bvolume{12}(\bissue{1}),
\bfpage{1}--\blpage{10}
(\byear{2021})
\doiurl{10.1038/s41467-021-26905-}
\end{barticle}
\endbibitem

\bibitem[\protect\citeauthoryear{Holme and Kim}{2002}]{HolmeKim_2002}
\begin{barticle}
\bauthor{\bsnm{Holme}, \binits{P.}},
\bauthor{\bsnm{Kim}, \binits{B.J.}}:
\batitle{Growing scale-free networks with tunable clustering}.
\bjtitle{Phys. Rev. E}
\bvolume{65},
\bfpage{026107}
(\byear{2002})
\doiurl{10.1103/PhysRevE.65.026107}
\end{barticle}
\endbibitem

\bibitem[\protect\citeauthoryear{Albert and Barab\'asi}{2002}]{Barabsi2002}
\begin{barticle}
\bauthor{\bsnm{Albert}, \binits{R.}},
\bauthor{\bsnm{Barab\'asi}, \binits{A.-L.}}:
\batitle{Statistical mechanics of complex networks}.
\bjtitle{Rev. Mod. Phys.}
\bvolume{74},
\bfpage{47}--\blpage{97}
(\byear{2002})
\doiurl{10.1103/RevModPhys.74.47}
\end{barticle}
\endbibitem

\bibitem[\protect\citeauthoryear{Hagberg et~al.}{2008}]{networkX}
\begin{bchapter}
\bauthor{\bsnm{Hagberg}, \binits{A.A.}},
\bauthor{\bsnm{Schult}, \binits{D.A.}},
\bauthor{\bsnm{Swart}, \binits{P.J.}}:
\bctitle{Exploring network structure, dynamics, and function using networkx}.
In: \beditor{\bsnm{Varoquaux}, \binits{G.}},
\beditor{\bsnm{Vaught}, \binits{T.}},
\beditor{\bsnm{Millman}, \binits{J.}} (eds.)
\bbtitle{Proceedings of the 7th Python in Science Conference},
\bconflocation{Pasadena, CA USA},
pp. \bfpage{11}--\blpage{15}
(\byear{2008}).
\burl{https://www.osti.gov/biblio/960616}
\end{bchapter}
\endbibitem

\bibitem[\protect\citeauthoryear{Leskovec and Mcauley}{2012}]{Facebooknetwork}
\begin{bchapter}
\bauthor{\bsnm{Leskovec}, \binits{J.}},
\bauthor{\bsnm{Mcauley}, \binits{J.}}:
\bctitle{Learning to discover social circles in ego networks}.
In: \beditor{\bsnm{Pereira}, \binits{F.}},
\beditor{\bsnm{Burges}, \binits{C.J.}},
\beditor{\bsnm{Bottou}, \binits{L.}},
\beditor{\bsnm{Weinberger}, \binits{K.Q.}} (eds.)
\bbtitle{Advances in Neural Information Processing Systems},
vol. \bseriesno{25}.
\bpublisher{Curran Associates, Inc.},
\blocation{Red Hook, NY, USA}
(\byear{2012}).
\burl{https://proceedings.neurips.cc/paper_files/paper/2012/file/7a614fd06c325499f1680b9896beedeb-Paper.pdf}
\end{bchapter}
\endbibitem

\bibitem[\protect\citeauthoryear{Landau}{1895}]{Landau1895}
\begin{barticle}
\bauthor{\bsnm{Landau}, \binits{E.}}:
\batitle{Zur relativen wertbemessung der turnierresultate}.
\bjtitle{Deutsches Wochenschach}
\bvolume{11},
\bfpage{366}--\blpage{369}
(\byear{1895})
\doiurl{10.1007/978-1-4615-4819-5_23}
\end{barticle}
\endbibitem

\bibitem[\protect\citeauthoryear{Freeman}{1978}]{Freeman1978}
\begin{barticle}
\bauthor{\bsnm{Freeman}, \binits{L.C.}}:
\batitle{Centrality in social networks conceptual clarification}.
\bjtitle{Social Networks}
\bvolume{1},
\bfpage{215}--\blpage{239}
(\byear{1978})
\doiurl{10.1016/0378-8733(78)90021-7}
\end{barticle}
\endbibitem

\bibitem[\protect\citeauthoryear{Newman}{2002}]{Newman2002Deg_assort}
\begin{botherref}
\oauthor{\bsnm{Newman}, \binits{M.}}:
Mixing patterns in networks.
Physical review. E, Statistical, nonlinear, and soft matter physics
\textbf{67}
(2002)
\doiurl{10.1103/PhysRevE.67.026126}
\end{botherref}
\endbibitem

\bibitem[\protect\citeauthoryear{Easley and Kleinberg}{2010}]{Easley2010-NCM}
\begin{bbook}
\bauthor{\bsnm{Easley}, \binits{D.}},
\bauthor{\bsnm{Kleinberg}, \binits{J.}}:
\bbtitle{Networks, Crowds, and Markets: Reasoning About a Highly Connected World}.
\bpublisher{Cambridge University Press},
\blocation{New York, USA}
(\byear{2010}).
\doiurl{10.1017/CBO9780511761942}
\end{bbook}
\endbibitem

\bibitem[\protect\citeauthoryear{Pacuit and Salame}{2004}]{PacuitSalame2004}
\begin{bchapter}
\bauthor{\bsnm{Pacuit}, \binits{E.}},
\bauthor{\bsnm{Salame}, \binits{S.}}:
\bctitle{Majority logic}.
In: \bbtitle{Proceedings of the Ninth International Conference on Principles of Knowledge Representation and Reasoning}.
\bsertitle{KR'04},
pp. \bfpage{598}--\blpage{605}.
\bpublisher{AAAI Press},
\blocation{Whistler, BC, Canada}
(\byear{2004}).
\burl{https://dl.acm.org/doi/10.5555/3029848.3029925}
\end{bchapter}
\endbibitem

\end{thebibliography}
\end{document}